\newtheorem{theorem}{Theorem}[section]
\newtheorem{lemma}[theorem]{Lemma}
\newtheorem{corollary}[theorem]{Corollary}
\newtheorem*{claim*}{Claim}
\newtheorem*{theorem*}{Theorem}
\theoremstyle{definition}
\newtheorem{definition}[theorem]{Definition}
\newtheorem{remark}[theorem]{Remark}
\newtheorem*{remark*}{Remark}
\newcommand{\mnn}[1]{\mathscr{#1}}
\newcommand{\vmnn}[1]{\mathbb{#1}}
\newcommand{\tuple}[1]{{\mathbf{#1}}}
\newcommand{\proj}[2]{\mathrm{proj}^{#1}_{#2}}
\newcommand{\cc}{\mathbb{C}}
\newcommand{\ex}[1]{\mathbb{E}_{#1}}
\newcommand{\gr}{\mathscr{G}}
\newcommand{\sgr}{\mathscr{H}}
\newcommand{\bx}{\mathbf{x}}
\newcommand{\bfa}{\mathbf{A}}
\newcommand{\bfb}{\mathbf{B}}
\newcommand{\ba}{\mathbf{a}}
\newcommand{\bb}{\mathbf{b}}
\newcommand{\bc}{\mathbf{c}}
\newcommand{\bd}{\mathbf{d}}
\newcommand{\be}{\mathbf{e}}
\newcommand{\bg}{\mathbf{g}}
\newcommand{\bh}{\mathbf{h}}
\newcommand{\glc}{\mathrm{GLC}}
\newcommand{\abcs}{(\bfa,\bfb,c,s)}
\newcommand{\pcsp}{\mathrm{PCSP}}
\newcommand{\sig}{\sigma}
\newcommand{\im}{\textnormal{Im}}
\newcommand{\dom}{\textnormal{Dom}}
\newcommand{\Ind}{\mathrm{Ind}}
\newcommand{\GL}{\mathrm{GL}}
\newcommand{\groupid}{1}
\newcommand{\tsp}[1]{{#1}^{t}}
\newcommand{\tr}{\mathrm{tr}}
\newcommand{\triv}{1} 
\newcommand{\A}{\mathcal{A}}
\newcommand{\B}{\mathcal{B}}
\newcommand{\ar}{\mathrm{ar}}
\newcommand{\eq}{\ensuremath{\operatorname{3-LIN}}}
\newcommand{\eqqq}{\ensuremath{\operatorname{2-LIN}}}
\newcommand{\plu}{\textnormal{Plu}}
\newcommand{\FinSet}{\mathrm{FinSet}}
\newcommand{\OmegaI}{\Omega^{\mathrm{in}}}
\newcommand{\OmegaO}{\Omega^{\mathrm{out}}}
\newcommand{\IO}{\kappa}
\begin{document}

\title{Optimal Inapproximability of Promise Equations\\ over Finite Groups\thanks{An extended abstract of this work appeared in Proceedings of ICALP\,(A) 2025~\cite{BLZ25:icalp}. This work was supported by UKRI EP/X024431/1 and done when SB and AL were at the University of Oxford. For the purpose of Open Access, the authors have applied a CC BY public copyright licence to any Author Accepted Manuscript version arising from this submission. All data is provided in full in the results section of this paper.}}

\author{Silvia Butti\\
King's College London\\
\texttt{silvia.butti@kcl.ac.uk}
\and
Alberto Larrauri\\
University of Zaragoza\\
\texttt{llarrauri@unizar.es}
\and
Stanislav \v{Z}ivn\'y\\
University of Oxford\\
\texttt{standa.zivny@cs.ox.ac.uk}
}
\date{\today}

\maketitle

\begin{abstract}

A celebrated result of H{\aa}stad established that, for any constant $\varepsilon>0$,
it is NP-hard to find an assignment satisfying a $(1/|\gr|+\varepsilon)$-fraction
of the constraints of a given $\eq$ instance over an Abelian group $\gr$ even if
one is promised that an assignment satisfying a $(1-\varepsilon)$-fraction of
the constraints exists. Engebretsen, Holmerin, and Russell showed the same
result for $\eq$ instances over any finite (not necessarily Abelian) group.
In other words, for almost-satisfiable instances of $\eq$ the random
assignment achieves an optimal approximation guarantee.
We prove that the random assignment algorithm is still best possible under a stronger
promise that the $\eq$ instance is almost satisfiable over an arbitrarily
more restrictive group.

\end{abstract}

\section{Introduction}

The PCP
theorem~\cite{Arora98:jacm-probabilistic,Arora98:jacm-proof,Dinur07:jacm} is one
of the jewels of computational complexity and theoretical computer science more
broadly~\cite{Arora2009computational}. One of its equivalent statements is as
follows: The maximum number of simultaneously satisfiable constraints of a
Constraint Satisfaction Problem, or CSP for short, is NP-hard to approximate within some
constant factor. That is, while NP-hardness of CSPs means that it is NP-hard to
distinguish instances that are satisfiable from those that are unsatisfiable,
the PCP theorem shows that there is an absolute constant $\alpha<1$ such that it
is NP-hard to distinguish satisfiable CSP instances from those in which strictly
fewer than an $\alpha$-fraction of the constraints can be simultaneously
satisfied. Thus it is NP-hard to find an assignment that satisfies an
$\alpha$-fraction of the constraints even if one is promised that a satisfying
assignment exists. For some CSPs, as we shall see shortly, the optimal value of
$\alpha$ is known.

A classic example of a CSP is 3-SAT, the satisfiability problem of
CNF-formulas in which each clause contains 3 literals. The random assignment gives
a method to find an assignment that satisfies a $7/8$-fraction of the clauses.
H{\aa}stad famously showed that this is optimal in the following sense: For any
constant $\varepsilon>0$, it is NP-hard to find an assignment satisfying a
$(7/8+\varepsilon)$-fraction of the clauses of a 3-SAT instance even if one
is promised that a satisfying assignment exists~\cite{Hastad01:jacm}.

Another classic CSP is $\eq$,
the problem of solving linear equations in 3
variables over the Boolean domain $\{0,1\}$. If all equations can be satisfied
simultaneously then a satisfying assignment can be found in polynomial time by
Gaussian elimination. What can be done if no satisfying assignment exists? As
for 3-SAT, the random assignment gives a method to find a somewhat satisfying
assignment, namely one that satisfies a $1/2$-fraction of the constraints. As it
turns out, this is best possible even for instances of $\eq$ that are almost
satisfiable. In detail, H{\aa}stad showed that for any constant $\varepsilon>0$,
it is NP-hard to find an assignment satisfying a $(1/2+\varepsilon)$-fraction of
the constraints of a $\eq$ instance even if one is promised that an
assignment satisfying a $(1-\varepsilon)$-fraction of the constraints exists.
In fact, H{\aa}stad established optimal inapproximability results for $\eq$ over any finite Abelian group, not
just $\{0,1\}$. This result was later extended by Engebretsen, Holmerin, and
Russell to all finite groups~\cite{EHR04:tcs}. Since these foundational works,
Guruswami and Raghavendra~\cite{Guruswami09:toct} showed NP-hardness of finding
a barely satisfying assignment for a $\eq$ instance over the reals (and thus
also over the integers) even if a nearly satisfying assignment is promised to
exist over the integers. The same result was later established for $\eqqq$ for
large enough cyclic groups~\cite{ODonnell15:toct}. Khot and
Moshkovitz~\cite{Khot13:sicomp} studied inapproximability of $\eq$ over the
reals.

\medskip

In this work, we strengthen the optimal inapproximability results for $\eq$ over
finite groups by establishing NP-hardness of beating the random assignment threshold even if the instance is almost
satisfiable in an arbitrarily more restrictive setting. Formally, this is
captured by fixing (not one but) two groups and a homomorphism between
them, following the framework of promise CSPs~\cite{AGH17,BBKO21}. 
In detail, (decision) promise CSPs~\cite{BBKO21} can be seen as a qualitative
form of approximation: Each constraint comes in two forms, a strong one and a
weak one. The promise is that there is a solution satisfying all constraints in
the strong form while the (potentially easier) goal is to find a solution
satisfying all constraints in the weak form. An example of a strong vs.\ weak
constraint on the same, say Boolean, domain is $1$-in-$3$ vs NAE, where the former is $\{(0,0,1),(0,1,0),(1,0,0)\}$ and the latter is $\{(0,0,1),(0,1,0),(1,0,0),(1,1,0),(1,0,1),(0,1,1)\}$. NAE is weaker as the relation contains more tuples. 
While these two constraint relations capture the well-known NP-hard problems of
1-in-3-SAT and Not-All-Equal-SAT respectively~\cite{Schaefer78:stoc}, finding an
NAE-assignment turns out to be doable in polynomial time under the promise that
a 1-in-3-assignment exists~\cite{BG21}!\footnote{The algorithm involves solving
an instance of $\eq$ over the integers and rounding positive integers to $1$ and
non-positive integers to $0$, demonstrating the importance of $\eq$ among
promise CSPs.}
For constraints on different domains, the notion of strong vs.\ weak constraint is captured by a homomorphism between the (sets of all) constraint relations; in the example above, the homomorphism is just the identity function.
The exact solvability of $\eq$ in the promise setting was resolved in~\cite{LZ25:tocl}.

Recent work of Barto et al.~\cite{Barto24:lics} initiated an algebraic
investigation of (quantitative)
approximation of promise CSPs. In the context of $\eq$, 
here are two simple examples captured by this framework. First, let $\gr$ be a group and $\sgr$ be a subgroup of $\gr$. Given an almost-satisfiable system over the subgroup $\sgr$, maximise the number of satisfied equations over $\gr$. Our results imply that beating the random assignment over $\sgr$ is NP-hard. In the second example, consider a group $\gr$, a normal subgroup $\sgr$, and an almost-satisfiable system over $\gr$. The goal this time is to maximise the number of satisfied equations in the system over the quotient $\gr/\sgr$. Our results show that doing better than the random assignment over $\gr/\sgr$ is NP-hard. More generally, going beyond subgroups and quotients of a given group, we fix two groups $\gr_1$ and $\gr_2$ and a group homomorphism $\varphi$ from a subgroup $\sgr_1$ of $\gr_1$ to a subgroup $\sgr_2$ of $\gr_2$ with the property that $\varphi$ extends to a group homomorphism from $\gr_1$ to $\gr_2$. Given an almost-satisfiable system of equations over $\gr_1$ with constants in $\sgr_1$, the goal is to maximise the number of satisfied equations over $\gr_2$ where the constants are 
interpreted in $\sgr_2$ via $\varphi$. Our main result establishes that doing
better than the random assignment over $\sgr_2$ is NP-hard.

\begin{theorem*}[Main theorem, informal statement]
Let $\gr_1$ and $\gr_2$ be finite groups, $\sgr_1$ a subgroup of $\gr_1$, and $\varphi$ a group homomorphism from $\sgr_1$ to $\sgr_2=\im(\varphi)$, which is a subgroup of $\gr_2$.
For arbitrary $\epsilon,\delta>0$, given a system of linear equations over $\gr_1$ with constants in $\sgr_1$ that admits a solution satisfying at least a $(1-\epsilon)$-fraction of the constraints, it is NP-hard to find a solution over $\gr_2$ with constants in $\sgr_2$ that satisfies at least a  $(1/|\sgr_2|+\delta)$-fraction of the constraints, where the constants are interpreted via $\varphi$.
\end{theorem*}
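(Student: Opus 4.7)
The plan is to adapt the H{\aa}stad--Engebretsen--Holmerin--Russell framework for $\eq$ inapproximability to the promise setting. I would reduce from a hard instance of (Smooth) Label Cover, inapproximable to within any constant by Raz's parallel repetition theorem, via a Long Code construction over $\sgr_1$. For each vertex $v$ of the Label Cover instance with label alphabet $[n]$, I introduce one $\eq$ variable per element of $\sgr_1^n$, i.e., per position of the Long Code, whose honest assignment is $f \mapsto f(\ell_v) \in \sgr_1$ for the intended label $\ell_v$. For each Label Cover edge, I add 3-LIN tests of the standard H{\aa}stad noisy form (and cross-vertex variants using the projection map), with all constants lying in $\sgr_1$; standard folding tricks keep the honest assignments balanced and remain compatible with both $\gr_1$- and $\gr_2$-valued assignments.

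For completeness, the honest $\sgr_1$-valued Long Code assignments arising from a Label Cover labelling satisfy a $(1-\varepsilon)$-fraction of the tests, with the error controlled by the noise parameter; this direction follows the classical argument essentially verbatim because $\sgr_1$-valued assignments are automatically $\gr_1$-valued. For soundness, assuming an assignment $F_v : \sgr_1^n \to \gr_2$ (one per vertex) satisfies a $(1/|\sgr_2| + \delta)$-fraction of tests when the constants in $\sgr_1$ are interpreted in $\sgr_2$ via $\varphi$, I would perform Fourier analysis of the compositions $\chi \circ F_v$ for nontrivial characters $\chi$ of $\gr_2$, expanded in the characters of $\sgr_1^n$. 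The key observation is that the trivial character together with characters of $\gr_2$ that vanish on $\sgr_2$ account for exactly the $1/|\sgr_2|$ baseline; any excess over this baseline forces the Fourier weight of the remaining characters to concentrate on low-complexity ``dictator-like'' coefficients, from which a standard decoding recovers a Label Cover labelling of value contradicting Raz.

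The main obstacle will be the Fourier-analytic soundness step. Unlike the classical EHR setting, where the Long Code's domain and codomain coincide, here the domain is $\sgr_1^n$ while the codomain of the candidate solution is $\gr_2$, with the bridge provided by the extension $\hat{\varphi} : \gr_1 \to \gr_2$. The noise distribution on $\sgr_1^n$ must be chosen so that the Fourier decomposition routes properly through $\hat{\varphi}$; moreover, for non-Abelian groups the representation-theoretic machinery of EHR (notably induced representations) must be adapted to this mismatched domain/codomain setting. Tuning the noise and Label Cover parameters so that the achieved threshold is exactly $1/|\sgr_2|$, rather than some larger quantity depending on $\gr_1$ or $\gr_2$, will be the most delicate part of the argument.
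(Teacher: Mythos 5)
Your overall plan follows the paper's blueprint: reduce from Gap Label Cover via a noisy three-query Long Code test, fold, and run the Engebretsen--Holmerin--Russell Fourier machinery over non-Abelian groups. You also correctly identify where the difficulty lies -- the promise setting makes the codomain of the candidate solution ($\gr_2$) differ from the group supplying the constants ($\sgr_1 = \dom(\varphi)$), and the $1/|\sgr_2|$ success baseline has to be carved out of the character sum in a way that depends on $\sgr_2 = \im(\varphi)$. One genuine deviation: you index the Long Code by $\sgr_1^n$, whereas the paper indexes it by $\gr_1^D$, with only the right-hand constants $g_{\tuple{a}}$ (obtained from a choice of $\sgr_1$-coset representatives in $\gr_1^D$) forced into $\sgr_1$. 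Your smaller domain is plausibly fine -- the dictator assignment still lands in $\sgr_1 \subseteq \gr_1$, so completeness goes through, and the Fourier analysis would simply be carried out over $\widehat{\sgr_1^D}$ rather than $\widehat{\gr_1^D}$ -- but it is a real change from the paper's construction and you would need to check it end-to-end.

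The substantive gap is in the claim that ``standard folding tricks'' suffice. Folding over the homomorphism $\varphi:\sgr_1\to\sgr_2$ is strictly \emph{weaker} than the classical group folding used in the non-promise setting, and in particular it does \emph{not} make the trivial Fourier coefficient $\widehat{\omega\circ F}(1)$ vanish. This is exactly where the paper has to do new work. It introduces a ``penalized character'' selection: rather than picking $\omega\in\widehat{\gr_2}$ so that the averaged $\chi_\omega$-correlation exceeds $\dim_\omega\delta$, one picks $\omega$ so that it exceeds $\dim_\omega\delta + \eta_\omega$, where $\eta_\omega$ is the multiplicity of the trivial representation in $\omega|_{\sgr_2}$. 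The existence of such $\omega$ rests on the identity $\sum_{\omega\in\widehat{\gr_2}}\dim_\omega\eta_\omega = |\gr_2|/|\sgr_2|$, which is Frobenius Reciprocity applied to the representation of $\gr_2$ induced from the trivial representation of $\sgr_2$; the $\eta_\omega$ slack is then shown to exactly absorb the non-vanishing trivial term, by diagonalising the averaged projector $\frac{1}{|\sgr_2|}\sum_{h\in\sgr_2}\omega(h)$ and observing that it has exactly $\eta_\omega$ eigenvalues equal to $1$. Your remark that ``the trivial character together with characters of $\gr_2$ that vanish on $\sgr_2$ account for the $1/|\sgr_2|$ baseline'' is the Abelian shadow of this (there $\eta_\gamma\in\{0,1\}$ and equals $1$ precisely when $\gamma$ is trivial on $\sgr_2$), but in the non-Abelian case the statement is about multiplicities and dimensions, and without the Frobenius identity and the explicit $\eta_\omega$-absorption lemma the soundness argument does not close.
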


For a precise statement of the main result, cf.~\Cref{th:main}. Thus we give an optimal inapproximability result for a natural and fundamental fragment of promise CSPs, systems of linear equations.

Other fragments of promise CSPs whose quantitative approximation has been studied includes almost approximate graph colouring~\cite{Engebretsen08:rsa,Dinur10:focs,Khot12:focs,Hecht23:approx}, approximate colouring~\cite{nz23:arxiv}, and approximate graph homomorphism~\cite{nz25:icalp}. Recent work of Brakensiek, Guruswami, and Sandeep~\cite{BGS23:stoc} studied robust approximation of promise CSPs; in particular, they observed that Raghavendra's celebrated theorem on approximate CSPs~\cite{Raghavendra08:everycsp} applies to promise CSPs, which in combination with the work of Brown-Cohen and Raghavendra~\cite{BCR15} gives an alternative framework for studying quantitative approximation of promise CSPs.

\medskip

The general approach for establishing inapproximability of systems of equations,
going back to~\cite{Hastad01:jacm,EHR04:tcs}, can be seen as a reduction from another CSP that is hard to approximate. In this reduction, one initially
transforms an instance of the original CSP to a system of equations of the
form $xyz=1$. To guarantee the soundness of this reduction, one needs to show that any
assignment that beats the random assignment in the target system of
equations can be transformed into a ``good'' assignment of the original
instance. To do this it is necessary to rule out vacuous assignments (e.g., the
assignment that sends all variables to the group identity) through a procedure
called folding, which introduces 
constants in the system of equations.
Afterwards, the soundness bounds are shown by performing Fourier analysis on
certain functions derived from the system.
Our proof follows this general approach. The main obstacle to
applying the techniques of~\cite{EHR04:tcs} directly is the fact that in our setting 
the constants lie in a proper subgroup of the
ambient group, which precludes us from applying classical folding over groups. Instead, we use a weaker notion of folding. This, however, implies that in the soundness analysis we have to take care of functions whose Fourier
expansion has non-zero value for the trivial term. To tackle this
issue, we consider the behaviour of irreducible group representations when they are
restricted to the subgroup of constants
via Frobenius Reciprocity.
We note that, as in the non-promise setting, the proof in the Abelian case is
much simpler, particularly using that the set of characters of an Abelian group
corresponds to its Pontryagin dual. Thus, much of the complicated machinery in
the paper is to obtain the main result for all, and not necessairly Abelian, groups.

As our second main contribution, in~\Cref{sec:algebraic-approach} we relate our results to a recent
theory of Barto et al.~\cite{Barto24:lics}, who developed a systematic approach
to study (in)approximability of promise CSPs, which includes approximability of
promise linear equations, from the viewpoint of universal algebra. In
particular, we show that the proof of our main result (\Cref{th:main}) implies that the collection
of symmetries\footnote{called the \emph{valued minion of plurimorphisms}
in~\cite{Barto24:lics}.} of 
Promise $\eq$ 
can be mapped
homomorphically to the collection of symmetries of Gap Label Cover, a condition
that, based on the algebraic theory from~\cite{Barto24:lics}, is known to guarantee NP-hardness of the former problem.
Thus, our result fits within the algebraic theory of Barto et
al.~\cite{Barto24:lics}.

\paragraph{Related work}
First, extending the work from~\cite{Hastad01:jacm}, Austrin, Brown-Cohen, and
H{\aa}stad established optimal inapproximability of $\eq$ over Abelian groups
with a universal factor graph~\cite{Austrin23:talg}. Similarly, Bhangale and
Stankovic established optimal inapproximability of $\eq$ over non-Abelian groups
with a universal factor graph~\cite{Bhangale23:algorithmica}.
Second, unlike over Abelian groups, for $\eq$ over non-Abelian groups finding a
satisfying assignment is NP-hard even under the promise that one exists. There
is a folklore randomised algorithm for satisfiable $\eq$ instances over
non-Abelian groups (whose approximation factor depends on the group $\gr$ and is
$1/|\gr|$ if $\gr$ is a so-called perfect group but can beat the naive random assignment for non-perfect groups).
Bhangale and Khot showed that this algorithm is optimal~\cite{Bhangale21:stoc},
spawning a number of follow-up works going beyond $\eq$, e.g.~\cite{Bhangale22:stoc,Bhangale23:stocII,Bhangale23:stocIII}.
Third, going beyond $\eq$, building on a long line of work Chan
established optimal (up to a constant factor) NP-hardness for
CSPs~\cite{Chan16:jacm}. There are other works on various inapproximability notions for
CSPs, e.g.,~\cite{Austrin13:toct,Khot14:stoc,Khot14:icalp}.
Finally, we mention that Khot's influential Unique Games
Conjecture~\cite{Khot02stoc} postulates, in one of its equivalent forms,
NP-hardness of finding a barely satisfying solution to a $\eqqq$ instance given
that an almost-satisfying assignment exists (for a large enough domain size).

\paragraph{Paper structure}
The structure of the paper is as follows. 
In~\Cref{sec:background} we set the notation and present the necessary technical background on Fourier analysis over non-Abelian groups.
\Cref{sec:results} presents the main results with discussion and also provides a sketch of the main proof: In~\Cref{sec:reduction} we present the reduction from Gap Label
Cover to Promise $\eq$,
and in
\Cref{sec:outline} we give an overview of the techniques used in the analysis of
this reduction and of the main challenges that arise in extending previous work
to the promise setting. 
\Cref{sec:proof-main} then gives all technical details of the proof of the main result, with the completeness analysis in~\Cref{sec:completeness} and the soundness analysis in~\Cref{sec:soundness}.
Finally, in~\Cref{sec:algebraic-approach} we show that our results fit within the theory of Barto et al.~\cite{Barto24:lics}.

\section{Background}\label{sec:background}

This section gives the necessary background needed in the rest of the paper. We state several results related to Fourier Analysis over non-Abelian groups
and direct products. All proofs can be found in~\Cref{ap:fourier-proofs}. \par

We use $\llbracket \cdot \rrbracket$ to denote the Iverson bracket; i.e.,
$\llbracket P\rrbracket$ is 1 if $P$ is true and $0$ otherwise.
As usual, $[n]$ denotes the set $\{1,2,\ldots,n\}$. \par

\paragraph{Groups and group homomorphisms} A \emph{group} consist of a finite set $\gr$ equipped with a binary associative operation $\cdot\,$, a distinct element $\groupid \in \gr$ which acts as the identity of $\cdot\,$, and where each element $g \in\gr$ has an inverse under $\cdot\,$, denoted by $g^{-1}$. When there is no ambiguity, we denote the group product $g\cdot h$ simply by $gh$.
A group $\gr$ is 
\emph{Abelian} if $gh=hg$ for every $g,h\in\gr$.
A subset $\sgr \subseteq \gr$ of a group $\gr$ is called a \emph{subgroup} of $\gr$, denoted by $\sgr \leq \gr$, if $\sgr$ equipped with the group operation of $\gr$ forms a group. 
Given a group $\gr$, a subgroup $\sgr$ of $\gr$, and an element $g \in\gr$, the \emph{right coset of $\sgr$ in $\gr$ by $g$} is the set $\sgr g :=\{h g \mid h \in \sgr\}$.
The set of right cosets of $\sgr$ in $\gr$ is denoted $\sgr \backslash \gr$.
Let $N$ be a finite set. The $N^{\text{th}}$ \emph{direct power} of $\gr$, denoted by $\gr^N$, is the group whose elements are $N$-tuples $\bg \in \gr^N$ of elements from $\gr$, and where the group operation is taken component-wise, i.e., $\bg \cdot \bh (n)= \bg(n) \cdot \bh(n)$ for each $n \in N$. If $\sgr \leq \gr$, we define $(h \cdot \bg)(n) = h \cdot \bg(n)$ for each $h \in \sgr$ and $\bg \in \gr^N$. With this notation, the notion of coset extends to include the right cosets of $\sgr$ in $\gr^N$ in a natural way. 

A \emph{homomorphism} from a group $\gr_1$ to a group $\gr_2$ is a map $\varphi:\gr_1 \to \gr_2$ which satisfies that $\varphi(g \cdot h)=\varphi(g)\cdot \varphi(h)$ for every $g,h\in\gr_1$.
The domain and image of $\varphi$ are denoted $\dom(\varphi)$ and $\im(\varphi)$ respectively.
Let $N$ be a finite set, $\gr_i$ groups, $i \in [2]$, $\sgr_i \leq \gr_i$, and $\varphi:\sgr_1\to\sgr_2$ be a homomorphism. We say that a function $f:\gr_1^N \to \gr_2$ is \emph{folded over $\varphi$} if
    $f(h\bg)=\varphi(h)f(\bg)$ for all $h\in \sgr_1$ and $\bg\in \gr_1^N$. Given an arbitrary function $f:\gr_1^N \to \gr_2$ and a homomorphism between subgroups, 
    there is a natural way to construct a folded function that resembles $f$. Fix an arbitrary representative, from each right coset of $\sgr_1$ in $\gr_1^N$. For each $\bg\in \gr^N$, denote by $\bg^\dagger$ the representative of $\sgr_1 \bg$, and let $h_\bg \in \sgr_1$ be such that 
    $\bg^\dagger = h_\bg \bg$. Then the \emph{folding of $f$ over $\varphi$} (with respect to this choice of representatives)
    is the map $f_\varphi: \gr_1^N \to \gr_2$
    given by $f_\varphi(\bg) = \varphi(h_\bg^{-1})f(\bg^\dagger)$.

\paragraph{Label Cover} Fix a pair of disjoint finite sets $D$, $E$, called the \emph{label sets}, and a subset $\Pi \subseteq E^D$ of \emph{labeling functions}. An instance of the \emph{Label Cover} problem is a bipartite graph with vertex set $U \sqcup V$ and a labeling function $\pi_{uv} \in \Pi$ for each edge $\{u,v\}$ in the graph.  The task is to decide whether there is a pair of assignments $h_D:U\to D$, $h_E:V\to E$ that satisfies all the constraints, i.e., such that $\pi_{uv}(h_D(u))=h_E(v)$ for each edge $\{u,v\}$.

Given additionally a pair of rational constants $0 < s \leq c \leq 1$, the gap version of this problem, known as the \emph{Gap Label Cover} problem with completeness $c$ and soundness $s$ and denoted $\glc_{D,E}(c,s)$, is the problem of distinguishing instances where a $c$-fraction of the constraints can be satisfied from instances where not even an $s$-fraction of the constraints can be satisfied. 

The hardness of Gap Label Cover with perfect completeness stated below is a consequence of the PCP theorem~\cite{Arora98:jacm-proof,Arora98:jacm-probabilistic} and the Parallel Repetition Theorem~\cite{Raz98}.

\begin{theorem}
\label{th:gap-label-cover}
    For every $\alpha >0$ there exist finite sets $D$, $E$ such that $\glc_{D,E}(1,\alpha)$ is NP-hard.
\end{theorem}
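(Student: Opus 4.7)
The plan is to derive this standard result from two well-known ingredients: the PCP theorem in its hardness-of-approximation form, and Raz's Parallel Repetition Theorem. The argument proceeds in two stages: first obtain a Gap Label Cover instance with perfect completeness and \emph{some} constant soundness $s_0 < 1$, and then amplify the gap by parallel repetition.

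For the first stage, I would invoke the PCP theorem in the form that states it is NP-hard to distinguish satisfiable $3$-SAT instances from those where at most a $(1-\epsilon_0)$-fraction of clauses can be satisfied simultaneously, for some fixed $\epsilon_0 > 0$. An instance of gap-$3$-SAT is encoded as a Label Cover instance in the textbook way: put variables on the $U$ side with label set $D_0 = \{0,1\}$, put clauses on the $V$ side with label set $E_0$ equal to the set of all satisfying assignments to a $3$-clause, place an edge $\{u,v\}$ whenever variable $u$ occurs in clause $v$, and let $\pi_{uv}$ project a clause assignment to its value on $u$. Completeness $1$ is immediate, and a routine averaging argument converts the clause-satisfaction gap into a constraint-satisfaction gap on the Label Cover side, yielding some fixed finite $D_0, E_0$ and a constant $s_0 < 1$ for which $\glc_{D_0,E_0}(1, s_0)$ is NP-hard.

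For the second stage, I would amplify the soundness by applying the Parallel Repetition Theorem. The $k$-fold parallel repetition of the Label Cover instance has vertex sets $U^k$ and $V^k$, label sets $D_0^k$ and $E_0^k$, an edge for each $k$-tuple of original edges, and the labeling function acting coordinatewise through the original $\pi_{uv}$'s. Completeness $1$ is preserved trivially: a labeling $(h_D, h_E)$ satisfying all original constraints yields a coordinatewise labeling that satisfies every $k$-fold constraint. By Raz's theorem, the soundness drops from $s_0$ to $s_0^{\Omega(k / \log |E_0|)}$. Given any target $\alpha > 0$, choose $k$ large enough (depending on $s_0$, $|E_0|$, and $\alpha$, all of which are fixed constants) so that $s_0^{\Omega(k / \log |E_0|)} \le \alpha$. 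Setting $D = D_0^k$ and $E = E_0^k$ then gives NP-hardness of $\glc_{D,E}(1,\alpha)$, as required.

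The only subtlety is the standard one behind parallel repetition: naively raising soundness to the $k$-th power fails because the provers can correlate answers across coordinates, which is precisely why Raz's theorem is needed rather than a trivial product bound. Everything else in the argument is bookkeeping, and since $D_0, E_0, s_0$ are absolute constants coming from the PCP theorem, a finite $k = k(\alpha)$ always suffices.
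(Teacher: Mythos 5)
Your proposal is correct and follows exactly the route the paper intends: the paper does not spell out a proof but simply attributes the result to the PCP theorem combined with Raz's Parallel Repetition Theorem, which is precisely the two-stage argument (encode gap-$3$-SAT as Label Cover with constant soundness, then amplify by parallel repetition) that you give.
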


\paragraph{Linear Algebra}
We consider matrices with generalised index sets for convenience. 
Given two finite sets $N$ and $M$, an $N\times M$ complex matrix $A$ consists of a family of complex numbers $A_{i,j}$ indexed by $i\in N$, $j\in M$. Matrix operations are defined in the usual way. 
\par
Given a matrix $A$, we write $\tsp{A}$
to denote its transpose, $\overline{A}$ for its complex conjugate, and $A^*$ for its Hermitian (i.e., its conjugate transpose).
A $N\times M$ matrix is called square if $N=M$. A square matrix $A$ is called \textit{Hermitian} if $A=A^*$, and is called \textit{unitary} if its Hermitian is its inverse. The \textit{trace} of a matrix $A$, denoted $\tr(A)$, is the sum of its diagonal entries.  We denote by $I_N$ the $N\times N$ identity matrix. 
\par

Let $A$ be an $N_1\times N_2$ complex matrix and $B$ be an $M_1\times M_2$ complex matrix. The \emph{tensor product} $A\otimes B$ is a $(N_1\times M_1) \times (N_2\times M_2)$ matrix, where
$(A\otimes B)_{(i,s)(j,t)}=A_{i,j}B_{s,t}$
for each $i\in N_1,j\in N_2, s\in M_1, t\in M_2$. 
We make use of the following linear algebra facts.

\begin{lemma}
    Let $A,B$ be $N\times N$ complex matrices and $C,D$ be $M\times M$ complex matrices. Then the following hold.
    \begin{enumerate}
        \item       $\tr(A)\tr(C)=\tr(A\otimes C)$.
        \item 
        $(AB) \otimes (CD) = 
        (A\otimes C) (B \otimes D)$.
    \end{enumerate}
\end{lemma}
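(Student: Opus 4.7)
The plan is to verify both identities by direct index computation from the definition of the tensor product, with no deeper algebraic input needed beyond careful bookkeeping of the product index set $N \times M$ and the commutativity and associativity of finite sums of complex numbers.

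For part (1), I would observe that the diagonal entries of $A \otimes C$ are precisely those indexed by pairs $((i,s),(i,s))$ with $i \in N$, $s \in M$, and that by definition $(A \otimes C)_{(i,s)(i,s)} = A_{i,i} C_{s,s}$. Summing over the diagonal and factoring the resulting double sum yields
\[
\tr(A \otimes C) = \sum_{i \in N} \sum_{s \in M} A_{i,i} C_{s,s} = \Bigl(\sum_{i \in N} A_{i,i}\Bigr)\Bigl(\sum_{s \in M} C_{s,s}\Bigr) = \tr(A)\tr(C),
\]
which is the claim.

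For part (2), I would compute the $((i,s),(j,t))$ entry of each side and check they coincide. On the left, applying the tensor product definition first and then the matrix product formula inside each factor gives
\[
\bigl((AB)\otimes(CD)\bigr)_{(i,s)(j,t)} = (AB)_{i,j}\,(CD)_{s,t} = \sum_{k \in N}\sum_{\ell \in M} A_{i,k} B_{k,j} C_{s,\ell} D_{\ell,t}.
\]
On the right, applying the matrix product formula first, with summation over pairs $(k,\ell) \in N \times M$ serving as intermediate indices, and then expanding each tensor product factor gives
\[
\bigl((A \otimes C)(B \otimes D)\bigr)_{(i,s)(j,t)} = \sum_{k \in N}\sum_{\ell \in M} A_{i,k} C_{s,\ell} B_{k,j} D_{\ell,t},
\]
and the two agree after permuting the commuting complex entries. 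Before starting, one should note the dimension bookkeeping: $AB$, $CD$, and $(A \otimes C)(B \otimes D)$ are all well-defined precisely because $A,B$ are $N \times N$ and $C,D$ are $M \times M$, making both sides $(N \times M) \times (N \times M)$ matrices.

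The main obstacle is essentially cosmetic rather than mathematical: both identities become immediate once the tensor product is unfolded according to the definition, and the only care required is to keep the two-coordinate indices $(i,s),(j,t),(k,\ell)$ consistently distinguished. The lemma is stated here purely so that it can be invoked verbatim in the subsequent Fourier-analytic computations over direct products of groups.
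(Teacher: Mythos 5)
Your proof is correct. The paper does not actually supply a proof of this lemma: it is stated in the ``Linear Algebra'' paragraph of Section~2 as a standard fact (``We make use of the following linear algebra facts'') and no argument appears either there or in Appendix~\ref{ap:fourier-proofs}, which only proves the restated results from the Fourier-analysis subsections. Your direct index verification is the canonical way to establish both identities and is entirely consistent with how the paper uses them, so there is no discrepancy to report.
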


The group of invertible $N
\times N$ complex matrices (equipped with matrix multiplication and matrix inversion) is denoted by $\GL(N)$, and the set of $N\times M$
complex matrices is denoted by $\cc^{N\times M}$.

\subsection{Fourier Analysis over non-Abelian Groups}
Most of the results and definitions in this subsection can be found in~\cite{Terras_1999}. \par
A \emph{representation} of a group $\gr$  is a group homomorphism 
$\gamma: \gr \to \GL(N_\gamma)$ for some finite set $N_\gamma$. We call $|N_\gamma|$ the \emph{dimension} of $\gamma$ and write $\dim_\gamma=|N_\gamma|$.
Given a pair of indices $i,j\in N_\gamma^2$, $\gamma_{i,j}$ denotes the $(i,j)$-th entry of $\gamma$. 
The \emph{character} of a representation $\gamma$, denoted by $\chi_\gamma$, is its trace. The \emph{trivial representation}, denoted $\triv$, maps all group elements to the number one (i.e., the one-dimensional identity matrix). 
\par

A representation $\gamma$ is said to be \emph{unitary} if its image contains only unitary matrices. That is, if 
\[
   \gamma(g)\gamma^*(g) = I_{N_\gamma} \text{ for all $g\in \gr$.}
\]

We say that a function $B: \gr \to \mathbb{C}^{N \times N}$ is \emph{skew-symmetric} (or \emph{skew-Hermitian}) if $B(g^{-1})=B(g)^*$ for all $g\in\gr$. Note that, by definition, a representation is unitary if and only if it is skew-symmetric.

Let $\gr$ be a group and let $\alpha$ and $\beta$ be representations of $\gr$.
We say that $\alpha$ and $\beta$ are equivalent, written $\alpha \simeq \beta$, if
there is an invertible $N_\beta \times N_\alpha$ complex matrix $T$ such that
$\alpha(g)= T^{-1} \beta(g) T$ for all $g\in \gr$. In particular, $\dim_\alpha=\dim_\beta$.  The representation $\beta$ is said to be a \emph{sub-representation} of $\alpha$ if there is
an invertible matrix  $T\in N_\alpha\times M$, where the index set $M$ is a disjoint union of the form $N_\beta\sqcup N$, such that 
$T^{-1} \alpha(g) T$ admits the following block form for all $g\in \gr$:
\[
\begin{pmatrix}
\beta(g) & * \\
    0 & * 
\end{pmatrix}.
\]
The representation $\alpha$ is said to be \emph{irreducible} if all its sub-representations are equivalent to itself. \par

\begin{definition}
    A complete set $\widehat{\gr}$ of inequivalent irreducible unitary representations of a group $\gr$ is a set of irreducible unitary representations of $\gr$ that are pairwise inequivalent and satisfy that any irreducible representation of $\gr$ is equivalent to a representation in $\widehat{\gr}$.    
\end{definition}

From now on, given a group $\gr$, we use $\widehat{\gr}$ to denote some arbitrary and fixed complete set of inequivalent irreducible unitary representations of $\gr$. Note that such a set exists by, e.g., \cite[Proposition 1]{Terras_1999}.\par

We define the space $\mathcal{L}^2(\gr)$ as the vector space of complex-valued
functions over $\gr$, equipped with the following inner product:
\[
\langle F, H \rangle = 
\frac{1}{|\gr|} \sum_{g\in \gr}
F(g) \overline{H(g)}
\]

Let $\gr$ be a group, and let $F:\gr \rightarrow \cc$ be a complex-valued function. Given $\gamma\in \widehat{\gr}$ and $i,j\in N_\gamma$, the \emph{Fourier coefficient} $\widehat{F}(\gamma_{i,j})$ is defined as the product
$\langle F, \gamma_{i,j} \rangle$. 
The matrix entries of the representations $\gamma\in \widehat{\gr}$ form an orthogonal basis of $\mathcal{L}^2(\gr)$, and allow us to perform Fourier analysis on this space, as stated in the following theorem~\cite[Theorem 2]{Terras_1999}.

\begin{theorem}
\label{th:orthogonal_representations}
    Let $\gr$ be a finite group. Then the set 
    \[
    \{
    \gamma_{i,j} \mid \gamma\in \widehat{\gr}, \ i,j \in N_\gamma
    \} \]
    is an orthogonal basis of $\mathcal{L}^2(\gr)$, and  $\dim_\gamma \lVert \gamma_{i,j} \rVert^2=1$ for all $\gamma_{i,j}$. Moreover, the following hold:
    \begin{enumerate}
    \item \emph{Plancherel's Theorem:} Given $F\in \mathcal{L}^2(\gr)$,
    \[
    \lVert F \rVert^2 = 
    \sum_{\gamma\in \widehat{\gr}, i,j\in 
    N_\gamma} \dim_\gamma \vert\widehat{F}(\gamma_{i,j}) \vert^2.
    \]
    \item \emph{Fourier Inversion:} Given $F\in \mathcal{L}^2(\gr)$,
    \[
    F(g) = \sum_{\gamma\in \widehat{\gr}, i,j\in 
    N_\gamma} \dim_\gamma \widehat{F}(\gamma_{i,j}) \gamma_{i,j}(g) \qquad  \text{ for all }g\in \gr.
    \]
\end{enumerate}
\end{theorem}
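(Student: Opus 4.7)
The plan is to derive the theorem from the Schur orthogonality relations, with completeness of the matrix coefficients coming from a dimension count on the left regular representation. The first step is to establish Schur's lemma in both of its usual forms: if $\alpha, \beta$ are irreducible representations of $\gr$ and $T$ satisfies $\beta(g)T = T\alpha(g)$ for all $g \in \gr$, then either $T = 0$ or $T$ is invertible (hence $\alpha \simeq \beta$); and when $\alpha = \beta$ is irreducible over $\cc$, any such intertwiner is a scalar multiple of the identity. The first part follows because $\ker T$ and $\operatorname{im} T$ are invariant subspaces, forced to be trivial by irreducibility; the second follows because any eigenspace of $T$ is also invariant, so it must coincide with the whole space.

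Given $\alpha, \beta \in \widehat{\gr}$ and any matrix $M$ of shape $N_\beta \times N_\alpha$, I would then consider the average
$$T_M := \frac{1}{|\gr|}\sum_{g \in \gr} \beta(g)\, M\, \alpha(g^{-1}),$$
which satisfies $\beta(h)T_M = T_M \alpha(h)$ by a change of summation variable. Schur's lemma forces $T_M = 0$ when $\alpha \neq \beta$ and $T_M = (\tr M / \dim_\alpha) I_{N_\alpha}$ when $\alpha = \beta$ (the scalar is identified by taking traces). Choosing $M = E_{j,k}$ to be the elementary matrix with a single $1$ in position $(j,k)$, using unitarity of $\alpha$ so that $\alpha_{k,l}(g^{-1}) = \overline{\alpha_{l,k}(g)}$, and reading off the $(i,l)$-entry of $T_M$, I obtain
$$\langle \beta_{i,j},\, \alpha_{l,k}\rangle = \frac{1}{\dim_\alpha}\,\llbracket \alpha=\beta\rrbracket\,\llbracket i=l\rrbracket\,\llbracket j=k\rrbracket,$$
which is precisely the orthogonality of the family $\{\gamma_{i,j}\}$ together with $\dim_\gamma \lVert \gamma_{i,j}\rVert^2 = 1$.

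For completeness of this family in $\mathcal{L}^2(\gr)$, I would analyse the left regular representation $\rho: \gr \to \GL(\mathcal{L}^2(\gr))$ defined by $(\rho(g)F)(x) = F(g^{-1}x)$. Because $\rho$ is unitary with respect to the given inner product and $\gr$ is finite, Maschke's theorem decomposes $\rho$ fully into irreducibles. The multiplicity of each $\gamma \in \widehat{\gr}$ in $\rho$ equals $\langle \chi_\rho, \chi_\gamma\rangle$, and since $\chi_\rho(g) = |\gr|\,\llbracket g = \groupid\rrbracket$, this multiplicity is $\dim_\gamma$. This yields the dimension identity $\sum_{\gamma \in \widehat{\gr}} \dim_\gamma^2 = |\gr| = \dim \mathcal{L}^2(\gr)$, so the orthogonal family $\{\gamma_{i,j}\}$, which contains exactly $\sum_\gamma \dim_\gamma^2$ nonzero members, must be a basis.

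Once the orthogonal basis is in hand, the remaining two statements follow by routine linear algebra: expanding $F = \sum_{\gamma,i,j} c_{\gamma,i,j}\,\gamma_{i,j}$ and pairing with $\gamma_{i,j}$ via the computed squared norm $1/\dim_\gamma$ gives $c_{\gamma,i,j} = \dim_\gamma\,\widehat{F}(\gamma_{i,j})$, which is Fourier inversion; Parseval's identity for orthogonal bases then yields Plancherel's theorem. I expect the main conceptual obstacle to be confined to Schur's lemma combined with the averaging trick, since that is where the specifically representation-theoretic input enters; the regular-representation count and the passage to Plancherel/inversion are essentially bookkeeping.
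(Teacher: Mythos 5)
Your proposal is a correct and complete proof. Note that the paper does not give its own argument here: it cites this result directly as Theorem~2 of the reference (Terras, 1999), and your line of reasoning — Schur's lemma, the averaging operator $T_M$ to obtain the Schur orthogonality relations $\langle \beta_{i,j}, \alpha_{l,k}\rangle = \dim_\alpha^{-1}\llbracket\alpha=\beta\rrbracket\llbracket i=l\rrbracket\llbracket j=k\rrbracket$, the regular-representation dimension count $\sum_\gamma \dim_\gamma^2 = |\gr|$ for completeness, and then orthogonal-basis expansion for Plancherel and inversion — is precisely the standard textbook proof that the cited reference uses, so there is nothing to contrast.

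One small remark: since the theorem is stated for the fixed complete set $\widehat{\gr}$ of \emph{unitary} irreducibles, your use of $\alpha_{k,l}(g^{-1}) = \overline{\alpha_{l,k}(g)}$ is justified, and the passage from ``orthogonal family of $|\gr|$ nonzero vectors in a $|\gr|$-dimensional space'' to ``orthogonal basis'' is immediate. Your proof is sound as written.
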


In particular, for each non-trivial irreducible representation $\gamma$, and each pair of indices $i,j$, the map $\gamma_{i,j}$ is orthogonal to the trivial representation, yielding the following:
\begin{corollary} \label{cor:group-sum-zero}
         $\sum_{g \in \gr} \gamma_{i,j}(g) = |\gr| \langle \gamma_{i,j}, \triv \rangle  = 0$ for each non-trivial $\gamma \in \widehat{\gr}$ and each $i,j \in N_\gamma$.
\end{corollary}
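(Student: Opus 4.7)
The plan is to derive both equalities directly from Theorem~\ref{th:orthogonal_representations}, so the proof amounts to unpacking definitions and applying the orthogonality of the basis of matrix entries.

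First, I would establish the equality $\sum_{g \in \gr} \gamma_{i,j}(g) = |\gr| \langle \gamma_{i,j}, \triv \rangle$ by simply unfolding the definition of the inner product on $\mathcal{L}^2(\gr)$. Taking $F = \gamma_{i,j}$ and $H = \triv$ in the definition $\langle F, H \rangle = \frac{1}{|\gr|} \sum_{g \in \gr} F(g) \overline{H(g)}$, and using that $\overline{\triv(g)} = \overline{1} = 1$ for every $g \in \gr$, gives $\langle \gamma_{i,j}, \triv \rangle = \frac{1}{|\gr|} \sum_{g\in \gr} \gamma_{i,j}(g)$, from which the first equality follows after clearing the factor $|\gr|$.

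For the second equality, the key observation is that the trivial representation $\triv$ is itself a one-dimensional irreducible unitary representation of $\gr$. Since $\widehat{\gr}$ is a complete set of inequivalent irreducible unitary representations, there is a (unique) member of $\widehat{\gr}$ equivalent to $\triv$; because any one-dimensional representation equivalent to $\triv$ is equal to it, one has $\triv \in \widehat{\gr}$ (with $\dim_{\triv} = 1$ and its unique matrix entry $\triv_{1,1}$ being the constant function $1$). Hence both $\gamma_{i,j}$ and $\triv_{1,1}$ belong to the orthogonal basis provided by Theorem~\ref{th:orthogonal_representations}.

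Finally, since $\gamma$ is assumed to be non-trivial, the basis elements $\gamma_{i,j}$ and $\triv_{1,1}$ are distinct, so the theorem's orthogonality guarantees $\langle \gamma_{i,j}, \triv \rangle = \langle \gamma_{i,j}, \triv_{1,1} \rangle = 0$, yielding the second equality. There is no real obstacle here: the only subtlety worth noting is the identification of $\triv$ as one of the distinguished representatives in $\widehat{\gr}$, which is forced by the one-dimensional case of the equivalence relation on representations.
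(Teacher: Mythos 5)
Your proof is correct and follows essentially the same route as the paper, which derives the corollary directly from the orthogonality of matrix entries stated in Theorem~\ref{th:orthogonal_representations}. Your additional care in verifying that $\triv$ itself (not merely some representative equivalent to it) lies in $\widehat{\gr}$ is a sound observation, though the paper leaves it implicit.
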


We also consider Fourier transforms of matrix-valued functions $F:\gr \rightarrow \cc^{N_F\times N_F}$. Given $\gamma\in \widehat{\gr}$ and indices $i,j\in N_\gamma$, we define the $N_F\times N_F$ matrix $\widehat{F}(\gamma_{i,j})$ as the one whose $(s,t)$-th entry is
$\widehat{F_{s,t}}(\gamma_{i,j})$ for each $s,t\in N_F$. In other words,
\[
\widehat{F}(\gamma_{i,j})= \frac{1}{|\gr|} \sum_{g\in \gr} F(g) \overline{\gamma_{i,j}(g)}.
\]

The following lemma gives an alternative expression for the contribution of a representation $\gamma \in \widehat{\gr}$
to the Fourier series of a matrix-valued function $F$.

\begin{restatable}{lemma}{LEfouriercoefficientaux}\label{le:fourier_coefficient_aux}
    Let $F:\gr \rightarrow \cc^{N\times N}$ be a map, $\gamma\in \widehat{\gr}$,   
    and $g\in \gr$. Then
    \[
    \sum_{i,j\in N_\gamma} \widehat{F}(\gamma_{i,j}) \gamma_{i,j}(g) =
    \frac{1}{|\gr|} \sum_{h\in \gr} F(h) \chi_\gamma(h^{-1}g).
    \]
\end{restatable}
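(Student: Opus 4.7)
The plan is to expand the left-hand side using the definition of the matrix-valued Fourier coefficient, interchange the finite sums, and then recognise the inner double sum over $i,j$ as the character $\chi_\gamma(h^{-1}g)$ using the unitarity and homomorphism properties of $\gamma$.

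First I would substitute $\widehat{F}(\gamma_{i,j}) = \frac{1}{|\gr|}\sum_{h\in\gr} F(h)\overline{\gamma_{i,j}(h)}$ into the LHS and pull the sum over $h$ outside the sum over $i,j$, yielding
\[
\sum_{i,j\in N_\gamma} \widehat{F}(\gamma_{i,j})\,\gamma_{i,j}(g) \;=\; \frac{1}{|\gr|}\sum_{h\in\gr} F(h) \sum_{i,j\in N_\gamma} \overline{\gamma_{i,j}(h)}\,\gamma_{i,j}(g).
\]
So the lemma reduces to showing the scalar identity $\sum_{i,j\in N_\gamma}\overline{\gamma_{i,j}(h)}\,\gamma_{i,j}(g) = \chi_\gamma(h^{-1}g)$, which can then be multiplied pointwise into the matrix $F(h)$.

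For that identity, I would use that $\gamma$ is unitary, so $\gamma(h)^{*}=\gamma(h)^{-1}=\gamma(h^{-1})$; in entries, $\overline{\gamma_{i,j}(h)} = \gamma_{j,i}(h^{-1})$. Substituting gives
\[
\sum_{i,j\in N_\gamma} \gamma_{j,i}(h^{-1})\,\gamma_{i,j}(g) \;=\; \sum_{j\in N_\gamma}\bigl(\gamma(h^{-1})\gamma(g)\bigr)_{j,j} \;=\; \tr\bigl(\gamma(h^{-1})\gamma(g)\bigr),
\]
and since $\gamma$ is a homomorphism, $\gamma(h^{-1})\gamma(g)=\gamma(h^{-1}g)$, whose trace is by definition $\chi_\gamma(h^{-1}g)$. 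Combining these two steps yields the claimed equality.

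There is no real obstacle here: the only conceptual ingredients are the unitarity of $\gamma\in\widehat{\gr}$ (to rewrite complex conjugates of matrix entries as entries of $\gamma(h^{-1})$) and the homomorphism property (to collapse the product to $\gamma(h^{-1}g)$). The bookkeeping is purely a matter of exchanging finite sums and keeping the row/column indices straight when passing from $\overline{\gamma_{i,j}(h)}$ to $\gamma_{j,i}(h^{-1})$.
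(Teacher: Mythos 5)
Your proof is correct and follows exactly the same route as the paper's: substitute the definition of $\widehat{F}(\gamma_{i,j})$, swap the finite sums, use unitarity to rewrite $\overline{\gamma_{i,j}(h)}=\gamma_{j,i}(h^{-1})$, and use the homomorphism property together with the trace to recognise $\chi_\gamma(h^{-1}g)$. No meaningful differences.
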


The following is a well-known fact about sums of characters of irreducible representations. 
\begin{lemma}[\protect{\cite[Lemma 2]{Terras_1999}}]
\label{le:sum-dim-char}
    Let $\gr$ be a finite group. Then 
        \[
        \sum_{\gamma\in \widehat{G}} \dim_\gamma \chi_\gamma(g) = \begin{cases}
            |\gr| \text{ if $g=\groupid_\gr$, and}\\
            0 \text{ otherwise.}
        \end{cases}
\]
In particular, using that $\chi_\gamma(\groupid_\gr)=\dim_\gamma$ we obtain that
$\sum_{\gamma \in \widehat{\gr}}
        \dim_\gamma^2 = |\gr|$.
\end{lemma}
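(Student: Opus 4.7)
The plan is to derive this identity as a direct consequence of Fourier Inversion (part 2 of \Cref{th:orthogonal_representations}), applied to one very specific function: the ``delta function at the identity'' $F:\gr \to \cc$ defined by $F(h) = \llbracket h = \groupid_\gr \rrbracket$. The right-hand side of the target identity is, up to the factor $|\gr|$, equal to $F$, so the game is to express $F$ in terms of the orthogonal basis from \Cref{th:orthogonal_representations} and match the two expressions.

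First I would compute the Fourier coefficients of $F$. Since only the term $h = \groupid_\gr$ contributes to
\[
\widehat{F}(\gamma_{i,j}) = \langle F, \gamma_{i,j} \rangle = \frac{1}{|\gr|}\sum_{h \in \gr} F(h) \overline{\gamma_{i,j}(h)},
\]
and since $\gamma$ being a group homomorphism forces $\gamma(\groupid_\gr) = I_{N_\gamma}$, the matrix entry $\gamma_{i,j}(\groupid_\gr)$ equals $\llbracket i = j \rrbracket$. Hence $\widehat{F}(\gamma_{i,j}) = \frac{1}{|\gr|} \llbracket i = j \rrbracket$, so all Fourier mass sits on diagonal entries of each irreducible representation.

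Next I would substitute these coefficients into the Fourier Inversion formula
\[
F(g) = \sum_{\gamma \in \widehat{\gr},\, i,j \in N_\gamma} \dim_\gamma \, \widehat{F}(\gamma_{i,j}) \, \gamma_{i,j}(g).
\]
The $\llbracket i = j \rrbracket$ factor collapses the double sum over $i,j$ into $\sum_{i \in N_\gamma} \gamma_{i,i}(g) = \chi_\gamma(g)$ by definition of the character. Multiplying through by $|\gr|$ yields $\sum_{\gamma \in \widehat{\gr}} \dim_\gamma \chi_\gamma(g) = |\gr| \llbracket g = \groupid_\gr \rrbracket$, which is precisely the claimed case split. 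The ``in particular'' consequence $\sum_{\gamma} \dim_\gamma^2 = |\gr|$ is then immediate: specialise $g = \groupid_\gr$ and use $\chi_\gamma(\groupid_\gr) = \tr(I_{N_\gamma}) = \dim_\gamma$.

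I do not anticipate a serious obstacle here. Once one recognises that the delta function at the identity has Fourier coefficients supported on the diagonal entries of each irreducible representation, the identity drops out essentially for free, and the only care required is bookkeeping: the $1/|\gr|$ normalisation in the inner product defining $\widehat{F}$ must be tracked so that it cancels the $\dim_\gamma$-free factor on the right and leaves the correct final constant $|\gr|$.
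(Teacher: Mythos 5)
Your proof is correct, and the computation is essentially complete. The paper does not prove this lemma --- it cites it directly as Lemma~2 of Terras --- so there is no internal argument to compare against. Your route applies Fourier inversion (Theorem~\ref{th:orthogonal_representations}) to the delta function at the identity: the Fourier coefficients $\widehat{F}(\gamma_{i,j}) = |\gr|^{-1}\llbracket i=j\rrbracket$ are exactly as you say, since $\gamma(\groupid_\gr)=I_{N_\gamma}$, and the double sum over $i,j$ then collapses to $\chi_\gamma(g)$. A worth-noting alternative, and likely what Terras's own Lemma~2 does, computes the character of the right-regular representation directly: $R_\gr(g)$ is a permutation matrix whose trace counts fixed points of right-multiplication by $g^{-1}$, giving $\chi_{R_\gr}(g) = |\gr|\,\llbracket g = \groupid_\gr \rrbracket$, and combining this with $R_\gr \simeq \bigoplus_{\rho\in\widehat{\gr}} \dim_\rho\, \rho$ (Lemma~\ref{le:dimension} in this paper) reads off the identity without touching the Fourier inversion formula. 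Both derivations are standard and short; yours is complete as written.
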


Given representations $\rho_1,\dots, \rho_k$, and non-negative integers $n_1,\dots, n_k$, we write $n_1\rho_1 \oplus \dots \oplus n_k \rho_k$ for the representation $\gamma$ of $\gr$ satisfying that for each $g\in \gr$, $\gamma(g)$ is a block-diagonal matrix whose $i$th diagonal block is itself a block-diagonal matrix consisting of $n_i$ blocks of $\rho_i(g)$.

\begin{definition}
    Let $\gr$ be a group, and let $\gamma$ be a representation of $\gr$. We say that a non-negative integer $n$ is the multiplicity of $\rho\in \widehat{\gr}$ in $\gamma$
    if $\gamma \simeq n\rho \oplus \gamma^\prime$, where 
    $\gamma^\prime$ is another representation of $\gr$ that does not have $\rho$ as a sub-representation.
\end{definition}

Although it is not obvious from the definition above, the multiplicity of an irreducible representation in another representation is a well-defined non-negative integer. In fact, the following facts hold.

\begin{lemma}[\protect{\cite[Proposition 2]{Terras_1999}}]
\label{le:complete_reducibility}
    Let $\gamma$ be a representation of $\gr$.
    Then
    $\gamma \simeq \bigoplus_{\rho\in \widehat{\gr}} n_\rho \rho$, where $n_\rho$ denotes the multiplicity of $\rho$ in $\gamma$ for each $\rho \in \widehat{\gr}$.
\end{lemma}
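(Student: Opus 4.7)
The plan is to first reduce to the case of unitary representations, then show by induction on $\dim_\gamma$ that any unitary representation decomposes as a direct sum of irreducible unitary subrepresentations, and finally use the orthogonality statement of \Cref{th:orthogonal_representations} to verify that the multiplicities $n_\rho$ are well defined.

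First, I would show that every representation $\gamma:\gr\to \GL(N)$ is equivalent to a unitary one by an averaging trick. Define the Hermitian form $\langle v,w\rangle_\gr := \frac{1}{|\gr|}\sum_{g\in \gr} \langle \gamma(g)v,\gamma(g)w\rangle$ on $\cc^{N}$; a direct check using the substitution $g\mapsto gg_0$ shows that each $\gamma(g_0)$ preserves this inner product, i.e., is unitary with respect to it. A change of basis matrix $T$ taking the standard basis to an $\langle\cdot,\cdot\rangle_\gr$-orthonormal basis then witnesses the equivalence $\gamma \simeq T^{-1}\gamma T$, where the right-hand side is unitary in the standard sense. So it suffices to prove the lemma for unitary $\gamma$.

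Next, for unitary $\gamma$, any invariant subspace $V\subseteq \cc^{N_\gamma}$ has an invariant orthogonal complement $V^\perp$, because unitarity of $\gamma(g)$ forces $\langle \gamma(g)v,w\rangle = \langle v,\gamma(g^{-1})w\rangle$, so if $V$ is $\gamma$-invariant then so is $V^\perp$. Induction on $\dim_\gamma$ then yields irreducible unitary subrepresentations $\alpha_1,\dots,\alpha_k$ with $\gamma \simeq \alpha_1\oplus\cdots\oplus\alpha_k$, where the equivalence matrix $T$ is obtained by stacking orthonormal bases of the irreducible summands. Since each $\alpha_i$ is equivalent to some $\rho_i\in \widehat{\gr}$, grouping equivalent summands produces the desired decomposition $\gamma \simeq \bigoplus_{\rho\in \widehat{\gr}} n_\rho \rho$ for some non-negative integers $n_\rho$.

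To establish well-definedness of $n_\rho$ I would pass to characters. \Cref{th:orthogonal_representations} gives $\langle \rho_{i,j},\rho'_{s,t}\rangle = \llbracket \rho=\rho'\rrbracket \llbracket (i,j)=(s,t)\rrbracket/\dim_\rho$ for $\rho,\rho'\in \widehat{\gr}$, and summing the diagonal entries yields $\langle \chi_\rho,\chi_{\rho'}\rangle = \llbracket \rho=\rho'\rrbracket$. Because characters are additive under $\oplus$ and invariant under $\simeq$, any decomposition $\gamma \simeq \bigoplus_\rho m_\rho \rho$ satisfies $m_\rho = \langle \chi_\gamma,\chi_\rho\rangle$, which depends only on $\gamma$. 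The same identity also reconciles the decomposition above with the earlier Definition: if $\gamma\simeq n\rho\oplus\gamma'$ with $\rho$ not a sub-representation of $\gamma'$, then by the just-proved part $\gamma'\simeq \bigoplus_{\rho'}m_{\rho'}\rho'$ with $m_\rho=0$, so $n=\langle \chi_\gamma,\chi_\rho\rangle=n_\rho$. The main obstacle I expect is the inductive decomposition step: one must verify carefully that restricting a unitary representation to an invariant subspace gives a bona fide representation of the same group, that orthogonal complements are invariant, and that the resulting block-diagonal form matches the formal notion of equivalence (in particular, producing the required invertible change-of-basis matrix $T$) laid out in the excerpt. These steps are standard but require some care to align with the precise definitions of sub-representation, equivalence, and $\oplus$ introduced above.
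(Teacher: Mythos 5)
The paper offers no proof of this lemma; it is cited directly to~\cite[Proposition 2]{Terras_1999}, so there is no in-paper argument to compare against. Your proposal is the standard textbook proof and is correct: unitarize by averaging the inner product (Weyl's trick), use unitarity to show that the orthogonal complement of an invariant subspace is invariant (Maschke's theorem in its unitary form), induct on dimension to get a direct-sum decomposition into irreducibles, and then use the orthogonality relations from~\Cref{th:orthogonal_representations} (which imply $\langle\chi_\rho,\chi_{\rho'}\rangle=\llbracket\rho=\rho'\rrbracket$) to pin down $n_\rho=\langle\chi_\gamma,\chi_\rho\rangle$ and reconcile this with the paper's definition of multiplicity. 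Two small points worth being explicit about if you wrote this out in full: first, as you note, the translation between ``invariant subspace with invariant orthogonal complement'' and the paper's block-matrix definitions of sub-representation, equivalence, and $\oplus$ needs to be carried out (choose an orthonormal basis adapted to the decomposition $V\oplus V^\perp$ to produce the conjugating $T$ and the block-diagonal form); second, in the well-definedness step you are only using the orthogonality part of~\Cref{th:orthogonal_representations}, not the completeness/basis part, so there is no circularity with the complete reducibility you are proving.
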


\begin{lemma}[\protect{\cite[Proposition 3]{Terras_1999}}]
\label{le:multiplicity}
    Let $\gamma$ be a representation of $\gr$, and let $\rho\in \widehat{\gr}$. Then the multiplicity of $\rho$ in $\gamma$ equals both
    $\langle
   \chi_\rho, \chi_\gamma
    \rangle$ and $\langle \chi_\gamma, \chi_\rho \rangle$.
\end{lemma}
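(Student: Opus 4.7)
The plan is to derive the statement directly from the complete reducibility result (Lemma \ref{le:complete_reducibility}) together with Schur's orthogonality relations for characters, which follow almost immediately from the orthogonality of the matrix entries guaranteed by Theorem \ref{th:orthogonal_representations}. Once characters of inequivalent irreducibles are known to be orthonormal in $\mathcal{L}^2(\gr)$, both equalities in the statement fall out by sesquilinearity.

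First, I would invoke Lemma \ref{le:complete_reducibility} to write $\gamma \simeq \bigoplus_{\rho^\prime \in \widehat{\gr}} n_{\rho^\prime} \rho^\prime$, where $n_{\rho^\prime}$ is the multiplicity of $\rho^\prime$ in $\gamma$. Since the trace is invariant under conjugation by an invertible matrix and the character of a block-diagonal representation is the sum of the characters of the blocks, equivalent representations have the same character, and hence
\[
\chi_\gamma \;=\; \sum_{\rho^\prime \in \widehat{\gr}} n_{\rho^\prime}\, \chi_{\rho^\prime}.
\]

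Next, I would establish the orthonormality $\langle \chi_\rho, \chi_{\rho^\prime} \rangle = \llbracket \rho = \rho^\prime \rrbracket$ for $\rho,\rho^\prime\in\widehat{\gr}$. Writing $\chi_\rho = \sum_{i\in N_\rho}\rho_{i,i}$ and expanding by sesquilinearity gives
\[
\langle \chi_\rho, \chi_{\rho^\prime} \rangle \;=\; \sum_{i\in N_\rho,\, j\in N_{\rho^\prime}} \langle \rho_{i,i},\, \rho^\prime_{j,j} \rangle.
\]
By Theorem \ref{th:orthogonal_representations}, the family $\{\gamma_{i,j} \mid \gamma\in\widehat{\gr},\, i,j\in N_\gamma\}$ is an orthogonal basis of $\mathcal{L}^2(\gr)$, so every summand vanishes when $\rho\neq\rho^\prime$. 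When $\rho=\rho^\prime$, only the diagonal-index terms survive, and each equals $1/\dim_\rho$ by the normalisation $\dim_\rho\lVert\rho_{i,j}\rVert^2=1$; summing $\dim_\rho$ such terms gives $1$. Combining this with the expansion of $\chi_\gamma$ and using linearity of the inner product in its first argument yields $\langle \chi_\rho,\chi_\gamma\rangle = n_\rho$. For the second equality, conjugate symmetry of the inner product gives $\langle \chi_\gamma,\chi_\rho\rangle = \overline{\langle \chi_\rho,\chi_\gamma\rangle} = \overline{n_\rho} = n_\rho$, the last step because $n_\rho\in\nn$ is real.

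There is no serious obstacle: the argument is classical and every ingredient is already available in the excerpt. The only subtlety is that the inner product on $\mathcal{L}^2(\gr)$ is conjugate-linear in its second argument, so the two equalities $\langle \chi_\rho,\chi_\gamma\rangle = n_\rho = \langle \chi_\gamma,\chi_\rho\rangle$ are not formally identical and must be reconciled via the fact that multiplicities are non-negative integers, hence real.
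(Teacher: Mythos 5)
Your proposal is correct. The paper does not prove this lemma at all; it cites it verbatim as Proposition 3 of Terras's book and treats it as a black box. Your derivation is the standard textbook argument: complete reducibility (Lemma~\ref{le:complete_reducibility}) gives $\chi_\gamma = \sum_{\rho'} n_{\rho'}\chi_{\rho'}$, orthogonality of the matrix entries together with the normalisation $\dim_\rho\lVert\rho_{i,j}\rVert^2=1$ from Theorem~\ref{th:orthogonal_representations} gives $\langle\chi_\rho,\chi_{\rho'}\rangle=\llbracket\rho=\rho'\rrbracket$, and then both asserted equalities follow since the $n_{\rho'}$ are real. One small wording slip: when you expand $\langle\chi_\rho,\chi_\gamma\rangle$ by decomposing $\chi_\gamma$ you are working in the \emph{second} slot of the inner product, which by the paper's convention is conjugate-linear, not linear; this produces $\overline{n_\rho}$ rather than $n_\rho$ directly. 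You already flag this in your closing remark, so the argument goes through, but it would be cleaner to state from the outset that you are using conjugate-linearity in the second argument and that the coefficients are real. This does not change the conclusion; the proof is sound and self-contained, which is arguably a small improvement over the paper's pure citation.
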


We define the \emph{right-regular representation of $\gr$ on $\sgr \backslash \gr$},  denoted $R_{\gr,\sgr}$, 
to be the representation of $\gr$ with
$N_{R_{\gr,\sgr}}=\sgr \backslash \gr$ whose entries are defined as follows. Given two cosets $\sgr g_1, \sgr g_2$ and an element $g\in \gr$, the $(\sgr g_1, \sgr g_2)$-entry of $R_{\gr,\sgr}(g)$ is $1$ if $\sgr g_1 = \sgr g_2 g^{-1}$, and zero otherwise. In other words, $R_{\gr,\sgr}(g)$ is the matrix describing the permutation on the cosets $\sgr \backslash \gr$ resulting from right-multiplication with $g^{-1}$. The representation $R_{\gr,\sgr}$ is the representation induced on $\gr$ by the trivial representation of $\sgr$. When $\sgr$ is the trivial subgroup containing just the identity element, we simply write $R_\gr$ for $R_{\gr,\sgr}$. The representation $R_\gr$ is known as the \emph{right-regular representation of $\gr$}.
An application of the Frobenius Reciprocity Law yields the following result. 

\begin{restatable}{lemma}{LEmultiplicitytrivialrestriction}
\label{le:multiplicity_trivial_restriction}
    Let $\sgr\leq \gr$ be groups, and let $\rho\in \widehat{\gr}$. Then the multiplicity of $\rho$ in $R_{\gr,\sgr}$ is the same as the multiplicity of the trivial representation in $\rho\vert_\sgr$.
\end{restatable}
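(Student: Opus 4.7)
The plan is to recognise $R_{\gr,\sgr}$ as a permutation representation and then apply the standard Frobenius Reciprocity argument at the level of characters. Concretely, I combine Lemma~\ref{le:multiplicity} with an explicit formula for $\chi_{R_{\gr,\sgr}}$ and a change of variables in the inner product over $\gr$.

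First, unpacking the definition of $R_{\gr,\sgr}$, the diagonal entry $R_{\gr,\sgr}(g)_{\sgr g_1,\sgr g_1}$ equals $1$ precisely when $\sgr g_1=\sgr g_1 g^{-1}$, i.e., when $g_1gg_1^{-1}\in\sgr$. This condition depends only on the coset $\sgr g_1$, so summing over all $x\in\gr$ rather than over coset representatives gives
\[
\chi_{R_{\gr,\sgr}}(g)\;=\;\frac{1}{|\sgr|}\,\bigl|\{x\in\gr : xgx^{-1}\in\sgr\}\bigr|,
\]
which is in particular real-valued.

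Second, by Lemma~\ref{le:multiplicity} the multiplicity of $\rho$ in $R_{\gr,\sgr}$ equals $\langle\chi_\rho,\chi_{R_{\gr,\sgr}}\rangle$. Substituting the above expression and swapping the two sums gives
\[
\langle\chi_\rho,\chi_{R_{\gr,\sgr}}\rangle\;=\;\frac{1}{|\gr|\,|\sgr|}\sum_{x\in\gr}\sum_{\substack{g\in\gr\\ xgx^{-1}\in\sgr}}\chi_\rho(g).
\]
For each fixed $x$, the substitution $h=xgx^{-1}$ parametrises the inner sum by $h\in\sgr$, and because $\chi_\rho$ is a class function (the trace is invariant under conjugation) we have $\chi_\rho(g)=\chi_\rho(h)$. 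Hence the inner sum becomes $\sum_{h\in\sgr}\chi_\rho(h)$, independent of $x$, and the outer sum contributes a factor $|\gr|$ that cancels, leaving
\[
\langle\chi_\rho,\chi_{R_{\gr,\sgr}}\rangle\;=\;\frac{1}{|\sgr|}\sum_{h\in\sgr}\chi_\rho(h)\;=\;\bigl\langle\chi_{\rho|_\sgr},\chi_\triv\bigr\rangle_{\sgr}.
\]
A second application of Lemma~\ref{le:multiplicity}, this time inside $\sgr$, identifies the right-hand side with the multiplicity of the trivial representation in $\rho|_\sgr$, completing the proof.

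There is no serious obstacle here: the argument is a bookkeeping version of Frobenius Reciprocity built from Lemma~\ref{le:multiplicity} together with the conjugation-invariance of $\chi_\rho$. The only steps requiring care are verifying that the permutation-character formula is consistent with the excerpt's convention (right cosets, with $g$ acting via right-multiplication by $g^{-1}$), and checking that after substituting $h=xgx^{-1}$ the index $h$ genuinely ranges over all of $\sgr$ for every $x\in\gr$, so that the outer sum really does contribute a clean factor of $|\gr|$.
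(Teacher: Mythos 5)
Your proof is correct, but it takes a genuinely different (and more elementary) route from the paper's. The paper's proof first observes that $R_{\gr,\sgr}$ is equivalent to the induced representation $\Ind_\sgr^\gr \triv$ and then invokes the Frobenius Reciprocity Law as a black-box theorem to equate $\langle\chi_\rho,\chi_{R_{\gr,\sgr}}\rangle_\gr$ with $\langle\chi_{\rho\vert_\sgr},\chi_\triv\rangle_\sgr$, after which \Cref{le:multiplicity} identifies both sides as the two multiplicities in the statement. You avoid induced representations and the Frobenius Reciprocity theorem entirely: you compute the permutation character $\chi_{R_{\gr,\sgr}}(g) = \tfrac{1}{|\sgr|}\bigl|\{x\in\gr : xgx^{-1}\in\sgr\}\bigr|$ directly from the definition (and your unpacking of the paper's right-coset, right-multiplication-by-$g^{-1}$ convention is consistent: the diagonal entry at $\sgr g_1$ is $1$ iff $g_1 g g_1^{-1}\in\sgr$), then swap the order of summation and use the class-function property of $\chi_\rho$ to collapse the double sum to $\tfrac{1}{|\sgr|}\sum_{h\in\sgr}\chi_\rho(h)$. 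This re-derives the needed special case of reciprocity from scratch. Both proofs hinge on \Cref{le:multiplicity}; yours is self-contained and avoids any appeal to the theory of induced representations, while the paper's is shorter given that the Frobenius Reciprocity Law is available as a citable result. The change-of-variables step you flagged as requiring care is indeed fine: for each fixed $x\in\gr$ the map $h\mapsto x^{-1}hx$ is a bijection from $\sgr$ onto $\{g\in\gr : xgx^{-1}\in\sgr\}$, so the inner sum equals $\sum_{h\in\sgr}\chi_\rho(h)$ for every $x$, and the outer sum cleanly contributes a factor of $|\gr|$.
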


The following result appears in~\cite[Lemma 2]{Terras_1999} for the so-called left-regular representation, 
but the same proof also works for our modified statement. 

\begin{lemma}
\label{le:dimension}
    Let $\gr$ be a group and let $R_\gr$ be the right-regular representation of $\gr$. Then $R_{\gr} \simeq \bigoplus_{\rho\in \widehat{\gr}} \dim_\rho \rho$.
\end{lemma}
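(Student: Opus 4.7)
The plan is to invoke Lemma \ref{le:complete_reducibility}, which gives $R_\gr \simeq \bigoplus_{\rho\in \widehat{\gr}} n_\rho \rho$ for some non-negative integers $n_\rho$, and then show that each multiplicity satisfies $n_\rho = \dim_\rho$. So the whole task reduces to computing these multiplicities.

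The cleanest route is to piggyback on the lemma immediately preceding the statement. By definition of $R_\gr$, we have $R_\gr = R_{\gr,\sgr}$ for $\sgr = \{\groupid_\gr\}$ the trivial subgroup. Applying Lemma \ref{le:multiplicity_trivial_restriction} to this $\sgr$, the multiplicity of $\rho$ in $R_\gr$ equals the multiplicity of the trivial representation in $\rho\vert_{\{\groupid_\gr\}}$. But $\rho\vert_{\{\groupid_\gr\}}$ sends $\groupid_\gr$ to $I_{\dim_\rho}$, which is just the block-diagonal sum of $\dim_\rho$ copies of the trivial representation of $\{\groupid_\gr\}$. Thus $n_\rho = \dim_\rho$, which is exactly what we need.

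As a sanity check, one can also compute $n_\rho$ directly from Lemma \ref{le:multiplicity}. The matrix $R_\gr(g)$ is the permutation matrix of right-multiplication by $g^{-1}$ on $\gr$, so its trace $\chi_{R_\gr}(g)$ counts fixed points, giving $\chi_{R_\gr}(\groupid_\gr) = |\gr|$ and $\chi_{R_\gr}(g) = 0$ for $g \neq \groupid_\gr$. Then
\[
n_\rho = \langle \chi_\rho, \chi_{R_\gr}\rangle = \frac{1}{|\gr|} \sum_{g\in \gr}\chi_\rho(g)\overline{\chi_{R_\gr}(g)} = \frac{1}{|\gr|}\chi_\rho(\groupid_\gr)\cdot|\gr| = \chi_\rho(\groupid_\gr) = \dim_\rho,
\]
agreeing with the first route. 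This also dovetails with Lemma \ref{le:sum-dim-char}, which forces $\sum_{\rho \in \widehat{\gr}} \dim_\rho^2 = |\gr| = \dim_{R_\gr}$, consistent with the claimed decomposition.

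There is no substantial obstacle here: the statement is the standard decomposition of the regular representation, and with Lemma \ref{le:multiplicity_trivial_restriction} (which encapsulates the Frobenius Reciprocity argument) already in hand, the proof is essentially a one-line specialisation to the trivial subgroup, with the orthogonality/character computation serving only as a double-check.
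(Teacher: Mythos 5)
Your proposal is correct. Note, however, that the paper does not actually spell out a proof of \Cref{le:dimension}: it simply cites \cite[Lemma 2]{Terras_1999} (which concerns the left-regular representation) and remarks that the same argument applies. That argument is precisely your ``sanity check'' route: compute $\chi_{R_\gr}(1_\gr)=|\gr|$ and $\chi_{R_\gr}(g)=0$ for $g\neq 1_\gr$, then apply \Cref{le:multiplicity} to read off $n_\rho=\dim_\rho$. Your primary route --- specialising \Cref{le:multiplicity_trivial_restriction} to $\sgr=\{1_\gr\}$ and observing that $\rho\vert_{\{1_\gr\}}$ is $\dim_\rho$ copies of the trivial representation of the trivial group --- is a genuinely different derivation, essentially rerouting the computation through Frobenius Reciprocity. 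It is a tidy use of the paper's own toolkit and avoids the character calculation entirely, though it does lean on the heavier \Cref{le:multiplicity_trivial_restriction}; the classical route is more self-contained given the orthogonality relations already recorded in \Cref{th:orthogonal_representations}. Both are valid, there is no circularity (the proof of \Cref{le:multiplicity_trivial_restriction} in the appendix does not invoke \Cref{le:dimension}), and either would serve as a perfectly acceptable in-line proof in place of the paper's bare citation.
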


We will also need the following result.

\begin{restatable}{lemma}{LEsubrepresentationsrestriction}
\label{le:subrepresentations_restriction}
Let $\sgr\leq \gr$ be groups. Then
\[
\sum_{\rho\in \widehat{\gr}}
\dim_\rho \langle \chi_\rho, \chi_{R_{\gr,\sgr}} \rangle = 
\frac{|\gr|}{|\sgr|}.
\]
\end{restatable}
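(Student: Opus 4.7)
The plan is to apply the complete reducibility of $R_{\gr,\sgr}$ and then read off the identity by evaluating characters at the group identity.

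First I would invoke \Cref{le:complete_reducibility} to write $R_{\gr,\sgr} \simeq \bigoplus_{\rho\in \widehat{\gr}} m_\rho \rho$, where $m_\rho$ is the multiplicity of $\rho$ in $R_{\gr,\sgr}$. By \Cref{le:multiplicity}, $m_\rho = \langle \chi_\rho,\chi_{R_{\gr,\sgr}}\rangle$, so the left-hand side of the claim is exactly $\sum_{\rho\in\widehat{\gr}} \dim_\rho\, m_\rho$.

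Next I would compute $\chi_{R_{\gr,\sgr}}$ at the identity in two ways. On the one hand, from the decomposition above and $\chi_\rho(\groupid_\gr) = \dim_\rho$, we get
\[
\chi_{R_{\gr,\sgr}}(\groupid_\gr) = \sum_{\rho \in \widehat{\gr}} m_\rho\, \dim_\rho.
\]
On the other hand, $R_{\gr,\sgr}(\groupid_\gr)$ is the identity permutation matrix on the coset set $\sgr\backslash \gr$, so its trace equals $|\sgr\backslash \gr| = |\gr|/|\sgr|$.

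Equating the two expressions yields the claim. This argument is short and essentially mechanical, and I do not anticipate any real obstacle; the only mild subtlety is remembering that the dimension of $R_{\gr,\sgr}$ is the index $|\gr|/|\sgr|$, which follows directly from the definition of the coset permutation representation. (Alternatively, one could route through \Cref{le:multiplicity_trivial_restriction} and Frobenius reciprocity, rewriting $m_\rho$ as the multiplicity of the trivial representation in $\rho|_\sgr$, but the direct dimension-counting argument above is cleaner.)
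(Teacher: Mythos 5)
Your proof is correct, but it takes a slightly different route from the paper's. The paper instead decomposes the \emph{right-regular} representation $R_\gr \simeq \bigoplus_\rho \dim_\rho\,\rho$ (\Cref{le:dimension}), rewrites the left-hand side as the single inner product $\langle \chi_{R_\gr}, \chi_{R_{\gr,\sgr}}\rangle$, and then evaluates this using the delta-function property of $\chi_{R_\gr}$ (it vanishes off the identity and equals $|\gr|$ at the identity), which collapses the sum to $\frac{1}{|\gr|}\,|\gr|\cdot\frac{|\gr|}{|\sgr|}$. Your argument instead decomposes $R_{\gr,\sgr}$ itself via \Cref{le:complete_reducibility} and \Cref{le:multiplicity}, and reads off the identity by counting dimensions: $\sum_\rho m_\rho \dim_\rho = \chi_{R_{\gr,\sgr}}(\groupid_\gr) = |\sgr\backslash\gr|$. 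Both are short and sound; yours is arguably a touch more elementary since it avoids appealing to the character of $R_\gr$ vanishing off the identity, relying only on complete reducibility and dimension counting, whereas the paper's version leans on the already-stated \Cref{le:dimension} to get a clean one-line inner-product computation. Either works.
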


Let $N$ be a finite set. Given a pair of functions function $F,H: \gr \to \mathbb{C}^{N \times N}$, we define their \emph{convolution} $F * H$ by 
\[(F * H)(g) := \frac{1}{|\gr|} \sum_{h \in \gr} F(h) H(h^{-1} g).\]
The following observation will be useful.
\begin{restatable}{lemma}{LEfourierconvolution}
\label{lem:fourier_convolution}
    Let  $F,G: \gr \to \cc^{N\times N}$, $\gamma \in \widehat{\gr}$. Let
    $x,y\in N$,
    $i,j\in N_\gamma$. Then, \[\widehat{(F * G)}_{x,y}(\gamma_{i,j}) =
    \sum_{z\in N} \sum_{k\in N_\gamma}
    \widehat{F_{x,z}}(\gamma_{i,k})\widehat{G_{z,y}}(\gamma_{k,j}).\]
\end{restatable}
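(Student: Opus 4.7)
The plan is to prove this identity by unpacking both sides from their definitions and exploiting the homomorphism property of $\gamma$ in entrywise form, namely that $\gamma_{i,j}(hg') = \sum_{k \in N_\gamma} \gamma_{i,k}(h)\gamma_{k,j}(g')$. This is precisely the non-Abelian matrix-valued analogue of the classical convolution theorem $\widehat{f*g} = \widehat{f}\cdot\widehat{g}$, and as usual the proof is a routine change of variables; no deep obstacle should arise.

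First, I would expand the $(x,y)$-entry of the matrix product appearing in the convolution:
\[
(F*G)_{x,y}(g) = \frac{1}{|\gr|}\sum_{h\in\gr}\sum_{z\in N} F_{x,z}(h)\, G_{z,y}(h^{-1}g).
\]
Plugging this into the definition of the Fourier coefficient $\widehat{(F*G)_{x,y}}(\gamma_{i,j}) = \frac{1}{|\gr|}\sum_{g\in\gr} (F*G)_{x,y}(g)\,\overline{\gamma_{i,j}(g)}$ produces a triple sum over $g,h,z$ weighted by the character factor $\overline{\gamma_{i,j}(g)}$.

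Next, I would perform the substitution $g' = h^{-1}g$ (so $g = hg'$) to try to decouple $h$ and $g'$. The terms $F_{x,z}(h)$ and $G_{z,y}(g')$ now depend on separate variables; only the character term $\overline{\gamma_{i,j}(hg')}$ still couples them. Applying the homomorphism identity above and complex conjugating gives
\[
\overline{\gamma_{i,j}(hg')} = \sum_{k\in N_\gamma} \overline{\gamma_{i,k}(h)}\cdot \overline{\gamma_{k,j}(g')},
\]
which achieves the desired separation at the cost of introducing an internal sum over $k\in N_\gamma$.

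Finally, with the variables now fully decoupled, the resulting quadruple sum over $h,g',z,k$ factorises. Pulling the factor $\frac{1}{|\gr|^2}$ apart as $\frac{1}{|\gr|}\cdot\frac{1}{|\gr|}$ and collecting the $h$-dependent and $g'$-dependent pieces, one recognises $\frac{1}{|\gr|}\sum_h F_{x,z}(h)\overline{\gamma_{i,k}(h)} = \widehat{F_{x,z}}(\gamma_{i,k})$ and similarly $\frac{1}{|\gr|}\sum_{g'} G_{z,y}(g')\overline{\gamma_{k,j}(g')} = \widehat{G_{z,y}}(\gamma_{k,j})$. Summing over $z$ and $k$ yields exactly the right-hand side of the claim. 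The only care needed is keeping track of which index of $\gamma$ is paired with $h$ versus $g'$; this is dictated by the entrywise homomorphism identity, and note that unitarity of $\gamma$ is never invoked --- the argument rests purely on $\gamma$ being a group homomorphism into $\GL(N_\gamma)$.
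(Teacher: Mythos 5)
Your proposal is correct and matches the paper's proof in all essentials: both unpack the definitions of convolution and Fourier coefficient, expand $\overline{\gamma_{i,j}}$ via the entrywise homomorphism property, and decouple the double sum by a change of variables $g \mapsto hg$ (which the paper performs implicitly between its second and third displayed equalities). Your observation that unitarity of $\gamma$ is never used is accurate and applies equally to the paper's argument.
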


\subsubsection{Fourier Analysis over Direct Products}

\begin{theorem}[\cite{Terras_1999}]
    Let $\gr,\sgr$ be finite groups,
    and let $\widehat{\gr}$ and $\widehat{\sgr}$ be complete sets of inequivalent unitary irreducible representations of $\gr$ and $\sgr$ respectively. Then the set
    $\{
    \alpha \otimes \beta \vert \alpha\in \widehat{\gr}, \beta\in  \widehat{\sgr}
    \}$ is a complete set of inequivalent unitary irreducible representations over $\gr\times \sgr$.
\end{theorem}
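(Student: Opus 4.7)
The plan is to verify four facts for the family $\{\alpha\otimes\beta : \alpha\in\widehat{\gr},\,\beta\in\widehat{\sgr}\}$: (i) each member is a unitary representation of $\gr\times\sgr$; (ii) each is irreducible; (iii) distinct pairs yield inequivalent representations; and (iv) the family is complete, i.e., every irreducible representation of $\gr\times\sgr$ is equivalent to one of these.

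For (i), I would first check the group-homomorphism property using the tensor identity $(AB)\otimes(CD)=(A\otimes C)(B\otimes D)$ from the linear-algebra lemma in the excerpt. Indeed, for $(g_1,h_1),(g_2,h_2)\in\gr\times\sgr$,
\[
(\alpha\otimes\beta)\bigl((g_1,h_1)(g_2,h_2)\bigr)=\alpha(g_1g_2)\otimes\beta(h_1h_2)=\bigl(\alpha(g_1)\otimes\beta(h_1)\bigr)\bigl(\alpha(g_2)\otimes\beta(h_2)\bigr),
\]
so $\alpha\otimes\beta$ is a representation. Combining the same identity with $(A\otimes B)^*=A^*\otimes B^*$ gives unitarity from the unitarity of $\alpha$ and $\beta$.

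For (ii) and (iii), I would switch to characters. The trace identity $\tr(A)\tr(C)=\tr(A\otimes C)$ yields $\chi_{\alpha\otimes\beta}(g,h)=\chi_\alpha(g)\chi_\beta(h)$. Since the inner product on $\mathcal{L}^2(\gr\times\sgr)$ splits as a product of averages over $\gr$ and $\sgr$, one gets
\[
\langle\chi_{\alpha\otimes\beta},\chi_{\alpha'\otimes\beta'}\rangle=\langle\chi_\alpha,\chi_{\alpha'}\rangle_\gr\,\langle\chi_\beta,\chi_{\beta'}\rangle_\sgr.
\]
By Lemma~\ref{le:multiplicity} applied to irreducibles of $\gr$ and $\sgr$, each factor on the right is $1$ if the representations are the same member of $\widehat{\gr}$ (resp.\ $\widehat{\sgr}$) and $0$ otherwise. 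Thus $\langle\chi_{\alpha\otimes\beta},\chi_{\alpha\otimes\beta}\rangle=1$, which, again via Lemma~\ref{le:multiplicity}, forces $\alpha\otimes\beta$ to be irreducible (the only irreducible sub-representation of itself is itself, with multiplicity one). Likewise, when $(\alpha,\beta)\neq(\alpha',\beta')$ the inner product vanishes, so $\alpha\otimes\beta\not\simeq\alpha'\otimes\beta'$.

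For (iv), I would appeal to the dimension-sum identity from Lemma~\ref{le:sum-dim-char}, which gives $\sum_{\rho\in\widehat{\gr\times\sgr}}\dim_\rho^2=|\gr\times\sgr|=|\gr|\cdot|\sgr|$. On the other hand, $\dim_{\alpha\otimes\beta}=\dim_\alpha\dim_\beta$, so
\[
\sum_{\alpha\in\widehat{\gr},\,\beta\in\widehat{\sgr}}\dim_{\alpha\otimes\beta}^2=\Bigl(\sum_{\alpha\in\widehat{\gr}}\dim_\alpha^2\Bigr)\Bigl(\sum_{\beta\in\widehat{\sgr}}\dim_\beta^2\Bigr)=|\gr|\cdot|\sgr|.
\]
If some $\rho\in\widehat{\gr\times\sgr}$ were inequivalent to every $\alpha\otimes\beta$, then, having already shown the $\alpha\otimes\beta$ are pairwise inequivalent irreducibles, the sum over $\widehat{\gr\times\sgr}$ would strictly exceed $|\gr||\sgr|$, a contradiction. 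Hence the family is complete.

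The main obstacle, such as it is, is that the excerpt does not explicitly record the orthonormality $\langle\chi_\rho,\chi_{\rho'}\rangle=\llbracket\rho=\rho'\rrbracket$ for $\rho,\rho'\in\widehat{\gr}$; I would derive it as a short consequence of Lemma~\ref{le:multiplicity} (the multiplicity of an irreducible in another irreducible is $1$ if they are equivalent and $0$ otherwise). Once this is in hand, the only genuine computations are the two tensor identities already supplied in the linear-algebra lemma, so the argument reduces to combining stated facts.
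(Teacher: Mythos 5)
Your argument is correct and complete. Note, however, that the paper offers no proof of this statement at all: it is invoked as a cited background fact from Terras's book. Your proof is the standard textbook derivation, and it is well suited to this setting because each ingredient is already stated in the paper's preliminaries: the tensor identities for homomorphism and unitarity, the character-of-tensor identity for computing inner products, Lemma~\ref{le:multiplicity} (together with Lemma~\ref{le:complete_reducibility}, which you use implicitly) to turn $\langle\chi_\rho,\chi_\rho\rangle=1$ into irreducibility and to detect inequivalence, and Lemma~\ref{le:sum-dim-char} for the dimension count that closes the completeness argument. The one place worth tightening is the parenthetical justifying irreducibility: the clean statement is that, by Lemma~\ref{le:complete_reducibility}, $\rho\simeq\bigoplus_\gamma n_\gamma\gamma$, whence $\langle\chi_\rho,\chi_\rho\rangle=\sum_\gamma n_\gamma^2$; if this equals $1$, exactly one $n_\gamma$ is $1$ and the rest vanish, so $\rho$ is irreducible. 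With that spelled out, the proof stands on its own.
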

This result allows us to identify $\widehat{\gr^D}$ with $(\widehat{\gr})^D$, for a given group $\gr$ and finite set $D$. This way, an element $\rho\in \widehat{\gr^D}$ is given by a tuple $(\rho^d)_{d\in D}$ where $\rho^d\in \widehat{\gr}$ for each $d\in D$ in such a way that
\[
\rho(\bg)= \bigotimes_{d\in D}
\rho^d(\bg(d)),
\]
for all $\bg \in \gr^D$. Observe that we use superscripts for the ``components'' of the representation $\rho$ on the power group $\gr^D$, rather than subscripts, which we utilise to denote matrix entries. 
The \emph{degree} of a representation $\rho\in \widehat{\gr^D}$, written $|\rho|$, is the number of indices $d\in D$ for
which $\rho^d$ is non-trivial.\footnote{This quantity is called ``weight''
in~\cite{EHR04:tcs,Bhangale23:algorithmica}.} \par
Let $\gr$ be a group, $E, D$ finite sets, $\pi: D\rightarrow E$ a map, and $\tau\in \widehat{\gr^E}, \rho\in \widehat{\gr^D}$ representations. We define the unitary representation $\rho^\pi$
of $\gr^E$ by 
\[\rho^\pi( \bg)=  
\bigotimes_{e\in E}
\bigotimes_{d\in \pi^{-1}(e)}
\rho^{d}(\bg(e)).
\] for all $\bg \in \gr^E$.
Observe that $N_{\rho^\pi}=N_\rho= \prod_{d\in D} N_{\rho^d}$. Given indices $i=(i_d)_{d\in D}, j=(j_d)_{d\in D}$, we have 
\[\rho_{i,j}^\pi( \bg)=  
\bigotimes_{e\in E}
\bigotimes_{d\in \pi^{-1}(e)}
\rho_{i_d,j_d}^{d}(\bg(e)).
\] 

We write $\tau \sim_\pi \rho$ if for each $e\in E$ for which $\tau^e$ is non-trivial, there is at least one $d\in \pi^{-1}(e)$ for which $\rho^d$ is non-trivial. 

\begin{restatable}{lemma}{LEsimilarrepresentations}
\label{le:similar_representations}
    Let $\gr$ be a group, $E, D$ finite sets, $\pi: D\rightarrow E$ a map, and $\tau\in \widehat{\gr^E}, \rho\in \widehat{\gr^D}$ representations. Then 
    \begin{enumerate}
        \item $\tau \sim_\pi \rho$ implies that $|\tau| \leq |\rho|$.
        \item $\tau \not\sim_\pi \rho$ implies that 
        $\langle \tau_{s,t}, \rho^\pi_{i,j} \rangle_{\gr^E}=0$
        for all $s,t\in N_\tau$, $i,j\in N_\rho$.
    \end{enumerate}
\end{restatable}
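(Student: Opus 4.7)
The plan is to prove item 1 by a direct pigeonhole argument, then tackle item 2 by exploiting that both $\tau_{s,t}$ and $\rho^\pi_{i,j}$ factor coordinate-wise over $E$, reducing the inner product over $\gr^E$ to a product of local inner products over $\gr$, one of which will vanish by Corollary~\ref{cor:group-sum-zero}.

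For item 1, suppose $\tau \sim_\pi \rho$. For each $e\in E$ with $\tau^e$ non-trivial, the hypothesis furnishes some $d(e)\in \pi^{-1}(e)$ with $\rho^{d(e)}$ non-trivial. Since the preimages $\pi^{-1}(e)$ are pairwise disjoint as $e$ ranges over $E$, the chosen elements $d(e)$ are distinct, so $|\rho| \geq |\tau|$.

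For item 2, the key step is the factorization
\[
\tau_{s,t}(\bg) = \prod_{e\in E} \tau^e_{s_e,t_e}(\bg(e)), \qquad
\rho^\pi_{i,j}(\bg) = \prod_{e\in E} \prod_{d\in \pi^{-1}(e)} \rho^d_{i_d,j_d}(\bg(e)),
\]
which, after substitution into the definition of the $\mathcal{L}^2(\gr^E)$ inner product and interchanging sum with product coordinate-wise, yields
\[
\langle \tau_{s,t}, \rho^\pi_{i,j} \rangle_{\gr^E} = \prod_{e\in E} \frac{1}{|\gr|} \sum_{g\in \gr} \tau^e_{s_e,t_e}(g) \, \overline{\prod_{d\in \pi^{-1}(e)} \rho^d_{i_d,j_d}(g)}.
\]
Since $\tau \not\sim_\pi \rho$, there exists $e_0\in E$ with $\tau^{e_0}$ non-trivial while every $\rho^d$ for $d\in \pi^{-1}(e_0)$ is trivial. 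The trivial representation is one-dimensional with $\triv(g)=1$ for all $g$, so the inner product at $e_0$ collapses to $\langle \tau^{e_0}_{s_{e_0},t_{e_0}}, \triv\rangle_\gr$, which vanishes by Corollary~\ref{cor:group-sum-zero} because $\tau^{e_0}\in \widehat{\gr}$ is non-trivial. This single zero factor forces the whole product to be zero. I do not anticipate any real obstacle; the only subtlety is carefully tracking matrix-entry indices, since $N_\rho = \prod_{d\in D} N_{\rho^d}$ and the factorization of $\rho^\pi_{i,j}$ relies on the coordinate-wise decomposition $i=(i_d)_{d\in D}$, $j=(j_d)_{d\in D}$.
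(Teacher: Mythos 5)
Your proof is correct and follows essentially the same route as the paper's: the same coordinate-wise factorization of the inner product into a product over $e \in E$, with the vanishing factor at a coordinate $e_0$ where $\tau^{e_0}$ is non-trivial but all $\rho^d$, $d\in\pi^{-1}(e_0)$, are trivial. The only cosmetic differences are that you spell out the pigeonhole argument for item 1 (which the paper dismisses as straightforward) and cite Corollary~\ref{cor:group-sum-zero} instead of Theorem~\ref{th:orthogonal_representations} for the orthogonality step, which are equivalent justifications.
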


In our proofs, we analyse complex-valued functions $H$ over direct products $\gr^D$ of the form 
$H(\tuple{g})= \ex{\bm{\nu}}[F(\tuple{g}\cdot \bm{\nu})]$, where $\bm{\nu}$ is some ``random noise tuple''. The following result relates the Fourier coefficients of $F$ and $H$ in this setting.

\begin{restatable}{lemma}{LEepsilonnoise}
  \label{le:epsilon-noise}
    Let $\gr$ be a finite group, $D$ a finite set, and $0<\epsilon<1$ a real number. Let $\bm{\nu}=(\nu_d)_{d\in D}$ be a random element from $\gr^D$ chosen as follows. Independently for each $d\in D$, the element $\nu_d$ equals $\groupid_\gr$ with probability $1-\epsilon$, and is picked uniformly at random from $\gr$ otherwise.     
    Let $F:\gr^D\to\mathbb{C}$ and define $H(\ba):= \ex{\bm{\nu}}[F(\ba\cdot \bm{\nu})]$. Then, for each $\rho\in \widehat{\gr^D}$ and $i,j\in N_\rho$, we have
    \[
    \widehat{H}(\rho_{i,j})= 
    (1-\epsilon)^{|\rho|} \widehat{F}(\rho_{i,j}).
    \] 
\end{restatable}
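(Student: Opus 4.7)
The plan is to compute $\widehat{H}(\rho_{i,j})$ directly, reducing it to an expression involving $\ex{\bm{\nu}}[\rho_{j,k}(\bm{\nu})]$, and then to evaluate that expectation using the tensor structure $\rho=\bigotimes_{d\in D}\rho^d$ together with the independence of the coordinates of $\bm{\nu}$.

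First, I write
\[
\widehat{H}(\rho_{i,j}) \;=\; \frac{1}{|\gr|^{|D|}}\sum_{\ba\in\gr^D} \ex{\bm{\nu}}[F(\ba\cdot\bm{\nu})]\,\overline{\rho_{i,j}(\ba)}
\;=\; \ex{\bm{\nu}}\!\left[\frac{1}{|\gr|^{|D|}}\sum_{\bg\in\gr^D} F(\bg)\,\overline{\rho_{i,j}(\bg\cdot\bm{\nu}^{-1})}\right],
\]
where the second equality swaps sum and expectation and applies the change of variable $\bg=\ba\cdot\bm{\nu}$. Since $\rho$ is a unitary representation, $\rho(\bg\cdot\bm{\nu}^{-1})=\rho(\bg)\,\rho(\bm{\nu})^{*}$, so
\[
\overline{\rho_{i,j}(\bg\cdot\bm{\nu}^{-1})} \;=\; \sum_{k\in N_\rho} \overline{\rho_{i,k}(\bg)}\,\rho_{j,k}(\bm{\nu}).
\]
Plugging this in and pulling the $\rho_{j,k}(\bm{\nu})$ factor out of the sum over $\bg$ yields
\[
\widehat{H}(\rho_{i,j}) \;=\; \sum_{k\in N_\rho} \ex{\bm{\nu}}[\rho_{j,k}(\bm{\nu})]\,\widehat{F}(\rho_{i,k}).
\]

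Next I evaluate $\ex{\bm{\nu}}[\rho_{j,k}(\bm{\nu})]$. Writing $j=(j_d)_{d\in D}$, $k=(k_d)_{d\in D}$ and using the tensor product form of $\rho$,
\[
\rho_{j,k}(\bm{\nu}) \;=\; \prod_{d\in D} \rho^{d}_{j_d,k_d}(\nu_d),
\]
so by independence of the $\nu_d$'s the expectation factorises as $\prod_d \ex{\nu_d}[\rho^{d}_{j_d,k_d}(\nu_d)]$. For each coordinate $d$,
\[
\ex{\nu_d}[\rho^{d}_{j_d,k_d}(\nu_d)] \;=\; (1-\epsilon)\,\rho^{d}_{j_d,k_d}(\groupid_\gr) \;+\; \frac{\epsilon}{|\gr|}\sum_{g\in\gr}\rho^{d}_{j_d,k_d}(g).
\]
If $\rho^d$ is trivial, then $N_{\rho^d}$ is a singleton, $j_d=k_d$ automatically, and the factor equals $1$. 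If $\rho^d$ is non-trivial, then the second term vanishes by \Cref{cor:group-sum-zero}, while the first equals $(1-\epsilon)\llbracket j_d=k_d\rrbracket$ because $\rho^d(\groupid_\gr)=I_{N_{\rho^d}}$. Taking the product over $d\in D$, we obtain
\[
\ex{\bm{\nu}}[\rho_{j,k}(\bm{\nu})] \;=\; (1-\epsilon)^{|\rho|}\,\llbracket j=k\rrbracket,
\]
since the condition $j_d=k_d$ for every non-trivial coordinate, combined with the automatic equality at trivial coordinates, is equivalent to $j=k$.

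Substituting back, only the term $k=j$ survives and we conclude
\[
\widehat{H}(\rho_{i,j}) \;=\; (1-\epsilon)^{|\rho|}\,\widehat{F}(\rho_{i,j}),
\]
as required. The only real subtlety is the first step: in the non-Abelian setting, $\rho$ has several entries, so the change of variable produces a sum over $k\in N_\rho$ rather than a single scalar; the observation that $\rho$ is unitary (so that $\rho(\bm{\nu}^{-1})=\rho(\bm{\nu})^{*}$) is what lets us express this cleanly. The rest reduces to the Abelian-style computation coordinate by coordinate.
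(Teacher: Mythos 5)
Your argument is correct and follows essentially the same route as the paper: change of variable, homomorphism/unitarity of $\rho$, then the tensor-product factorisation combined with \Cref{cor:group-sum-zero} to evaluate $\ex{\bm{\nu}}[\rho(\bm{\nu})]$. The only cosmetic difference is that you work entry-by-entry with a sum over $k$ and a Kronecker delta, whereas the paper carries out the same computation at the matrix level, showing $\ex{\bm{\nu}}[\rho(\bm{\nu})]=(1-\epsilon)^{|\rho|}I_{N_\rho}$ directly.
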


\section{Overview of Results}\label{sec:results}

Let $\gr_1,\gr_2$ be two groups and $\varphi$ a group homomorphism with domain
$\dom(\varphi) \leq \gr_1$ and image $\im(\varphi) \leq \gr_2$ that extends to a full
homomorphism from $\gr_1$ to $\gr_2$.
We shall refer to triples $(\gr_1,\gr_2, \varphi)$ of this kind as \emph{templates}. Further, let $0 < s \leq c \leq 1$ be rational constants. 
We consider the problem $\eq(\gr_1, \gr_2, \varphi,c,s)$ which asks, given a
 weighted system of linear equations with exactly three variables in each equation and
constants in $\dom(\varphi)$ that is $c$-satisfiable in $\gr_1$, to decide whether
there exists an $s$-approximation in $\gr_2$, where the constants are
interpreted through $\varphi$. 

To be more precise, an instance to  $\eq(\gr_1, \gr_2, \varphi,c,s)$ over a set of variables $X$ is a weighted systems of linear equations where each equation is of the form \[x^i y^j z^k = g\] for some $x,y,z \in X$, $g \in \dom(\varphi)$,  $i,j,k \in \{-1,1\}$, and each equation has a non-negative rational weight. Without loss of generality, we assume that the weights are normalised, i.e., sum up to 1.
For $t \in [2]$, an assignment $f:X\to \gr_t$ \emph{satisfies} an equation  $x^i y^j z^k = g$ in $\gr_t$ if $f(x)^i f(y)^j f(z)^k = g$ for $t=1$, and $f(x)^i f(y)^j f(z)^k = \varphi(g)$ for $t=2$.
The task then is to accept if there is an assignment that satisfies a $c$-fraction (i.e., a fraction of total weight $c$) of equations in $\gr_1$, and to reject if there is no assignment that satisfies an $s$-fraction of the equations in $\gr_2$. It is easy to verify that, if $(\gr_1,\gr_2, \varphi)$ is a template and $s \leq c$, then the sets of accept and reject instances are, in fact, disjoint.\footnote{$\eq$ can be alternatively phrased as a promise constraint satisfaction problem, cf.~\Cref{sec:algebraic-approach} for details.}

$\eq(\gr_1,\gr_2,\varphi,c,s)$ is trivially tractable when $\im(\varphi)=\{\groupid\}$, so we focus on the case where $|\im(\varphi)|\geq 2$.
The main result of this paper is that $\eq(\gr_1,\gr_2,\varphi,1-\epsilon,1/|\im(\varphi)|+\delta)$ is NP-hard for all $\epsilon, \delta>0$ for which the problem is well-defined. This is achieved by a reduction from the Gap Label Cover problem with perfect completeness and soundness $\alpha=\delta^2/(4\kappa|\gr_1|^\kappa|\gr_2|^{4})$, where $\kappa=\lceil(\log_2\delta -2)/(\log_2(1-\epsilon))\rceil$. 
\begin{theorem}[\textbf{Main}]
\label{th:main}
     Let $\epsilon, \delta$ be positive constants satisfying $1-\epsilon
     \geq1/|\im(\varphi)|+\delta$. Then, $\eq(\gr_1,\gr_2,\varphi,1-\epsilon,1/|\im(\varphi)|+\delta)$ is NP-hard.
\end{theorem}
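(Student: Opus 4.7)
The plan is to reduce from Gap Label Cover $\glc_{D,E}(1,\alpha)$ to $\eq(\gr_1,\gr_2,\varphi,1-\epsilon,1/|\im(\varphi)|+\delta)$, in the spirit of H{\aa}stad's PCP-style reductions~\cite{Hastad01:jacm,EHR04:tcs}. For each vertex $u\in U$ of a Label Cover instance I introduce variables $\{x_{u,\bg}\}_{\bg\in \gr_1^D}$, and for each $v\in V$ variables $\{y_{v,\bh}\}_{\bh\in\gr_1^E}$; an honest assignment should interpret these as entries of long codes encoding supposed labels. For each Label Cover edge with projection $\pi_{uv}:D\to E$ I would sample $\bg\in\gr_1^D$, $\bh\in\gr_1^E$, and an independent noise tuple $\bm{\nu}\in\gr_1^D$ (each coordinate equal to $\groupid$ with probability $1-\epsilon$ and uniform otherwise), and emit a weighted three-variable equation relating $x_{u,\bg}$, $y_{v,\bh}$, and $x_{u,\bg''}$, where $\bg''$ is a suitable product built from $\bg$, the pullback of $\bh$ along $\pi_{uv}$, and $\bm{\nu}$. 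To force non-trivial behaviour while keeping all constants inside $\sgr_1=\dom(\varphi)$, I would implement folding over $\varphi$ at the syntactic level: pick a coset representative $\bg^\dagger$ for each right coset of $\sgr_1$ in $\gr_1^D$ and replace each $x_{u,\bg}$ by $\varphi(h_\bg^{-1})\cdot x_{u,\bg^\dagger}$, which is exactly what introduces the constant in $\sgr_1$ on the right-hand side of the equation.

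For completeness, if the Label Cover instance has a perfect labeling $(h_D,h_E)$, I define dictator assignments $f_u(\bg):=\varphi(h_\bg^{-1})\,\bg^\dagger(h_D(u))\in\gr_1$ (and analogously on the $V$-side). Each emitted equation is then satisfied in $\gr_1$ whenever the noise tuple leaves the coordinates $h_D(u)$ and $h_E(v)$ untouched, which occurs with probability at least $1-\epsilon$; averaging gives a $(1-\epsilon)$-satisfying $\gr_1$-assignment.

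The bulk of the work is soundness, run contrapositively: assume a $\gr_2$-assignment $(f_u,f_v)$ of value exceeding $1/|\im(\varphi)|+\delta$ and extract a random Label Cover labeling of expected value $\ge\alpha$. For each edge I would write its test acceptance probability as a non-Abelian Fourier expression over $\gr_2^D$ and $\gr_2^E$ using the framework of \Cref{sec:background}: representations $\rho\in\widehat{\gr_2^D}$ and $\tau\in\widehat{\gr_2^E}$ are paired by the projection relation $\sim_{\pi_{uv}}$, orthogonality being supplied by \Cref{le:similar_representations}. The noise identity \Cref{le:epsilon-noise} dampens the contribution of each $\rho$ by $(1-\epsilon)^{|\rho|}$, and the choice $\kappa=\lceil(\log_2\delta-2)/\log_2(1-\epsilon)\rceil$ is tuned precisely so that $(1-\epsilon)^{\kappa}\le\delta/4$, discarding the high-degree tail and leaving a residual low-degree mass of total size at least $\Omega(\delta)$ from which to decode labels: for each vertex, sample a random representation and a random coordinate weighted by the squared modulus of the corresponding Fourier coefficient.

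The key obstacle, and the main technical novelty relative to~\cite{EHR04:tcs,Bhangale23:algorithmica}, is that weak folding over a proper subgroup of $\gr_1$ does not annihilate the trivial Fourier component, nor in fact any representation $\rho\in\widehat{\gr_2^D}$ whose restriction to $\sgr_2$ contains the trivial representation. In the classical setting, full folding kills every such term and leaves only the $1/|\gr_2|$ random-assignment baseline; in our setting, these inert terms sum to precisely the larger new baseline $1/|\im(\varphi)|$. This is, I expect, the Frobenius-reciprocity content of \Cref{le:multiplicity_trivial_restriction} and \Cref{le:subrepresentations_restriction}: the identity $\sum_{\rho}\dim_\rho\langle\chi_\rho,\chi_{R_{\gr_2,\sgr_2}}\rangle=|\gr_2|/|\sgr_2|$ reproduces exactly the threshold $1/|\im(\varphi)|$ after the normalisation $1/|\gr_2|$. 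The delicate step is therefore to partition the Fourier expansion cleanly into this inert ``random-assignment'' part (matching the threshold) and a decodable part (whose $L^2$-mass must exceed $\Omega(\delta)$), and then to run the coordinate-extraction so that each decoded label on the $V$-side projects correctly through $\pi_{uv}$ to a decoded label on the $U$-side, matching the Label Cover soundness budget $\alpha=\delta^2/(4\kappa|\gr_1|^\kappa|\gr_2|^4)$.
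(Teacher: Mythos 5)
Your high-level plan matches the paper's: reduce from $\glc_{D,E}(1,\alpha)$ via a long-code construction, fold weakly over $\varphi$ so constants stay in $\sgr_1$, use an $\epsilon$-noise tuple to dampen high-degree Fourier mass, and use Frobenius reciprocity to explain why weak folding shifts the baseline from $1/|\gr_2|$ to $1/|\sgr_2|$. The identification of \Cref{le:multiplicity_trivial_restriction} and \Cref{le:subrepresentations_restriction} as the source of the threshold $1/|\im(\varphi)|$, and of the $\kappa$ chosen so that $(1-\epsilon)^\kappa\leq\delta/4$, are both exactly the mechanisms the paper uses. However, there are concrete errors and omissions in the proposal that would break the argument if carried out as written.

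First, you repeatedly say the Fourier expansion is over $\gr_2^D$ and $\gr_2^E$ and speak of representations $\rho\in\widehat{\gr_2^D}$. This is wrong, and fatally so: the long-code tables are functions $B_u:\gr_1^D\to\gr_2$ and $A_v:\gr_1^E\to\gr_2$, i.e., they are indexed by powers of the \emph{source} group $\gr_1$, so all Fourier analysis is performed over $\widehat{\gr_1^D}$ and $\widehat{\gr_1^E}$. The \emph{only} place $\widehat{\gr_2}$ enters is in choosing a single irreducible $\omega\in\widehat{\gr_2}$ to push forward $B_u,(A_v)_\varphi$ to matrix-valued functions $\A,\B$, which are then analysed over $\widehat{\gr_1^E}, \widehat{\gr_1^D}$. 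Using $\widehat{\gr_2^D}$ has no meaning here: there is no function on $\gr_2^D$ to expand. Second, your reduction omits the random sign symmetrisation $s_1,s_2\in\{-1,1\}$ on the two $U$-side queries. This is not a cosmetic choice: it is what makes $\B(\bb)=\ex{s}(\omega\circ B_u)(\bb^s)^s$ skew-Hermitian (\Cref{le:EHR-15}), which in turn is what makes $\frac{1}{|\gr_1^D|}\sum_\bg(\B*\B)(\bg)\otimes\overline{\rho(\bg)}$ positive semidefinite in the proof of \Cref{le:EHR-23}. Without the sign randomisation, the high-degree bound via a PSD argument does not go through. Third, your completeness dictator $f_u(\bg):=\varphi(h_\bg^{-1})\bg^\dagger(h_D(u))$ does not typecheck: $\varphi$ takes values in $\gr_2$ but the completeness assignment must be $\gr_1$-valued; you want the plain projection $\bg\mapsto\bg(h_D(u))$, which is already folded over $\mathrm{id}_{\sgr_1}$. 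Finally, the step where one \emph{chooses} the representation $\omega$ is more delicate than the sketch suggests: one cannot pick $\omega$ merely so that the expected character is $\Omega(\delta)$; one needs the penalised averaging giving some $\omega\neq 1$ with $\bigl|\ex{}[\chi_\omega(\cdot)]\bigr|\geq\dim_\omega\delta+\langle\chi_\omega,\chi_{R_{\gr_2,\sgr_2}}\rangle$, precisely so that after subtracting the inert $\eta_\omega$-sized trivial contribution (bounded in \Cref{le:soundness-aux-1}) one still has $\dim_\omega\delta$ of usable mass. Your sketch gestures at this partition but does not specify the offset in the averaging, which is the load-bearing innovation over \cite{EHR04:tcs}.
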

The hardness result in~\Cref{th:main} is tight for many, but perhaps surprisingly not all, templates. 
We call a template $(\gr_1,\gr_2, \varphi)$
\emph{cubic} if for every $h\in \im(\varphi)$ there is an element $g\in \gr_2$ satisfying $g^3=h$.
\Cref{th:main} is tight for cubic templates. Indeed, for these templates, the random assignment over $\im(\varphi)$ achieves a $1/|\im(\varphi)|$ expected
fraction of satisfied equations (and this can be derandomised, e.g., by the
method of conditional expectations). \par

\begin{theorem}
\label{th:main-cubic}
    Let $(\gr_1, \gr_2, \varphi)$ be a cubic template 
    and $0<s\leq c<1$. Then, $\eq(\gr_1, \gr_2, \varphi, c, s)$ is tractable if $s \leq 1/|\im(\varphi)|$ and NP-hard otherwise.
\end{theorem}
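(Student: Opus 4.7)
The proof splits into two cases depending on whether $s>1/|\im(\varphi)|$ or $s\leq1/|\im(\varphi)|$, with separate strategies for each.

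For the NP-hardness direction ($s>1/|\im(\varphi)|$), the plan is a direct application of \Cref{th:main} with a careful choice of parameters. I would set $\epsilon:=1-c$ and $\delta:=s-1/|\im(\varphi)|$; both are strictly positive since $c<1$ and $s>1/|\im(\varphi)|$ by the hypotheses of the theorem. The required inequality $1-\epsilon\geq 1/|\im(\varphi)|+\delta$ rearranges to $c\geq s$, which holds by assumption. Hence \Cref{th:main} yields NP-hardness of $\eq(\gr_1,\gr_2,\varphi,1-\epsilon,1/|\im(\varphi)|+\delta)$, which is exactly $\eq(\gr_1,\gr_2,\varphi,c,s)$.

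For the tractability direction ($s\leq 1/|\im(\varphi)|$), the plan is to show that \emph{every} instance admits an assignment in $\gr_2$ satisfying at least a $1/|\im(\varphi)|\geq s$ fraction of the equations. This renders the decision problem trivial: no instance can be a rejection instance, so the constant algorithm that always outputs ``accept'' is correct. To establish the existence claim, I would argue probabilistically, following the sketch given just after the theorem statement. Each variable is assigned independently and uniformly at random from $\im(\varphi)$. For an equation $x^iy^jz^k=\varphi(g)$ with three distinct variables, the product $f(x)^if(y)^jf(z)^k$ is uniformly distributed over the group $\im(\varphi)$, so the satisfaction probability is exactly $1/|\im(\varphi)|$. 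Equations with coinciding variables are handled by a similar case analysis, invoking the cubic hypothesis to deal with single-variable equations $x^{\pm 3}=\varphi(g)$: the cubic property produces a cube root in $\gr_2$, which can be pre-assigned to the offending variable. By linearity of expectation, some assignment satisfies at least a $1/|\im(\varphi)|$ fraction of equations, and the method of conditional expectations produces one in polynomial time.

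The main obstacle is in the tractability part and concerns equations with coinciding variables: the naïve uniform distribution over $\im(\varphi)$ can give strictly less than $1/|\im(\varphi)|$ satisfaction probability on equations of the form $x^3=\varphi(g)$ when cubing is not surjective on $\im(\varphi)$ (as can happen under the cubic hypothesis, which only guarantees cube roots in $\gr_2$). The cubic condition is precisely what bridges this gap—it guarantees a cube root in the larger group $\gr_2$—but incorporating this into the analysis requires either preprocessing such single-variable equations by directly assigning the variable to a cube root in $\gr_2$, or a refined averaging argument. Once this subtlety is handled, the existence of a $1/|\im(\varphi)|$-satisfying assignment follows, and tractability is immediate.
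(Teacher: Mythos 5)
The NP-hardness direction is correct and matches the paper's implicit argument: with $\epsilon:=1-c$ and $\delta:=s-1/|\im(\varphi)|$ (both strictly positive under the hypotheses), the requirement $1-\epsilon\geq 1/|\im(\varphi)|+\delta$ from \Cref{th:main} is exactly $c\geq s$, so \Cref{th:main} applies directly.

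The tractability direction follows the paper's one-sentence remark, and you rightly flag that the uniform-over-$\im(\varphi)$ assignment can fail on single-variable equations $x^{\pm3}=\varphi(g)$ when cubing is not surjective on $\im(\varphi)$. But the fix you offer --- pre-assign $f(x)$ to a cube root $r\in\gr_2$ of $\varphi(g)$ --- does not repair this. In precisely the problematic case, no cube root lies in $\im(\varphi)$; e.g.\ take $\gr_1=\gr_2=\mathbb{Z}/9\mathbb{Z}$, $\im(\varphi)=\{0,3,6\}$, $\varphi=\mathrm{id}$: this is cubic, yet the cube roots of $3$ are $\{1,4,7\}$, disjoint from $\im(\varphi)$. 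If you pre-assign $f(x)=1$ and the instance also contains an equation such as $x+y+z=0$, then for any $f(y),f(z)\in\im(\varphi)$ the value $f(x)+f(y)+f(z)$ lies in $1+\im(\varphi)=\{1,4,7\}$, which never equals $0$, so those equations now have satisfaction probability zero. The pre-assignment rescues the $x^3$-equations only at the cost of killing every other equation through $x$, and it is neither clear nor argued that the net effect ever clears the $1/|\im(\varphi)|$ bar. The resolution lies in the ``refined averaging argument'' you gesture at --- for example, averaging over the $|\im(\varphi)|$ shifted assignments $f_h(x)=f_0(x)\cdot h$, $h\in\im(\varphi)$, while also randomizing the coset placement of the base assignment $f_0$, so that every equation is satisfied by at least one expected value of $h$ --- but that computation is the actual content of the tractability claim and is absent from your write-up (the paper's own remark is also silent on it). So there is a genuine gap here, not merely an acknowledged subtlety.
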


Let us now turn to non-cubic templates. 
An equation is \emph{unsatisfiable} if it is
of the form $x^3=h$ or $x^{-3}=h$ for some $h\in \dom(\varphi)$ such that $g^3 \neq \varphi(h)$ for all $g \in \gr_2$.\footnote{Note
that, since $\varphi$ extends to a homomorphism from $\gr_1$ to $\gr_2$, this also implies
that $g^3 \neq h$ for all $g \in \gr_1$.}
Note that a
template has unsatisfiable equations if and only if it is non-cubic. Note that
the naive random assignment cannot achieve a positive approximation factor in
systems of equations over non-cubic templates since 
the system could consist exclusively of unsatisfiable equations. However, there is a simple
algorithm for $\eq(\gr_1,\gr_2,\varphi,c,c/|\im(\varphi)|)$ that works even for
non-cubic templates, which we describe next.

Given a weighted
system of equations over $(\gr_1,\gr_2,\varphi)$, consider its set of unsatisfiable equations. 
Since $\varphi$ extends to a full homomorphism, if the total weight of the set of unsatisfiable equations is more than $1-c$, then the instance cannot be $c$-satisfiable in $\gr_1$, hence, reject.
Otherwise, the random assignment over $\im(\varphi)$ satisfies at least a
$1/|\im(\varphi)|$-fraction of the satisfiable equations over $\gr_2$, which is at least a $c/|\im(\varphi)|$-fraction of the entire system. 
It is a simple corollary of~\Cref{th:main} that this algorithm is optimal for non-cubic groups, leading to the following result. Details are deferred to~\Cref{ap:non-cubic}.

\begin{restatable}{theorem}{THnoncubic}
\label{th:main-non-cubic}
    Let $(\gr_1, \gr_2, \varphi)$ be a non-cubic template 
    and $0<s\leq c<1$. Then, $\eq(\gr_1,$ $ \gr_2, \varphi, c, s)$ is tractable if $s/c \leq 1/|\im(\varphi)|$ and NP-hard otherwise.
\end{restatable}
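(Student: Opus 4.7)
The tractability direction follows directly from the algorithm described immediately before the theorem. Given an instance with normalised weights summing to $1$, compute the total weight $u$ of unsatisfiable equations and reject iff $u > 1-c$. Since $\varphi$ extends to a full homomorphism from $\gr_1$ to $\gr_2$, unsatisfiable equations cannot be satisfied over $\gr_1$, so any $c$-satisfying assignment forces $u \le 1-c$ and the procedure accepts YES instances. The matching soundness analysis, based on a (derandomised) random assignment over $\im(\varphi)$, produces a $\gr_2$-assignment satisfying at least a $c/|\im(\varphi)|$-fraction of the equations whenever $u \le 1-c$; under the hypothesis $s \le c/|\im(\varphi)|$ this certifies that NO instances are rejected, so the same decision procedure also solves $\eq(\gr_1,\gr_2,\varphi,c,s)$.

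For the hardness direction, assume $s/c > 1/|\im(\varphi)|$. The plan is to reduce from $\eq(\gr_1,\gr_2,\varphi,1-\epsilon,1/|\im(\varphi)|+\delta)$, which is NP-hard by Theorem~\ref{th:main}, for positive rationals $\epsilon,\delta$ to be fixed shortly. The reduction dilutes an instance with trivially unsatisfiable equations: by the non-cubicity assumption, some $h^* \in \dom(\varphi)$ has no cube root in $\gr_2$, and hence the equation $x^3 = h^*$ is unsatisfiable in $\gr_2$ and, via the extension property of $\varphi$, in $\gr_1$ as well. Given a source instance $I$ with normalised weights, form $I'$ by adjoining copies of $x^3 = h^*$ on fresh variables with total (pre-renormalisation) weight $w \ge 0$, and then renormalise so that the weights of $I'$ sum to $1$.

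A direct calculation gives: any $(1-\epsilon)$-satisfying assignment to $I$ over $\gr_1$ satisfies a fraction $(1-\epsilon)/(1+w)$ of $I'$, while any $\gr_2$-assignment to $I'$ satisfies a fraction at most $(1/|\im(\varphi)|+\delta)/(1+w)$ of $I'$ (since the padded equations are never satisfied in $\gr_2$). Hence the reduction maps YES to YES and NO to NO provided
\[\frac{1/|\im(\varphi)|+\delta}{s} - 1 < w \leq \frac{1-\epsilon}{c} - 1,\]
and a non-negative rational $w$ can be chosen in this window. The hypothesis $s/c > 1/|\im(\varphi)|$ is used precisely to ensure the interval is non-empty for sufficiently small $\epsilon,\delta$: any choice satisfying both $c(1/|\im(\varphi)|+\delta) < s(1-\epsilon)$ and $\epsilon < 1-c$ works. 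This parameter tuning is the only content beyond an invocation of Theorem~\ref{th:main}; the non-cubicity hypothesis enters solely to supply the unsatisfiable padding equation $x^3=h^*$, without which the dilution trick would be unavailable.
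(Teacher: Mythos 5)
Your proof is correct and takes essentially the same approach as the paper: both directions match, with tractability via the unsatisfiable-equation counting plus derandomised random assignment, and hardness via diluting a hard instance from \Cref{th:main} with unsatisfiable padding equations $x^3=h^*$. The only cosmetic difference is parametrisation: the paper fixes the padding weight at $\frac{\epsilon}{1-\epsilon}$ and phrases the reduction as $\eq(\gr_1,\gr_2,\varphi,c/(1-\epsilon),s/(1-\epsilon))\leq_p\eq(\gr_1,\gr_2,\varphi,c,s)$ with $\epsilon$ chosen so that $c<1-\epsilon<s|\im(\varphi)|$, whereas you leave the weight $w$ free and tune $(\epsilon,\delta,w)$ jointly; the two parameter choices coincide after unwinding.
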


\subsection{Reduction} \label{sec:reduction}

For the rest of the section we outline the proof of our main result,~\Cref{th:main}. From now on we fix a template $(\gr_1, \gr_2, \varphi)$, and positive constants $\delta, \epsilon>0$ with $1/|\im(\varphi)| + \delta \leq 1 - \epsilon$. We define $\sgr_1 =\dom(\varphi) \leq \gr_1$ and $\sgr_2 = \im(\varphi) \leq \gr_2$.

Our proof follows from a reduction from 
$\glc_{D,E}(1,\alpha)$ where 
\[\alpha=\frac{\delta^2}{4\kappa|\gr_1|^\kappa|\gr_2|^{4}}, \quad \kappa=\Bigg\lceil\frac{\log_2\delta -2}{\log_2(1-\epsilon)}\Bigg\rceil,\] and $D,E$ are
chosen to be large enough so that $\glc_{D,E}(1,\alpha)$ is NP-hard by the PCP
theorem~\cite{Arora98:jacm-proof,Arora98:jacm-probabilistic,Raz98}
(cf.~\Cref{th:gap-label-cover}). This reduction constructs an instance $\Phi_\Sigma$ of $\eq(\gr_1, \gr_2,$ $\varphi, 1-\epsilon, 1/|\sgr_2| + \delta)$ for any given instance $\Sigma$ of Gap Label Cover as described below. 

Let $U \sqcup V$ be the underlying vertex set of $\Sigma$, $D, E$ be the disjoint sets of labels, and $\pi_{uv}$ be the labeling functions. 
We fix representatives from each right coset in $\sgr_1\backslash\gr_1^D$ and
$\sgr_1 \backslash \gr_1^E$. Given a tuple $\tuple{x}$ in either $\gr_1^D$ or $\gr_1^E$ we write $\bx^\dagger$ for the representative of the coset $\sgr_1\bx$. 
Let $X= \{
u_\tuple{b} \, \vert u\in U, \tuple{b}\in \gr_1^D 
\} \sqcup \{
v_\tuple{a} \, \vert v\in V, \tuple{a}\in \gr_1^E 
\}$. Then $\Phi_\Sigma$ is the weighted system of equations over $X$ that contains the equation  
\begin{equation}
\label{eq:reduction-def}
v_{\tuple{a}^\dagger} u_{\tuple{b}^{s_1}}^{s_1} u_{\tuple{c}^{s_2}}^{s_2} = g_{\tuple{a}}
\end{equation}
for each edge $\{u,v\}$ of $\Sigma$, $\tuple{a}\in \gr_1^E$, $\tuple{b} \in \gr_1^D$,
$s_1,s_2\in \{-1,1\}$, where  $\tuple{c}$ stands for
$\tuple{b}^{-1} (\tuple{a} \circ \pi_{uv})^{-1}\bm{\nu}$ and  $\bm{\nu} \in \gr_1^D$ is a small perturbation factor. The element $g_\tuple{a}$
is chosen so that $\tuple{a}^\dagger= g_\tuple{a} \tuple{a}$. The weight of this equation in $\Phi_\Sigma$ is the joint probability of the independent events described in~\Cref{fig:probs-reduction}.

\begin{figure}[ht]
    \centering
\begin{align*}
\boxed{
\begin{array}{llllll}
\mbox{($1$)}\quad & \text{The
edge $\{u,v\}$ is chosen uniformly at random among all edges of $\Sigma$.}
\\[0.5em]
\mbox{($2$)}\quad & 
\text{The elements $\tuple{a}$ and $\tuple{b}$ are chosen uniformly at random from $\gr_1^E$ and $\gr_1^D$ respectively.}
\\[0.5em]
\mbox{($3$)}\quad &
\text{The element $\bm{\nu}\in \gr_1^D$ is chosen so that for each $d\in D$, independently, 
$\bm{\nu}(d)=\groupid_{\gr_1}$ with} \\ & \text{probability $1-\epsilon$, and $\bm{\nu}(d)$ is selected uniformly at random from $\gr_1$  with probability $\epsilon$.}\\[0.5em]
\mbox{($4$)}\quad & \text{The signs
$s_1,s_2$ are chosen uniformly at random from $\{-1,1\}$.}
\end{array}
}
\end{align*}
 \caption{The sampling procedure for $\Phi_\Sigma$.}
 \label{fig:probs-reduction}
\end{figure}

Let us describe assignments of $\Phi_\Sigma$ over $\gr_i$ for $i=1,2$. Formally, an assignment of $\Phi_\Sigma$ over $\gr_i$ is a map $h: X\rightarrow \gr_i$. Such an assignment can be described by two families of maps $A=(A_v)_{v\in V}$ from $\gr^E_1$ to $\gr_i$ and $B=(B_u)_{u\in U}$ from $\gr_1^D$ to $\gr_i$ by letting
$A_v(\tuple{a})= h(v_{\tuple{a}}) $ for all $v\in V, \tuple{a}\in \gr_1^E$, and $B_u(\tuple{b})=h(u_{\tuple{b}})$ for all $u\in U, \tuple{b}\in \gr_1^D$. It will be more convenient to talk about the pair $(A,B)$ rather than the map $h$ itself, so we will write $\Phi^{\gr_i}_\Sigma(
A,B)$ to refer to the proportion of equations satisfied by the assignment $h$. 
Let us give a more useful expression for $\Phi^{\gr_i}_\Sigma(
A,B)$. When $i=1$, we can write
\[
\Phi^{\gr_1}_\Sigma(
A,B) = 
\ex{\substack{uv,\tuple{a},\tuple{b}, \\ \bm{\nu},s_1,s_2}} \left[ 
\llbracket A_v(\tuple{a}^\dagger) B_u(\tuple{b}^{s_1})^{s_1} B_u((\tuple{b}^{-1} (\tuple{a} \circ \pi_{uv})^{-1}\bm{\nu})^{s_2})^{s_2}    = g_{\tuple{a}} \rrbracket 
\right],
\]
where the expectation is taken over the probabilities described in~\Cref{fig:probs-reduction}, and we use $uv$ as a shorthand for an edge $\{u,v\}$. 
Folding the assignments $A_v$ over the identity on $\sgr_1$ and using the fact that $(A_v)_{\mathrm{id}_{\sgr_1}}(\tuple{a})=g_{\tuple{a}}^{-1}A_v(\tuple{a}^\dagger)$, we obtain
\begin{equation}
\label{eq:payoff_gr1}
\Phi^{\gr_1}_\Sigma(
A,B) = 
\ex{\substack{uv,\tuple{a},\tuple{b}, \\ \bm{\nu},s_1,s_2}} \left[ 
\llbracket (A_v)_{\mathrm{id}_{\sgr_1}}(\tuple{a}) B_u(\tuple{b}^{s_1})^{s_1} B_u((\tuple{b}^{-1} (\tuple{a} \circ \pi_{uv})^{-1}\bm{\nu})^{s_2})^{s_2}    = \groupid_{\gr_1} \rrbracket 
\right]. 
\end{equation}
Analogously, when $i=2$
and $A_v, B_u$ are families of maps to $\gr_2$, we obtain a similar expression for $\Phi^{\gr_2}_\Sigma(A,B)$:
\begin{equation}
\label{eq:payoff_gr2}
\Phi^{\gr_2}_\Sigma(
A,B) = 
\ex{\substack{uv,\tuple{a},\tuple{b}, \\ \bm{\nu},s_1,s_2}} \left[ 
\llbracket (A_v)_{\varphi}(\tuple{a}) B_u(\tuple{b}^{s_1})^{s_1} B_u((\tuple{b}^{-1} (\tuple{a} \circ \pi_{uv})^{-1}\bm{\nu})^{s_2})^{s_2}    = \groupid_{\gr_2} \rrbracket 
\right]. 
\end{equation}

That is, a pair of assignments $(A,B)$ satisfies an equation in $\Phi_\Sigma$ if and only if the corresponding pair of assignments obtained by folding $A$ (over $\mathrm{id}_{\sgr_1}$ and $\varphi$ respectively) maps the equation to the group identity (respectively, in $\gr_1$ and $\gr_2$). Thus, folding allows us to focus exclusively on the identity terms in these expectations, which will be useful in the analysis of the reduction.

\Cref{th:main} follows from our completeness and soundness bounds for 
    $\Phi_\Sigma$, stated in the next results, using the fact that
    by~\Cref{th:gap-label-cover}, there are finite sets $D,E$ such that
    $\glc_{D,E}(1,\alpha)$ is NP-hard for the value of $\alpha$ chosen
    in~\Cref{th:soundness} below. The proofs of the completeness and soundness bounds can be found in~\Cref{sec:completeness} and~\Cref{sec:soundness} respectively. 

\begin{theorem}[Completeness]
\label{th:completeness}
    Let $\Sigma$ be a Gap Label Cover instance and $\Phi_\Sigma$ be the system defined in~\eqref{eq:reduction-def}. Suppose that $\Sigma$ is $1$-satisfiable. Then $\Phi_\Sigma$ is $(1 -\epsilon)$-satisfiable in $\gr_1$.
\end{theorem}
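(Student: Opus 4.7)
The plan is to exhibit a satisfying assignment by a standard dictator (long-code) encoding derived from a Label-Cover solution, and then compute directly that it satisfies the required fraction of equations.

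First, I would fix a pair of assignments $h_D: U \to D$ and $h_E: V \to E$ witnessing the $1$-satisfiability of $\Sigma$, i.e.\ $\pi_{uv}(h_D(u)) = h_E(v)$ for every edge $\{u,v\}$. Then define the ``dictator'' assignment on $\Phi_\Sigma$ by
\[
A_v(\tuple{a}) := \tuple{a}(h_E(v)) \qquad\text{and}\qquad B_u(\tuple{b}) := \tuple{b}(h_D(u))
\]
for every $v \in V$, $\tuple{a} \in \gr_1^E$ and every $u \in U$, $\tuple{b} \in \gr_1^D$.

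Next I would substitute these assignments into the equation template from~\eqref{eq:reduction-def}. Using $\tuple{a}^\dagger = g_\tuple{a}\tuple{a}$ gives $A_v(\tuple{a}^\dagger) = g_\tuple{a}\,\tuple{a}(h_E(v))$. Because the sign $s_1$ appears once inside $\tuple{b}^{s_1}$ and once outside as an exponent, it cancels with itself: $B_u(\tuple{b}^{s_1})^{s_1} = (\tuple{b}(h_D(u))^{s_1})^{s_1} = \tuple{b}(h_D(u))$, and similarly $B_u(\tuple{c}^{s_2})^{s_2} = \tuple{c}(h_D(u))$. Evaluating $\tuple{c} = \tuple{b}^{-1}(\tuple{a}\circ\pi_{uv})^{-1}\bm{\nu}$ coordinate-wise at $d = h_D(u)$ and using the Label-Cover equation $\pi_{uv}(h_D(u)) = h_E(v)$ gives
\[
\tuple{c}(h_D(u)) = \tuple{b}(h_D(u))^{-1}\,\tuple{a}(h_E(v))^{-1}\,\bm{\nu}(h_D(u)).
\]
Multiplying the three factors in the order prescribed by the equation, the $\tuple{b}(h_D(u))$ and $\tuple{a}(h_E(v))$ terms cancel in turn, leaving the left-hand side equal to $g_\tuple{a}\,\bm{\nu}(h_D(u))$. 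Thus the equation holds in $\gr_1$ if and only if $\bm{\nu}(h_D(u)) = \groupid_{\gr_1}$.

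Finally, I would bound the probability of this event over the sampling in~\Cref{fig:probs-reduction}. For each coordinate $d \in D$ independently, $\bm{\nu}(d) = \groupid_{\gr_1}$ either by the deterministic branch (probability $1-\epsilon$) or by falling onto the identity in the uniform branch (probability $\epsilon/|\gr_1|$), so $\Pr[\bm{\nu}(h_D(u)) = \groupid_{\gr_1}] = 1 - \epsilon + \epsilon/|\gr_1| \geq 1 - \epsilon$. By~\eqref{eq:payoff_gr1} this lower bound on the per-equation satisfaction probability is exactly $\Phi^{\gr_1}_\Sigma(A,B)$, completing the proof. No real obstacle arises; the only point to watch is that the cancellation of the $s_1,s_2$ signs and of the $\tuple{b}(h_D(u))$ factors goes through unchanged in the non-Abelian setting, since each cancelling pair sits adjacent in the product.
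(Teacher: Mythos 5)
Your proof is correct and takes essentially the same route as the paper: dictator assignments from the Label-Cover solution, direct computation of the cancellation, and the observation that the equation holds precisely when $\bm{\nu}(h_D(u))=\groupid_{\gr_1}$. The only cosmetic difference is that you substitute into the raw equation~\eqref{eq:reduction-def} rather than the folded form~\eqref{eq:payoff_gr1}, which is an equivalent bookkeeping choice.
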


\begin{theorem}[Soundness]
\label{th:soundness}
 Let $\Sigma$ be a Gap Label Cover instance and $\Phi_\Sigma$ be the system defined 
 in~\eqref{eq:reduction-def}. Suppose that $\Phi_\Sigma$ is $(1/|\sgr_2| + \delta)$-satisfiable in $\gr_2$. Then $\Sigma$ is $\alpha$-satisfiable, where $\alpha=\delta^2/(4\kappa|\gr_1|^\kappa|\gr_2|^{4})$ and $\kappa=\lceil(\log_2\delta -2)/(\log_2(1-\epsilon)\rceil$. 
 \end{theorem}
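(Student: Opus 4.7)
The plan is to prove the contrapositive: assume that a pair $(A,B)$ of $\gr_2$-valued assignments achieves value at least $1/|\sgr_2|+\delta$ in $\Phi_\Sigma$, and from it extract a randomized labeling of $\Sigma$ with expected value at least $\alpha$. Starting from the folded expression~\eqref{eq:payoff_gr2}, the first step is to apply Fourier inversion on $\gr_2$ to the indicator $\llbracket \cdot=\groupid_{\gr_2}\rrbracket$ via Lemma~\ref{le:sum-dim-char}, then take the expectation over the independent signs $s_1,s_2\in\{-1,1\}$. Since $s_1,s_2$ are independent of each other, their averages factor inside the trace, producing for each $\gamma\in\widehat{\gr_2}$ the skew-Hermitian matrix-valued function
\[
\tilde B_u(\tuple{b}) := \tfrac{1}{2}\bigl(\gamma(B_u(\tuple{b})) + \gamma(B_u(\tuple{b}^{-1}))^{*}\bigr),
\]
and yielding the expansion
\[
\Phi_\Sigma^{\gr_2}(A,B) \;=\; \frac{1}{|\gr_2|}\sum_{\gamma\in\widehat{\gr_2}}\dim_\gamma\cdot\ex{uv,\tuple{a},\tuple{b},\bm{\nu}}\!\bigl[\tr\bigl(\gamma((A_v)_\varphi(\tuple{a}))\,\tilde B_u(\tuple{b})\,\tilde B_u(\tuple{c})\bigr)\bigr].
\]

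Second, I would isolate the baseline contribution of $1/|\sgr_2|$. The folding identity $(A_v)_\varphi(h\tuple{a})=\varphi(h)(A_v)_\varphi(\tuple{a})$ for $h\in\sgr_1$ implies that $\gamma\circ(A_v)_\varphi$ is equivariant under left-multiplication by $\gamma\circ\varphi$, so its image lies in the structured subspace determined by the trivial sub-representation of $\gamma|_{\sgr_2}$. Combining Lemma~\ref{le:multiplicity_trivial_restriction} (Frobenius reciprocity) with Lemma~\ref{le:subrepresentations_restriction} yields the identity
\[
\sum_{\gamma\in\widehat{\gr_2}}\dim_\gamma\cdot n_\gamma \;=\; \frac{|\gr_2|}{|\sgr_2|},
\]
where $n_\gamma$ is the multiplicity of the trivial representation of $\sgr_2$ inside $\gamma|_{\sgr_2}$. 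The vacuous portion of the sum (the part arising from the $\sgr_2$-invariant block of each $\gamma$) contributes precisely $1/|\sgr_2|$, so subtracting it leaves a non-vacuous residual of magnitude at least $\delta$.

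Third, I would expand both $\gamma\circ(A_v)_\varphi$ and $\tilde B_u$ into matrix-valued Fourier series over $\gr_1^E$ and $\gr_1^D$ respectively, indexed by $\tau\in\widehat{\gr_1^E}$ and $\rho\in\widehat{\gr_1^D}$. The interlock $\tuple{c}=\tuple{b}^{-1}(\tuple{a}\circ\pi_{uv})^{-1}\bm{\nu}$ is absorbed through Lemma~\ref{lem:fourier_convolution} (convolution of the two $\tilde B_u$ factors in the variable $\tuple{b}$), reducing the residual to a sum over triples $(\gamma,\tau,\rho)$; Lemma~\ref{le:similar_representations}(2) kills every term with $\tau\not\sim_{\pi_{uv}}\rho$. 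Lemma~\ref{le:epsilon-noise} then attaches a factor $(1-\epsilon)^{|\rho|}$ to each $\rho$-contribution, so Plancherel, together with the entry-wise bound $1$ on $\tilde B_u$ (coming from unitarity of $\gamma$) and $\dim_\gamma^2\leq|\gr_2|$, caps the total contribution of weight-$>\kappa$ representations by $\delta/2$; the choice of $\kappa$ ensures $(1-\epsilon)^\kappa\leq\delta/4$. Hence, at least $\delta/2$ of the residual is concentrated on $\pi$-compatible pairs with $|\rho|\leq\kappa$.

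Finally, I would produce a randomized labeling. For each edge $\{u,v\}$, sample a representation $\tau$ on the $V$-side with probability proportional to its Fourier mass in $\gamma\circ(A_v)_\varphi$, and $\rho$ on the $U$-side proportional to its Fourier mass in $\tilde B_u$; then from each sampled representation pick uniformly a non-trivial coordinate in $E$ or $D$ to serve as the label. By Lemma~\ref{le:similar_representations}(1) the compatible support has $|\tau|\leq|\rho|\leq\kappa$, so $\pi_{uv}$ matches the two labels with probability at least $1/\kappa$; a Cauchy--Schwarz, together with Plancherel, the matrix-entry bounds, and the dimension estimate $\dim_\rho\leq|\gr_1|^{|\rho|}\leq|\gr_1|^\kappa$, converts the residual mass $\delta/2$ into an expected satisfied fraction at least $\alpha=\delta^2/(4\kappa|\gr_1|^\kappa|\gr_2|^{4})$. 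The main obstacle is the second step: since $\varphi$ only bridges subgroups, the weak folding does not directly kill the constant Fourier mode the way classical folding over the full group does in~\cite{EHR04:tcs}, and the baseline $1/|\sgr_2|$ must be unpacked representation-by-representation using how each $\gamma|_{\sgr_2}$ decomposes into irreducibles, which is exactly where Frobenius reciprocity enters the argument.
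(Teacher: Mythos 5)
Your proposal follows essentially the same route as the paper: Fourier-expand the indicator via Lemma~\ref{le:sum-dim-char}, observe that weak folding over $\varphi$ does not annihilate the trivial Fourier mode but that Frobenius reciprocity (Lemmas~\ref{le:multiplicity_trivial_restriction} and~\ref{le:subrepresentations_restriction}) bounds its contribution, use $\bm{\nu}$ to cut off high-degree representations at $\kappa$, and extract a random labeling from low-degree influences. You also correctly identify that this handling of the trivial mode is the key obstacle relative to the classical folding in~\cite{EHR04:tcs}.

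However there is one genuine gap. You keep the sum $\sum_{\gamma\in\widehat{\gr_2}}\dim_\gamma(\cdots)$ through all of steps two and three, but the labeling step (step four) implicitly assumes a \emph{single} $\gamma$: you speak of ``sample a representation $\tau$ ... with probability proportional to its Fourier mass in $\gamma\circ(A_v)_\varphi$,'' yet no $\gamma$ has been selected. The paper resolves this by an averaging argument: since $\sum_\gamma\dim_\gamma^2=|\gr_2|$, from
\[
\ex{}\Big[\sum_{\gamma}\dim_\gamma\chi_\gamma(z)\Big]\geq |\gr_2|\delta+\sum_{\gamma}\dim_\gamma\langle\chi_\gamma,\chi_{R_{\gr_2,\sgr_2}}\rangle
\]
it extracts one nontrivial $\omega$ with $\big|\ex{}[\chi_\omega(z)]\big|\geq\dim_\omega\delta+\langle\chi_\omega,\chi_{R_{\gr_2,\sgr_2}}\rangle$, and the per-$\omega$ lower bound is then what feeds Lemma~\ref{le:EHR-25}. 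Without this selection (or an explicit scheme that also samples $\gamma$ with probability $\dim_\gamma^2/|\gr_2|$), the residual-mass statement you obtain in step three is a single aggregated expectation over all $\gamma$ and could in principle be concentrated on $\gamma$'s whose trivial-mode slack $\eta_\gamma$ is small, so the labeling argument as sketched is not applicable.

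Two smaller points. First, ``the vacuous portion contributes \emph{precisely} $1/|\sgr_2|$'' is only an upper bound in absolute value (it is $\frac{1}{|\gr_2|}\sum_\gamma\dim_\gamma\eta_\gamma=\frac{1}{|\sgr_2|}$ in the best case); this is good enough for your subtraction but should be stated as a bound, and in the paper it appears as Lemma~\ref{le:soundness-aux-1}, whose proof shows the matrix $\frac{1}{|\sgr_2|}\sum_{h\in\sgr_2}\omega(h)$ is a rank-$\eta_\omega$ projection, not merely an ``equivariance'' statement. Second, your high-degree cutoff invokes ``entry-wise bound $1$ on $\tilde B_u$,'' but the paper's Lemma~\ref{le:EHR-23} hinges on the positive semi-definiteness of $\frac{1}{|\gr_1^D|}\sum_{\bg}(\B*\B)(\bg)\otimes\overline{\rho(\bg)}$ (a square of a Hermitian matrix, using that $\B$ is skew-symmetric per Lemma~\ref{le:EHR-15}), which is a different and stronger mechanism than an entrywise bound; a naive entrywise argument would lose dimension factors that the positive-semidefiniteness argument avoids.
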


\subsection{Proof Outline} \label{sec:outline}

The main difficulty in proving the correctness of our reduction lies in showing the soundness bound (\Cref{th:soundness}). The completeness result (\Cref{th:completeness}) is relatively straightforward and follows as in~\cite{EHR04:tcs}. In summary, suppose the Gap Label Cover instance $\Sigma$ is satisfied by a pair of assignments $h_D: U \rightarrow D$, 
$h_E: V\rightarrow E$. Then we find families $A,B$ such that $\Phi^{\gr_1}_\Sigma(A, B)\geq 1 - \epsilon$ by letting $A_v$ be the $h_E(v)$-th projection and $B_u$
be the $h_D(u)$-th projection for each $v\in V, u\in U$. As usual, the noise introduced by the perturbation factor $\bm{\nu}$ is what forces us to give up perfect completeness.\par
The idea behind our soundness analysis has appeared many times in the literature (e.g., \cite{Hastad01:jacm,EHR04:tcs,Bhangale21:stoc}), but the approach taken in~\cite{EHR04:tcs} is the most similar to ours. 
Suppose that there are assignments $A, B$, satisfying 
\begin{equation}
\label{eq:proof_outline_1}
\Phi_\Sigma^{\gr_2}(A, B)\geq 
\frac{1}{|\sgr_2|} + \delta.
\end{equation}
In view of~\eqref{eq:payoff_gr2}, this inequality
can be understood as a lower bound for the success probability of the following $3$-query dictatorship test: Sample all parameters
according to the distribution shown in~\Cref{fig:probs-reduction}, and then
query the values $(A_v)_\varphi(\tuple{a})$, $B_u(\tuple{b}^{s_1})^{s_1}$, and
$B_u((\tuple{b}^{-1} (\tuple{a} \circ \pi_{uv})^{-1}\bm{\nu})^{s_2})^{s_2}$. The test
is passed if the product of the three values is the group identity, and failed
otherwise. 
The soundness proof consists in showing that \eqref{eq:proof_outline_1}
implies that the functions $(A_v)_{\varphi}: \gr_1^E \rightarrow \gr_2$ 
and $B_u: \gr_1^D \rightarrow \gr_2$ are ``close'' to dictators (i.e.,
projections) for each $v\in V$, $u\in U$. Then, this fact allows us to find a
good solution to the starting Gap Label Cover instance $\Sigma$. Indeed, suppose that for each $v\in V$ the map $(A_v)_{\varphi}$
is the projection on the $e_v$-th coordinate, and for each $u\in U$, the map $B_u$ is the projection on the $d_u$-th coordinate. Then the assignment mapping $v$ to $e_v$ and $u$ to $d_u$ for each $v\in V, u\in U$ is a good solution for $\Sigma$. However, it is not clear how to extend this simple idea to the case where the maps $(A_v)_\varphi, B_u$ are not projections. \par
In order to find a good solution for $\Sigma$ in this general case, we first find suitable maps $\gamma_1, \gamma_2:\gr_2 \rightarrow \cc$ and analyse 
$\gamma_1 \circ (A_v)_\varphi$, $\gamma_2 \circ B_u$. Now, using the fact that $(A_v)_\varphi$
and $B_u$ are close to projections, we can prove that choosing the labels $e, d$ for the vertices $v,u$ according to the ``low-degree influence" of the $e$-th coordinate in $\gamma_1 \circ (A_v)_\varphi$ and the $d$-th coordinate in $\gamma_2 \circ B_u$ yields a good randomised assignment of $\Sigma$. \par
This overview so far also applies to the soundness analysis of~\cite{EHR04:tcs}. Let us give more detail and highlight the main differences that sets our work apart. The first important difference has to do with the choice of $\gamma_1, \gamma_2$. We define
$\gamma_1= \omega_{x,y}$, and $\gamma_2= \omega_{y,z}$, 
where $\omega$ is some irreducible representation of $\gr_2$, and $x,y,z$ are suitable indices in $N_\omega$. In~\cite{EHR04:tcs}, the representation $\omega$ is a non-trivial representation chosen so that
\begin{equation*}
\left|
\ex{}\left[
\chi_\omega\left( (A_v)_{\varphi}(\tuple{a}) B_u(\tuple{b}^{s_1})^{s_1} B_u((\tuple{b}^{-1} (\tuple{a} \circ \pi)^{-1}\bm{\nu})^{s_2})^{s_2} \right)
\right] \right| \geq \dim_\omega \delta.
\end{equation*}
Here the expectation is taken over the probability space described in~\Cref{fig:probs-reduction}, and the dependence of $\pi$ on the edge $\{u,v\}$ is left implicit.
In our case, rather than using the Fourier characters for choosing $\omega$, we consider ``penalized characters'' $\widetilde{\chi_\omega}$. We define $\widetilde{\chi_\omega}: \gr_2 \rightarrow \cc$ as the map 
$\chi_\omega - \eta_\omega$, where
the penalty $\eta_\omega$ is the multiplicity of the trivial representation in the restriction $\omega\vert_{\sgr_2}$. This way, we pick $\omega\in\widehat{\gr_2}$ so that the previous inequality holds after replacing $\chi_\omega$ with $\widetilde{\chi_\omega}$. Equivalently, we find $\omega$ satisfying
\begin{equation}
\label{eq:choice_of_representation3}
\left|
\ex{}\left[
\chi_\omega\left( (A_v)_{\varphi}(\tuple{a}) B_u(\tuple{b}^{s_1})^{s_1} B_u((\tuple{b}^{-1} (\tuple{a} \circ \pi)^{-1}\bm{\nu})^{s_2})^{s_2} \right)
\right] \right| \geq  \dim_\omega \delta+ \eta_\omega .
\end{equation}
The fact that such $\omega$ exists is a consequence of~\eqref{eq:proof_outline_1} together with $\sum_{\omega\in \widehat{\gr_2}} \dim_\omega \eta_\omega = |\gr_2|/|\sgr_2|$, which follows from the Frobenius Reciprocity Theorem, as shown in~\Cref{le:subrepresentations_restriction}. This additional factor of $\eta_\omega$ is crucial to our soundness analysis, as we will see. \par
Define the map
$\A= \omega\circ(A_v)_{\varphi}$ and
the map
$\B: \gr_1^D \rightarrow \gr_2$ given by $\B(\tuple{b}) = \ex{s\in \{-1,1\}} \omega \circ B_u(\tuple{b}^s)^s$, where $s\in \{-1, 1\}$ is distributed uniformly.\footnote{Observe that the maps $\A$ and $\B$ depend on the hidden parameters $v$ and $u$ respectively.} To show the soundness bound we consider the Fourier expansions of $\A$ and $\B*\B$ in the expression 
\[
\left|
\tr \, 
\ex{}\left[
\A(\tuple{a})
(\B*\B)( (\tuple{a} \circ \pi)^{-1}\bm{\nu}) 
\right]
\right|,\] which is just a rearrangement of the left-hand-side in the previous inequality. More precisely, we look at the equivalent expression 
\begin{equation}
\label{eq:choice_of_representation2}
\left|
\tr \, 
\ex{}\left[ 
\left(
\sum_{\tau\in \widehat{\gr^E}, s,t\in N_\tau}
\dim_\tau \widehat{\A}(\tau_{s,t}) \tau_{s,t}(\tuple{a})
\right)
\left(
\sum_{\rho\in \widehat{\gr^D}, i,j\in N_\rho}
\dim_\rho
\widehat{(\B*\B)}(\rho_{i,j})
\rho_{i,j}((\tuple{a} \circ \pi)^{-1}\bm{\nu}) \right)
\right]
\right|.
\end{equation}

 Our goal is to find a bound $\kappa$, independent of $|D|,|E|$, satisfying that the contribution to this expression of 
non-trivial representations $\tau, \rho$ of degree less than $\kappa$
is at least $\dim_\omega \delta/2$. This is achieved by controlling the contribution of the trivial term and the contribution of high-degree terms, as indicated by~\Cref{le:soundness-aux-1} and~\Cref{le:EHR-23} respectively. 
The second main difference of our soundness analysis compared to~\cite{EHR04:tcs} is our handling of the trivial term. In~\Cref{le:soundness-aux-1} we prove that 
\[ \left|
\tr \, 
\ex{}\left[ 
\widehat{\A}(1) 
\left(
\sum_{\rho\in \widehat{\gr^D}, i,j\in N_\rho}
\dim_\rho
\widehat{(\B*\B)}(\rho_{i,j})
\rho_{i,j}((\tuple{a} \circ \pi)^{-1}\bm{\nu}) \right)
\right]
\right| \leq \eta_\omega.
\]

In the non-promise setting, this bound is not necessary. Roughly, under the stronger notion of folding used in~\cite{EHR04:tcs}, it is possible to show that $\widehat{\A}(1)$ vanishes. Our weaker notion of folding does not allow us to prove the same result,\footnote{Our notion of folding can be arbitrarily weak: any function $f:\gr_1^N \rightarrow \gr_2$ is folded over the map $\varphi$ sending the group identity of $\gr_1$ to the identity of $\gr_2$.} but we are still able to leverage folding to obtain the above bound. This mismatch with~\cite{EHR04:tcs} is the reason why the extra $\eta_\omega$ term was required in~\eqref{eq:choice_of_representation3}. The key insight in the proof of~\Cref{le:soundness-aux-1} is that if $F:\gr_1^E \rightarrow \gr_2$ is folded over $\varphi$, then the trace of $\widehat{(\omega \circ F)}(1)$ is at most $\eta_\omega$ in absolute value. \par

Our analysis of high-degree terms is in the same spirit as previous works that show hardness of approximation in the imperfect completeness setting. In~\Cref{le:EHR-23} we prove that
\begin{align*}
& \nonumber  \left| \tr \, \ex{} \left[ \left(
    \sum_{\tau\in \widehat{\gr_1^E}, \tau\neq 1}
    \sum_{s,t\in N_\tau}
\dim_\tau
\widehat{\A}(\tau_{s,t}) \tau_{s,t}(\tuple{a}) \right) \times
\right. \right. 
\\ 
&
\left. \left. 
\left(
\sum_{\rho \in \widehat{\gr_1^D}, |\rho| \geq \kappa}
\sum_{i,j \in N_\rho}
\dim_\rho \widehat{(\B*\B)}(\rho_{i,j}) \rho_{i,j}( (\tuple{a} \circ \pi)^{-1} \bm{\nu})  \right)  \right] \right| \leq (\dim_\omega \delta)/2
\end{align*}
for all $\kappa\geq (\log_2 \delta - 2)/\log_2(1-\epsilon)$. The essential idea is that the ``noise vector'' $\bm{\nu}$ has a smoothing effect that limits the contribution of high-degree terms in~\eqref{eq:choice_of_representation2}. \par

Finally, having established that the contribution of non-trivial terms of degree less than $\kappa$ in~\eqref{eq:choice_of_representation2} is at least $\dim_\omega\delta/2$,  in~\Cref{le:EHR-25} we give a good randomised strategy to solve $\Sigma$. This strategy assigns the label $e\in E$ 
to $v\in V$ and the label $d\in D$ to $u\in U$ with probabilities
\begin{equation*}
\Pr(v\mapsto e) \ = 
\sum_{\tau \in \widehat{\gr_1^E}, \tau^e\neq 1}
\sum_{s,t\in N_\tau} \dim_\tau\frac{\left\vert \widehat{\A_{x,y}}(\tau_{s,t})\right\vert^2}{|\tau|}
\end{equation*}
and
\begin{equation*}
\Pr(u \mapsto d)\ = 
\sum_{\rho \in \widehat{\gr_1^D}, \rho^d\neq 1}
\sum_{i,j\in N_\rho} \dim_\rho \frac{\left\vert \widehat{\B_{y,z}}(\rho_{i,j})\right\vert^2}{|\rho|},
\end{equation*}
where $x,y,z\in N_\omega$ are suitable indices found in~\Cref{le:EHR-25}. 
These probabilities are supposed to capture the influence of the $e$-th and
$d$-th coordinates on $\A_{x,y}= \omega_{x,y}\circ (A_v)_\varphi$ and
$\B_{y,z}=\omega_{y,z}\circ \ex{s}B_u( \ \cdot^s)^s$ respectively.
\footnote{This is similar to the notion of influence
in~\cite{Bhangale21:stoc,Austrin09:cc}.}
(At this point it may be helpful to recall that $B_u$ is a function from
$\gr_1^D$ to $\gr_2$ and $s$ is a sign sampled uniformly from $\{-1,1\}$. Thus,
$B_u(\ \cdot^s)^s$ takes an element $b\in\gr_1^D$ and returns $(B_u(b^s))^s$.)
This turns out to be a good randomised assignment for $\Sigma$. That is, 
\begin{equation}
\label{eq:strategy_sketch}
\ex{uv}\left[
\sum_{d\in D}
\Pr(v\mapsto \pi_{uv}(d)) \Pr(u\mapsto d)
\right] \geq \alpha,
\end{equation}
where the expectation is taken uniformly over the edges $\{u,v\}$ of $\Sigma$, and $\alpha$ is the soundness constant appearing in~\Cref{th:soundness}. We are being informal with the usage of the word ``probability'' here: the quantities $\Pr(v\mapsto e)$ and
$\Pr(u \mapsto d)$ may add up to less than $1$, but this is easily fixed by normalising, or by letting our strategy default to the uniform assignment with some positive probability. \par
Let us give some more detail. More precisely, \Cref{le:EHR-25} shows that truncating our assignment probabilities to terms of degree less than $\kappa$ is enough to satisfy this last inequality. Let $\ell\geq 0$. The probabilities $\Pr^{<\ell}(v\mapsto e)$, $\Pr^{<\ell}(u\mapsto d)$ are defined the same way as $\Pr(v\mapsto e)$ and $\Pr(u\mapsto d)$
but considering only representations $\tau, \rho$
of degree less than $\ell$. These modified probabilities can be understood as the ``low-degree influences'' of each coordinate in $\A_{x,y}$ and $\B_{y,z}$. With this notation, in~\Cref{le:EHR-25} we prove that~\eqref{eq:strategy_sketch} holds after replacing 
each assignment probability $\Pr$ with its truncated variant $\Pr^{<\kappa}$. In other words, we prove that

\begin{equation*}
\ex{uv}\left[
\sum_{d\in D}
\sum_{\substack{\rho \in \widehat{\gr_1^D}, \rho^d \neq 1\\
|\rho| < \kappa, \, i,j \in N_\rho
}}
\sum_{\substack{\tau \in \widehat{\gr_1^E}, \tau^{\pi_{uv}(d)} \neq 1\\
|\tau| < \kappa, \, s,t\in N_\tau
}}
\frac{\dim_\tau \left\vert \widehat{\A_{x,y}}(\tau_{s,t})\right\vert^2}{|\tau|}
\frac{\dim_\rho\left\vert \widehat{\B_{y,z}}(\rho_{i,j})\right\vert^2}{|\rho|} \right] \geq \alpha.
\end{equation*} 
This shows that our proposed strategy produces a good randomised assignment for $\Sigma$ and completes the soundness proof.

\section{Proof of~\Cref{th:main}} \label{sec:proof-main}

\subsection{Completeness} \label{sec:completeness}

\begin{proof}[Proof of~\Cref{th:completeness}]
Let $f_U:U \to D$, $f_V: V\to E$ be assignments satisfying the original GLC instance $\Sigma$ - that is to say, these assignments are such that $\pi_{uv}(f_U(u)) = f_V(v)$ for each edge $\pi_{uv}$. From these, we construct the assignments $A_v: \gr_1^E \to \gr_1$ and $B_u: \gr_1^D \to \gr_1$ by setting $A_v$ to be the $f_V(v)$-th projection for each $v\in V$, and $B_u$ to be the $f_U(u)$-th projection for each $u\in U$. Observe that the projections are folded over $\mathrm{id}_{\sgr_1}$, so $A_v= (A_v)_{\mathrm{id}_{\sgr_1}}$. Moreover, note that, for a projection function $B:\gr^D\to \gr$, $\bb \in \gr^D$, and $s \in \{s_1,s_2\}$, we have $B(\bb^s)^s=B(\bb)$.
Then, for each edge $\{u,v\}$, each $\tuple{a}\in \gr_1^E$, $\tuple{b}, \bm{\nu}\in \gr^D_1$
and $\tuple{c}=\tuple{b}^{-1} (\tuple{a} \circ \pi_{uv})^{-1}\bm{\nu}$ we have
\begin{align*}
    (A_v(\ba))_{\mathrm{id}_{\sgr_1}} B_u(\bb^{s_1})^{s_1} B_u(\bc^{s_2})^{s_2} & = \ba(f_V(v) ) \bb (f_U(u)) \bb^{-1}(f_U(u)) (\ba \circ \pi_{uv})^{-1}(f_U(u)) \bm{\nu}(f_U(u))\\
    & = \ba(\pi_{uv}(f_U(u)) )  (\ba(\pi_{uv}(f_U(u)) ))^{-1} \bm{\nu}(f_U(u)) = \bm{\nu}(f_U(u)).
\end{align*}

When all parameters are chosen according to the probabilities in~\Cref{fig:probs-reduction}, this expression equals $\groupid_{\gr_1}$ with probability at least $1-\epsilon$. This is because the probability that a given coordinate of $\bm{\nu}$ equals $1_{\gr_1}$ is at least $1-\epsilon$. Therefore, using~\Cref{eq:payoff_gr1} we obtain $\Phi^{\gr_1}_\Sigma(A,B)\geq 1-\epsilon$. 
\end{proof}

\subsection{Soundness}
\label{sec:soundness}

This section is dedicated to the proof of~\Cref{th:soundness}. By assumption, there are families $A, B$
satisfying that
$\Phi_\Sigma^{\gr_2}(A,B)  \geq \frac{1}{|\sgr_2|} + \delta$. In other words, using~\Cref{eq:payoff_gr2},
\[
\ex{\substack{uv, \tuple{a},\tuple{b}, \\ \bm{\nu},s_1,s_2}} \Big[ 
\llbracket (A_v)_{\varphi}(\tuple{a}) B_u(\tuple{b}^{s_1})^{s_1} B_u(\tuple{c}^{s_2})^{s_2}    = \groupid_{\gr_2} \rrbracket 
\Big]  \geq \frac{1}{|\sgr_2|} + \delta,
\]
where $\tuple{c}$ stands for
$\tuple{b}^{-1} (\tuple{a} \circ \pi_{uv})^{-1}\bm{\nu}$ and the expectation is taken over the probabilities defined in~\Cref{fig:probs-reduction}.
Our goal is then to construct assignments of the original GLC instance $\Sigma$ that satisfy at least an $\alpha=\delta^2/(4\kappa|\gr_1|^\kappa|\gr_2|^{4})$-fraction of its constraints, where $\kappa=\lceil(\log_2\delta -2)/(\log_2(1-\epsilon))\rceil$.
Define $z:= (A_v)_{\varphi}(\ba)B_u(\bb^{s_1})^{s_1}B_u(\bc^{s_2})^{s_2}$.  Combining~\Cref{le:sum-dim-char} and~\Cref{le:subrepresentations_restriction}, we obtain 
\[ \sum_{\gamma \in \widehat{\gr_2}} \left(\dim_\gamma \chi_{\gamma}(z) - \dim_\gamma \langle \chi_\gamma, \chi_{R_{\gr_2,\sgr_2}} \rangle \right) =\begin{cases}
    |\gr_2| - \frac{|\gr_2|}{|\sgr_2|}  \quad \text{if } z=\groupid_{\gr_2}\\
    - \frac{|\gr_2|}{|\sgr_2|} \quad \text{otherwise.}
\end{cases}\]
Then, we have 
\begin{align*}
 \ex{\substack{uv, \tuple{a},\tuple{b}\\ \bm{\nu},s_1,s_2}}   &
\Big[ 
\sum_{\gamma \in \widehat{\gr_2}} \dim_\gamma \chi_{\gamma}(z) - 
\dim_\gamma \langle \chi_\gamma, \chi_{R_{\gr_2,\sgr_2}} \rangle \Big] \\ = &
\left(|\gr_2| - \frac{|\gr_2|}{|\sgr_2|}\right)
\Pr\left( z = 1_{\gr_2}\right)
- \frac{|\gr_2|}{|\sgr_2|} 
\Pr\left( z \neq 1_{\gr_2}\right)  \\  \geq &
\left(|\gr_2| - \frac{|\gr_2|}{|\sgr_2|}\right) \left( \frac{1}{|\sgr_2|} +\delta \right)
- \frac{|\gr_2|}{|\sgr_2|} 
\left( 1 -  \frac{1}{|\sgr_2|} -\delta \right) \\  = &
\left(
\frac{|\gr_2|}{|\sgr_2|} -
\frac{|\gr_2|}{|\sgr_2|^2} +
|\gr_2|\delta -
\frac{|\gr_2|}{|\sgr_2|} \delta
\right)
-
\left(
\frac{|\gr_2|}{|\sgr_2|} -
\frac{|\gr_2|}{|\sgr_2|^2} -
\frac{|\gr_2|}{|\sgr_2|} \delta
\right) =  
|\gr_2|\delta.
\end{align*}
Hence
\[
\ex{\substack{uv,\tuple{a},\tuple{b}\\ \bm{\nu},s_1,s_2}}  
\Big[ 
\sum_{\gamma \in \widehat{\gr_2}} \dim_\gamma \chi_{\gamma}(z) 
 \Big] \geq
|\gr_2|\delta +
\sum_{\gamma \in \widehat{\gr_2}}
\dim_\gamma \langle \chi_\gamma, \chi_{R_{\gr_2,\sgr_2}}\rangle.
\]
Using that $|\gr_2| = \sum_{\gamma\in \widehat{\gr_2}} \dim_\gamma^2$, by an averaging argument there is some $\omega\in \widehat{\gr_2}$ satisfying
\begin{equation}\label{eq:chosen-rep}
\left| \ex{\substack{uv,\tuple{a},\tuple{b}\\ \bm{\nu},s_1,s_2}}\left[ \chi_{\omega}\left((A_v)_{\varphi}(\ba)B_u(\bb^{s_1})^{s_1}B_u(\bc^{s_2})^{s_2}) \right) \right] \right| \geq  \dim_\omega\delta 
+ \langle \chi_\omega, \chi_{R_{\gr_2,\sgr_2}} \rangle.
\end{equation}
Observe that such $\omega$ cannot be trivial. Indeed, if $\omega$ is trivial, then the left-hand-side of the above inequality equals $1$, whereas the right-hand-side is strictly greater than $1$. This is because
$\langle \chi_\omega, \chi_{R_{\gr_2,\sgr_2}} \rangle$ equals the multiplicity of the trivial representation in $\omega\vert_{\sgr_2}$, which is one when $\omega$ is trivial itself.

For the rest of this section, we fix such $\omega\in \widehat{\gr_2}$.
We define the short-hands 
\begin{equation} \label{eq:shorthands}
    \A(\ba):= \omega \circ (A_v)_\varphi(\ba), \quad \quad \B(\bb):= \ex{s} [ ((\omega \circ B_u)(\bb^s))^s],
\end{equation}
where $\ba \in \gr_1^E$, $\bb \in \gr_1^D$, and $s$ is sampled uniformly at random from $\{-1,1\}$. Observe that $\A$ depends on the hidden parameter $v$, and $\B$ depends on $u$. Using this notation, and additionally leaving the dependence of $\pi$ on $\{u,v\}$ implicit, we can rewrite \eqref{eq:chosen-rep} as 
\begin{equation}
    \label{eq:chosen-rep-v2}
\left| \ex{\substack{uv,\\ \tuple{a}, \bm{\nu}}}\left[ \tr\left( \A(\ba)(\B*\B)((\ba\circ\pi)^{-1} \bm{\nu}) \right) \right] \right| \geq  \dim_\omega\delta 
+ \langle \chi_\omega, \chi_{R_{\gr_2,\sgr_2}} \rangle.    
\end{equation}
This inequality is the starting point for showing our soundness bound. We briefly describe the proof strategy for~\Cref{th:soundness}. We consider the Fourier series of $\A$ and $(\B*\B)$ in the left-hand-side of \eqref{eq:chosen-rep-v2}. We bound the contribution of terms corresponding to the trivial representation, and terms corresponding to representations 
of high degree. 
This is achieved in~\Cref{le:soundness-aux-1} and~\Cref{le:EHR-23} respectively. Using these results, we show that the contribution of representations 
of low degree is bounded away from zero. Finally, using this fact we are able to construct an assignment to the original GLC instance $\Sigma$ attaining the desired value. This is done in~\Cref{le:EHR-25}.\par

Let us state our main auxiliary results.

\begin{lemma}
\label{le:soundness-aux-1}
Let $\omega$ be as in \eqref{eq:chosen-rep} and $\A$ and $\B$ as in \eqref{eq:shorthands}. Then
\[
\left| \ex{\substack{uv\\ \tuple{a}, \bm{\nu}}}\left[ \tr\left( \widehat{\A}(1)(\B*\B)((\ba\circ\pi)^{-1} \bm{\nu}) \right) \right] \right| \leq  \langle \chi_\omega, \chi_{R_{\gr_2,\sgr_2}} \rangle.  \]
And, in consequence,
\[
\left| \ex{\substack{uv \\ \tuple{a}, \bm{\nu}}}\left[
\tr\left(
\left(
\sum_{\tau\in \widehat{\gr_1^E}, \tau\neq 1
}
\sum_{
s,t\in N_\tau
}
\dim_\tau\widehat{\A}(\tau_{s,t}) \tau_{s,t}(\tuple{a}) \right)(\B*\B)((\ba\circ\pi)^{-1} \bm{\nu}) \right) \right] \right| \geq  \dim_\omega \delta.  \]
\end{lemma}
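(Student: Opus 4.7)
The plan is to establish the first inequality by exploiting the folding of $(A_v)_\varphi$ over $\varphi$, and then deduce the second one by combining it with~\eqref{eq:chosen-rep-v2} via Fourier inversion and the triangle inequality. Since $\widehat{\A}(1) = \ex{\tuple{a}}[\omega((A_v)_\varphi(\tuple{a}))]$, the folding identity $(A_v)_\varphi(h\tuple{a}) = \varphi(h)(A_v)_\varphi(\tuple{a})$ for $h \in \sgr_1$ gives $\widehat{\A}(1) = \omega(\varphi(h))\,\widehat{\A}(1)$ for every such $h$, after a change of variables inside the average. Averaging this identity over $h \in \sgr_1$, and using that $\varphi$ maps $\sgr_1$ surjectively onto $\sgr_2$, we obtain
\[ \widehat{\A}(1) \;=\; P\,\widehat{\A}(1), \qquad P \;:=\; \frac{1}{|\sgr_2|}\sum_{g\in \sgr_2}\omega(g). \]
A direct calculation using unitarity of $\omega$ yields $P = P^* = P^2$, so $P$ is an orthogonal projection, and hence $\|P\|_1 = \tr(P) = \frac{1}{|\sgr_2|}\sum_g \chi_\omega(g)$, which is the multiplicity of the trivial representation of $\sgr_2$ in $\omega\vert_{\sgr_2}$; by~\Cref{le:multiplicity_trivial_restriction} this equals $\langle \chi_\omega, \chi_{R_{\gr_2,\sgr_2}}\rangle$.

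For fixed $u,v$, let $M_{uv} := \ex{\tuple{a},\bm{\nu}}[(\B*\B)((\tuple{a}\circ\pi)^{-1}\bm{\nu})]$, which depends only on $u$ (through $\B$) and on $v$ (through $\pi$). Since $\omega$ is unitary, every matrix $\omega(g)$ has operator norm $1$; hence $\widehat{\A}(1)$ (an average of such matrices), each value $\B(\tuple{b})$ (an average of unitaries and their inverses), and ultimately $M_{uv}$ (via submultiplicativity of the operator norm applied to $\B*\B$, then convexity under the expectation) all have operator norm at most $1$. Using $\widehat{\A}(1) = P\widehat{\A}(1)$, cyclicity of the trace, and the matrix H\"older inequality $|\tr(XY)| \leq \|X\|_1 \|Y\|_{\mathrm{op}}$, we get for every $uv$
\[ \bigl|\tr(\widehat{\A}(1) M_{uv})\bigr| \;=\; \bigl|\tr(P\widehat{\A}(1) M_{uv})\bigr| \;\leq\; \|P\|_1 \cdot \|\widehat{\A}(1)\|_{\mathrm{op}} \cdot \|M_{uv}\|_{\mathrm{op}} \;\leq\; \langle \chi_\omega, \chi_{R_{\gr_2,\sgr_2}}\rangle. \]
Taking expectation over $uv$ and pulling the absolute value outside establishes the first inequality.

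For the second inequality, apply Fourier inversion (\Cref{th:orthogonal_representations}) entrywise to $\A$, splitting off the $\tau = 1$ term: $\A(\tuple{a}) = \widehat{\A}(1) + \sum_{\tau \neq 1,\, s,t} \dim_\tau \widehat{\A}(\tau_{s,t})\tau_{s,t}(\tuple{a})$. Substituting into the left-hand side of~\eqref{eq:chosen-rep-v2} splits it as $T_1 + T_2$, where $T_1$ is the trivial contribution (bounded by $\langle \chi_\omega, \chi_{R_{\gr_2,\sgr_2}}\rangle$ by the first part) and $T_2$ is the non-trivial contribution. Since~\eqref{eq:chosen-rep-v2} gives $|T_1 + T_2| \geq \dim_\omega\delta + \langle \chi_\omega, \chi_{R_{\gr_2,\sgr_2}}\rangle$, the reverse triangle inequality yields $|T_2| \geq \dim_\omega\delta$, as desired.

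The main obstacle is identifying the correct projection $P$ and verifying that folding forces $\widehat{\A}(1)$ to factor through it. This is precisely where the promise-setting folding over $\varphi$ is weaker than the classical folding of~\cite{EHR04:tcs}: rather than $\widehat{\A}(1) = 0$, we only obtain that $\widehat{\A}(1)$ lies in an $\eta_\omega$-dimensional subspace, and this residual is exactly what the extra additive $\eta_\omega$ in the choice of $\omega$ at~\eqref{eq:chosen-rep} was engineered to absorb.
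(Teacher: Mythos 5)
Your proof is correct and follows essentially the same approach as the paper: both derive $\widehat{\A}(1)=P\widehat{\A}(1)$ from folding over $\varphi$ and identify $\tr(P)$ with $\langle\chi_\omega,\chi_{R_{\gr_2,\sgr_2}}\rangle$ via Frobenius reciprocity, then split off the trivial Fourier term and apply the triangle inequality. The only difference is presentational — the paper explicitly diagonalizes $P=U^*DU$ and bounds the trace by inspecting diagonal entries of a unitary product, whereas you invoke the matrix H\"older bound $|\tr(XY)|\le\|X\|_1\|Y\|_{\mathrm{op}}$; these are equivalent estimates.
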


\begin{lemma} \label{le:EHR-23}
Let $\omega$ be as in \eqref{eq:chosen-rep} and $\A$ and $\B$ as in \eqref{eq:shorthands}. Then

\begin{align}
& \nonumber  \left| \ex{uv,\ba} \left[ 
     \tr \left(
    \left(
    \sum_{\tau\in \widehat{\gr_1^E}, \tau\neq 1}
    \sum_{s,t\in N_\tau}
\dim_\tau
\widehat{\A}(\tau_{s,t}) \tau_{s,t}(\tuple{a}) \right) \times
\right. \right. \right. 
\\ 
& \label{eq:le23_main}
\left. \left. \left.
\left(
\sum_{\rho \in \widehat{\gr_1^D}, |\rho| \geq \kappa}
\sum_{i,j \in N_\rho}
\dim_\rho (1-\epsilon)^{|\rho|} \widehat{(\B*\B)}(\rho_{i,j}) \rho_{i,j}( (\tuple{a} \circ \pi)^{-1})  \right) \right) \right] \right| \leq (\dim_\omega \delta)/2
\end{align}
for any $\kappa\geq (\log_2 \delta - 2)/\log_2(1-\epsilon)$.
\end{lemma}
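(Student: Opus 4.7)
My plan is to combine Cauchy--Schwarz with the noise contraction $(1-\epsilon)^{|\rho|}$ to bound the high-degree contribution. Since the hypothesis $\kappa\geq (\log_2\delta-2)/\log_2(1-\epsilon)$ gives $(1-\epsilon)^\kappa\leq\delta/4$, and since $|\rho|\geq\kappa$ throughout the sum, the factor $(1-\epsilon)^{|\rho|}$ is uniformly at most $(1-\epsilon)^\kappa$ on the surviving terms. The target bound $(\dim_\omega\delta)/2$ will follow if the Cauchy--Schwarz-style estimate produces $\dim_\omega\cdot (1-\epsilon)^\kappa\cdot O(1)$, with the ``$O(1)$'' constant independent of $|D|$ and $|E|$.

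Concretely, I would first apply $|\tr(MN)|\leq\|M\|_F\|N\|_F$ pointwise in $\ba$ to the expression $\ex_{\ba}[\tr(S(\ba)\,T(\ba))]$, where $S(\ba)$ denotes the nontrivial-$\tau$ Fourier truncation of $\A(\ba)$ and $T(\ba)$ denotes the high-degree noised truncation of $(\B*\B)$ evaluated at $(\ba\circ\pi)^{-1}$, and then Cauchy--Schwarz in the $\ba$-expectation to obtain
\[
|\mathrm{LHS}|\;\leq\;\sqrt{\ex_{uv,\ba}[\|S(\ba)\|_F^2]}\,\sqrt{\ex_{uv,\ba}[\|T(\ba)\|_F^2]}.
\]
Since $\A=\omega\circ (A_v)_\varphi$ takes values in unitary $\dim_\omega\!\times\!\dim_\omega$ matrices, we have $\|\A(\ba)\|_F^2 = \dim_\omega$ pointwise, so Parseval gives $\ex_\ba[\|S(\ba)\|_F^2]\leq \ex_\ba[\|\A(\ba)\|_F^2]=\dim_\omega$.

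For the second factor, the strategy is to Fourier-expand $T$ on $\gr_1^E$ using the identity $\rho_{i,j}((\ba\circ\pi)^{-1})=\overline{\rho^\pi_{j,i}(\ba)}$, then apply Parseval on $\gr_1^E$. This converts $\ex_\ba[\|T(\ba)\|_F^2]$ into a sum weighted by $\dim_\rho^2(1-\epsilon)^{2|\rho|}|\widehat{(\B*\B)}(\rho_{i,j})|^2$ together with inner products among matrix entries of the representations $\rho^\pi$. Using \Cref{le:similar_representations}\,(2), the only $\tau\in\widehat{\gr_1^E}$ that contribute to the Fourier expansion of each $\rho^\pi_{j,i}$ are those satisfying $\tau\sim_\pi\rho$, and by \Cref{le:similar_representations}\,(1) these also satisfy $|\tau|\leq|\rho|$. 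Combining these orthogonality constraints with the uniform bound $(1-\epsilon)^{2|\rho|}\leq (1-\epsilon)^{2\kappa}$ and Parseval applied to $\B*\B$ (bounded via $\|(\B*\B)(\bb)\|_{\mathrm{op}}\leq 1$), one obtains $\ex_\ba[\|T(\ba)\|_F^2]\leq (1-\epsilon)^{2\kappa}\dim_\omega$. Together with the first factor, this yields $|\mathrm{LHS}|\leq \dim_\omega(1-\epsilon)^\kappa\leq \dim_\omega\delta/4\leq (\dim_\omega\delta)/2$.

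\textbf{Main obstacle.} The chief technical difficulty is showing the bound on $\ex_\ba[\|T(\ba)\|_F^2]$ without incurring factors that depend on $|D|$ or $|E|$. The naive attempt to replace the $\ba$-expectation with the uniform expectation over $\bb\in\gr_1^D$ (and then apply Parseval directly on $\gr_1^D$) fails because $(\ba\circ\pi)^{-1}$ is concentrated on the subset of $\gr_1^D$ consisting of functions constant on the fibers of $\pi$, a set of density $|\gr_1|^{-(|D|-|\pi(D)|)}$; such a substitution introduces an explosive multiplicative factor $|\gr_1|^{|D|-|\pi(D)|}$. The resolution lies in a careful bookkeeping through \Cref{le:similar_representations}: cross-terms between distinct $\rho,\rho'\in\widehat{\gr_1^D}$ contribute only through shared sub-representations of $\rho^\pi$ and $\rho'^\pi$ in $\widehat{\gr_1^E}$, and the combined weight $(1-\epsilon)^{|\rho|+|\rho'|}$ together with the $\sim_\pi$ constraint compensates any combinatorial overcounting, so only the noise factor $(1-\epsilon)^{2\kappa}$ survives in the final bound.
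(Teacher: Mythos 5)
Your overall strategy (Cauchy--Schwarz plus noise contraction) is genuinely different from the paper's, and it founders precisely at the step you flag as the main obstacle. The paper avoids the issue entirely by establishing a bound that is \emph{pointwise in $\ba$}: for each fixed $\bg,\ba$ and each $\rho$, one shows
\[
\Big| \tr\Big(\A(\bg)\sum_{i,j}\widehat{(\B*\B)}(\rho_{i,j})\rho_{i,j}((\ba\circ\pi)^{-1})\Big)\Big| \;\le\; \tr\Big(\sum_{i}\widehat{(\B*\B)}(\rho_{i,i})\Big),
\]
with the right-hand side independent of $\ba$ and non-negative. This rests on exhibiting $\frac{1}{|\gr_1^D|}\sum_{\bh}(\B*\B)(\bh)\otimes\overline{\rho(\bh)}$ as the square of a Hermitian matrix (hence PSD), which in turn uses the skew-symmetry of $\B$ proved in \Cref{le:EHR-15}; multiplying a PSD matrix by unitaries cannot increase the trace in absolute value. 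Summing over $\rho$ with $|\rho|\ge\kappa$ and using $\sum_\rho\dim_\rho\tr\big(\sum_i\widehat{(\B*\B)}(\rho_{i,i})\big)=\tr\big((\B*\B)(\groupid)\big)\le\dim_\omega$ then gives the stated bound uniformly over $\ba$, so the expectation over $\ba$ is trivial.

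Your route instead needs $\ex_\ba\big[\|T(\ba)\|_F^2\big]\le(1-\epsilon)^{2\kappa}\dim_\omega$, and the sketch of how to get it is not a proof. After Parseval on $\gr_1^E$, you are left with the Gram inner products $\langle\rho^\pi_{j,i},\rho'^\pi_{j',i'}\rangle_{\gr_1^E}$ across \emph{different} $\rho,\rho'\in\widehat{\gr_1^D}$, and \Cref{le:similar_representations} does not make these vanish or even small: that lemma constrains the overlap of $\rho^\pi$ with an \emph{irreducible} $\tau$ of $\gr_1^E$, but $\rho^\pi$ and $\rho'^\pi$ are generally reducible and can share sub-representations even when $\rho\neq\rho'$. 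Concretely, if $\pi$ is constant, every $\rho^\pi$ factors through a single coordinate of $\ba$, so many distinct $\rho,\rho'$ produce identical functions $\rho^\pi_{j,i}=\rho'^\pi_{j',i'}$ and the cross-terms are as large as the diagonal ones; a scalar toy computation on $\mathbb{Z}_2^{\{1,2\}}$ with $\pi$ constant already shows the restricted expectation $\ex_\ba[|W((\ba\circ\pi)^{-1})|^2]$ strictly exceeding $\ex_\bb[|W(\bb)|^2]$. Nothing in the claim ``the $\sim_\pi$ constraint compensates any combinatorial overcounting'' explains why those cross-terms net out to the required bound, and I do not believe they do in general. What \emph{is} true pointwise (and is implicit in the paper's argument) is a trace-norm bound $\|W(\bb)\|_1\le(1-\epsilon)^\kappa\dim_\omega$; but plugging this into your Cauchy--Schwarz only yields $\ex_\ba[\|T\|_F^2]\le(1-\epsilon)^{2\kappa}\dim_\omega^2$, giving a final bound of order $(1-\epsilon)^\kappa\dim_\omega^{3/2}$, which is not $\le \dim_\omega\delta/2$ under the stated hypothesis $\kappa\ge(\log_2\delta-2)/\log_2(1-\epsilon)$. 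So as written the proposal has a genuine gap at the $\ex_\ba[\|T\|_F^2]$ step, and the natural fix pushes you back to the paper's PSD/trace argument.
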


\begin{lemma}
\label{le:EHR-25}
Let $\omega$ be as in \eqref{eq:chosen-rep} and $\A$ and $\B$ as in \eqref{eq:shorthands}.
Suppose that for some $\kappa>0, \xi>0$ it holds that
    \begin{align*}&
     \left| \ex{uv,\ba}\left[ \tr \left(
    \left( \sum_{\tau\in \widehat{\gr_1^E}, \tau\neq 1}
    \sum_{
s,t\in N_\tau
}
\dim_\tau
\widehat{\A}(\tau_{s,t}) \tau_{s,t}(\tuple{a}) \right) \right.\right.\right. \times &\\ &  
\left.\left.\left.
\left(\sum_{\rho\in \widehat{\gr_1^D}, |\rho| < \kappa}
\sum_{i,j\in N_\rho} \dim_\rho (1-\epsilon)^{|\rho|} \widehat{(\B*\B)}(\rho_{i,j}) \,
\rho_{i,j}((\ba\circ\pi)^{-1})
\right) \right)\right]\right|\geq \xi.   &  
\end{align*}
    Then there is an assignment of the original GLC instance satisfying a
    $\xi^2\kappa^{-1} |\gr_1|^{-\kappa} \dim_{\omega}^{-6}$-fraction of the constraints. 
\end{lemma}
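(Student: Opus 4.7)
The plan is to read off a randomised labeling of $\Sigma$ from the Fourier coefficients of $\A$ and $\B$. For parameters $x,y,z \in N_\omega$ to be chosen later, I would define, for each $v \in V$, $e \in E$,
\[
p_v(e) := \sum_{\substack{\tau \in \widehat{\gr_1^E},\ \tau^e \neq \triv \\ |\tau| < \kappa,\ s,t \in N_\tau}} \frac{\dim_\tau \left|\widehat{\A_{x,y}}(\tau_{s,t})\right|^2}{|\tau|},
\]
and, analogously, $p_u(d)$ using $\B_{y,z}$ and $\rho$ in place of $\A_{x,y}$ and $\tau$. Because $\omega$ is unitary, $|\A_{x,y}|, |\B_{y,z}| \leq 1$ pointwise, so Plancherel gives $\sum_e p_v(e), \sum_d p_u(d) \leq 1$. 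The randomised strategy labels each vertex according to these sub-probabilities, completing to genuine distributions with the residual mass placed arbitrarily; its expected satisfaction probability is then at least $\ex{uv}\bigl[\sum_d p_v(\pi_{uv}(d))\,p_u(d)\bigr]$, which is what I aim to lower bound.

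Starting from the left-hand side $T(uv) := \ex{\ba}[\tr(\cdots)]$ of the hypothesis, I would expand the trace as $\sum_{x,y} (M_1)_{x,y}(M_2)_{y,x}$, apply \Cref{lem:fourier_convolution} to rewrite $(\widehat{\B*\B}(\rho_{i,j}))_{y,x} = \sum_{z,k} \widehat{\B_{y,z}}(\rho_{i,k}) \widehat{\B_{z,x}}(\rho_{k,j})$, and push $\ex{\ba}$ inside to obtain the inner products $\langle \tau_{s,t}, \rho^\pi_{j,i} \rangle_{\gr_1^E}$ (using that $\rho_{i,j}((\ba\circ\pi)^{-1}) = \overline{\rho^\pi_{j,i}(\ba)}$ by unitarity). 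By \Cref{le:similar_representations}\,(2), these vanish unless $\tau \sim_\pi \rho$, in which case \Cref{le:similar_representations}\,(1) forces $|\tau| \leq |\rho| < \kappa$. Grouping the resulting sum by $(x,y,z) \in N_\omega^3$ writes $T(uv) = \sum_{x,y,z} T'_{x,y,z}(uv)$.

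The central step is a Cauchy--Schwarz estimate relating $|T'_{x,y,z}(uv)|^2$ to the Fourier mass
\[
Q_{x,y,z}(uv) := \sum_{\substack{\tau \sim_\pi \rho,\ \tau \neq \triv,\ |\rho|<\kappa \\ s,t,i,j}} \dim_\tau \dim_\rho \left|\widehat{\A_{x,y}}(\tau_{s,t})\right|^2 \left|\widehat{\B_{y,z}}(\rho_{i,j})\right|^2.
\]
I would split the summation index $(\tau,s,t,\rho,i,j,k)$ so that $\widehat{\A_{x,y}}(\tau_{s,t})\,\widehat{\B_{y,z}}(\rho_{i,k})$ sits in one Cauchy--Schwarz factor and $(1-\epsilon)^{|\rho|}\widehat{\B_{z,x}}(\rho_{k,j})\langle \tau_{s,t}, \rho^\pi_{j,i}\rangle$ in the other, both weighted by $(\dim_\tau \dim_\rho)^{1/2}$. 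The first factor, after summing out the dummy $j$-index, is at most $(\max_{|\rho|<\kappa}\dim_\rho) \cdot Q_{x,y,z}(uv)$. The second factor is bounded by Plancherel applied to $\rho^\pi_{j,i}$ (whose squared $L^2$-norm is $\leq 1$) and then by Plancherel applied to $\B_{z,x}$ (whose squared norm is $\leq 1$), with another $\max_{|\rho|<\kappa}\dim_\rho$ absorbed from the $i$-sum. Using $\dim_\rho \leq |\gr_1|^{|\rho|/2} \leq |\gr_1|^{\kappa/2}$, this yields $|T'_{x,y,z}(uv)|^2 \leq |\gr_1|^{\kappa}\,Q_{x,y,z}(uv)$.

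To finish, I would verify that $\sum_d p_v(\pi_{uv}(d))\,p_u(d) \geq Q_{x,y,z}(uv)/\kappa$ for every $uv$: for each valid $(\tau,\rho)$, the defining property of $\sim_\pi$ produces at least $|\tau|$ pairs $(d, e=\pi_{uv}(d))$ with $\tau^e \neq \triv$ and $\rho^d \neq \triv$, which cancels the $1/|\tau|$ factor in $p_v(e)$, and the residual $1/|\rho|$ is at least $1/\kappa$. Combining Jensen's inequality ($\ex{uv}[|T(uv)|^2] \geq \xi^2$) with $|T(uv)|^2 \leq \dim_\omega^3 \sum_{x,y,z}|T'_{x,y,z}(uv)|^2 \leq \dim_\omega^3 |\gr_1|^\kappa \sum_{x,y,z} Q_{x,y,z}(uv)$, and then averaging over the $\dim_\omega^3$ triples, produces a single $(x,y,z)$ for which $\ex{uv}[Q_{x,y,z}(uv)] \geq \xi^2/(\dim_\omega^6 |\gr_1|^\kappa)$, and hence whose associated randomised labeling achieves expected quality at least $\xi^2/(\kappa \dim_\omega^6 |\gr_1|^\kappa)$, as required. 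I expect the hardest part to be finding the right Cauchy--Schwarz split in the $T'_{x,y,z}$ estimate: the two $\widehat{\B}$ factors from $\B*\B$ and the coupling $\langle \tau_{s,t}, \rho^\pi_{j,i}\rangle$ must be balanced so that exactly one $\widehat{\B_{y,z}}$ survives squared on the $Q_{x,y,z}$ side, while the other $\widehat{\B_{z,x}}$ and the inner products are absorbed by Plancherel with only a $|\gr_1|^{\kappa/2}$ penalty each.
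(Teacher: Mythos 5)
Your proposal is correct and follows essentially the same argument as the paper's own proof: the same Cauchy--Schwarz split isolating $\widehat{\A_{x,y}}\widehat{\B_{y,z}}$ from $\widehat{\B_{z,x}}\langle\tau_{s,t},\rho^\pi_{j,i}\rangle$, the same Plancherel bounds, the same use of $\tau\sim_\pi\rho$ to kill cross terms and to charge the $1/|\tau|$ against the count of labels $d$, and the same accounting for the $\kappa^{-1}$ and $|\gr_1|^{-\kappa}$ factors; the only structural variation is that you apply Jensen and then average over $(x,y,z)$ at the end, whereas the paper extracts $x,y$ and then $z$ by averaging before squaring, and both give the same $\dim_\omega^{-6}$. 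One small imprecision worth flagging: in the second Cauchy--Schwarz factor the $i$-sum yields exactly $\sum_i\|\rho^\pi_{j,i}\|^2=1$, not an extra $\max\dim_\rho$ -- the needed $|\gr_1|^\kappa$ already comes entirely from summing out the dummy $j$ index in the first factor (via $|N_\rho|\le|\gr_1|^{|\rho|}\le|\gr_1|^{\kappa}$, as the paper does), so your stated bound $|T'_{x,y,z}|^2\le|\gr_1|^\kappa Q_{x,y,z}$ is correct, indeed a bit loose, and the final conclusion is unaffected.
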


Having stated these auxiliary lemmas, we are prepared to prove our soundness result. 

\begin{proof}[Proof of~\Cref{th:soundness}]
Let $\omega$ be as in \eqref{eq:chosen-rep} and $\A$ and $\B$ as in \eqref{eq:shorthands}. By~\Cref{le:soundness-aux-1}, 
\[
\left| \ex{\substack{uv \\ \tuple{a}, \bm{\nu}}}\left[
\tr\left(
\left(
\sum_{\tau\in \widehat{\gr_1^E}, \tau\neq 1}
\sum_{s,t\in N_\tau}
\widehat{\A}(\tau_{s,t}) \tau_{s,t}(\tuple{a}) \right)(\B*\B)((\ba\circ\pi)^{-1} \bm{\nu}) \right) \right] \right| \geq  \dim_\omega \delta.  \]
Applying the Fourier inversion formula to $W(\tuple{a})= \ex{\bm{\nu}}[(\B*\B)((\ba\circ\pi)^{-1}\bm{\nu})]$ and using~\Cref{le:epsilon-noise}, the above inequality can be rewritten as
\begin{align*}&
     \left| \ex{uv,\ba}\left[ \tr \left(
    \left( \sum_{\tau\in \widehat{\gr_1^E}, \tau\neq 1}
    \sum_{
s,t\in N_\tau
}
\dim_\tau
\widehat{\A}(\tau_{s,t}) \tau_{s,t}(\tuple{a}) \right) \right.\right.\right. \times &\\ &  
\left.\left.\left.
\left(\sum_{\rho\in \widehat{\gr_1^D}}
\sum_{i,j\in N_\rho} \dim_\rho (1-\epsilon)^{|\rho|} \widehat{(\B*\B)}(\rho_{i,j}) \,
\rho_{i,j}((\ba\circ\pi)^{-1})
\right) \right)\right]\right|\geq \dim_\omega \delta.&    \end{align*}
Applying~\Cref{le:EHR-23}, we obtain
    \begin{align*}&
     \left| \ex{uv,\ba}\left[ \tr \left(
    \left( \sum_{\tau\in \widehat{\gr_1^E}, \tau\neq 1}
    \sum_{
s,t\in N_\tau
}
\dim_\tau
\widehat{\A}(\tau_{s,t}) \tau_{s,t}(\tuple{a}) \right) \right.\right.\right. \times &\\ &  
\left.\left.\left.
\left(\sum_{\rho\in \widehat{\gr_1^D}, |\rho| < \kappa}
\sum_{i,j\in N_\rho} \dim_\rho (1-\epsilon)^{|\rho|} \widehat{(\B*\B)}(\rho_{i,j}) \,
\rho_{i,j}((\ba\circ\pi)^{-1})
\right) \right)\right]\right|\geq \dim_\omega \delta / 2,&    \end{align*}
where $\kappa=\lceil(\log_2 \delta - 2)/\log_2(1-\epsilon) \rceil$. 
This allows us to apply~\Cref{le:EHR-25} with $\xi = \dim_\omega \delta /2$, obtaining an assignment of the original GLC instance with value at least
$\delta^2/(4\kappa|\gr_1|^\kappa \dim_\omega^4)$. Using that $\dim_\omega\leq |\gr_2|$ we obtain the desired bound.
\end{proof}

The rest of the section is dedicated to proving~\Cref{le:soundness-aux-1},~\Cref{le:EHR-23}, and~\Cref{le:EHR-25}.

\begin{proof}[Proof of~\Cref{le:soundness-aux-1}]
    The second part of the statement follows from the first, by applying the Fourier inversion formula to $\A(\tuple{a})$ in \eqref{eq:chosen-rep-v2}. We prove the first part of the statement. \par
    The following chain of identities holds:
    \begin{align*}
    \widehat{\A}(1) & =
    \frac{1}{|\gr_1^E|}
    \sum_{\tuple{g}\in \gr_1^E}
    \A(\tuple{g}) =
    \frac{1}{|\gr_1^E|}
    \sum_{\tuple{g}\in \gr_1^E} \omega \circ (A_v)_\varphi(\tuple{g})
    =
        \frac{1}{|\gr_1^E|}
        \frac{1}{|\sgr_1|}
    \sum_{\tuple{g}\in \gr_1^E}
    \sum_{h\in \sgr_1}
    \omega \circ (A_v)_\varphi(h\tuple{g})
    \\ &=
        \frac{1}{|\gr_1^E|}
        \frac{1}{|\sgr_1|}
    \sum_{\tuple{g}\in \gr_1^E}
    \sum_{h\in \sgr_1}
    \omega \circ \varphi(h) \cdot
    \omega \circ (A_v)_\varphi(\tuple{g})
    \\ & =
    \frac{1}{|\gr_1^E|}
    \frac{1}{|\sgr_1|}
    \sum_{\tuple{g}\in \gr_1^E}
    \sum_{h\in \sgr_2}
    \sum_{h^\prime \in \sgr_1, \varphi(h^\prime)=h}
    \omega(h) \cdot
    \omega \circ (A_v)_\varphi(\tuple{g}) \\ & =
    \frac{1}{|\gr_1^E|}
    \frac{1}{|\sgr_1|}
    \sum_{\tuple{g}\in \gr_1^E}
    \sum_{h\in \sgr_2}
    |\mathrm{Ker}(\varphi)|
    \omega(h) \cdot
    \omega \circ (A_v)_\varphi(\tuple{g})
    \\ & =
    \frac{1}{|\gr_1^E|}
    \frac{1}{|\sgr_2|}
    \sum_{\tuple{g}\in \gr_1^E}
    \sum_{h\in \sgr_2}
    \omega(h) \cdot
    \omega \circ (A_v)_\varphi(\tuple{g})
     \\ & =
     \left(
         \frac{1}{|\sgr_2|}
      \sum_{h\in \sgr_2}
    \omega(h)
    \right)
    \frac{1}{|\gr_1^E|}
    \sum_{\tuple{g}\in \gr_1^E}
\A(\tuple{g}).
\end{align*}
Here the fourth equality uses the fact that $(A_v)_\varphi$ is folded over $\varphi$ and $\omega$ is a homomorphism. The second-to-last uses the fact that $|\sgr_2| =|\sgr_1|/|\mathrm{Ker}(\varphi)|$. 
    Let us analyse the matrix $F=\frac{1}{|\sgr_2|}
    \sum_{h\in \sgr_2}
    \omega(h)$.
    Firstly, observe that $F$ is Hermitian. Indeed, using the fact that $\omega$ is a unitary representation we obtain
    \[
    F^* = \frac{1}{|\sgr_2|}
    \sum_{h\in \sgr_2}
    \omega^*(h)= \frac{1}{|\sgr_2|}
    \sum_{h\in \sgr_2}
    \omega(h^{-1}) = F.    
    \]
    Because $F$ is Hermitian, there must be a unitary matrix $U$ and a diagonal matrix $D$ satisfying $F=U^* D U$. We show that $F$'s eigenvalues are $1$, with multiplicity  $\langle \chi_\omega, \chi_{R_{\gr_2,\sgr_2}} \rangle$, and $0$, with multiplicity $\dim_\omega - \langle \chi_\omega, \chi_{R_{\gr_2,\sgr_2}} \rangle$. For each $\rho \in \widehat{\sgr_2}$, let $n_\rho$ be the multiplicity of $\rho$ in $\omega\vert_{\sgr_2}$.    By~\Cref{le:complete_reducibility}, there is an invertible matrix $T$ such that
    $\omega(h)=T^{-1}(\bigoplus_{\rho\in \widehat{\sgr_2}} n_\rho \rho(h) )T$ for all $h\in \sgr_2$. This way,
    \[
    F= 
    T^{-1}\left(
    \bigoplus_{\rho\in \widehat{\sgr_2}} n_{\rho} 
    \frac{1}{|\sgr_2|}
    \sum_{h\in \sgr_2} \rho(h)
    \right) T.
    \]
    By~\Cref{cor:group-sum-zero},
    the matrix $ \frac{1}{|\sgr_2|}
    \sum_{h\in \sgr_2} \rho(h)$ is the zero matrix when $\rho\in \widehat{\sgr_2}$ is non-trivial, and equals $1$ (the one-dimensional identity matrix) when $\rho$ is the trivial representation. Hence, $\left(
    \bigoplus_{\rho\in \widehat{\sgr_2}} n_{\rho} 
    \frac{1}{\sgr_2}
    \sum_{h\in \sgr_2} \rho(h)
    \right)$ is a diagonal matrix containing $n_1$ ones across the diagonal, and $d_\omega- n_1$ zeroes. By~\Cref{le:multiplicity_trivial_restriction}, we know that $n_1=\langle \chi_\omega, \chi_{R_{\gr_2,\sgr_2}} \rangle$, showing the claim. \par

Our two claims together show that there is a unitary matrix $U$ such that $F= U^* D U$, where $D$ is the diagonal matrix whose first $\langle\chi_\omega, \chi_{R_{\gr_2,\sgr_2}} \rangle$ diagonal entries are ones and the rest are zeroes. Now we are prepared to prove the lemma. We have that
\begin{align*}
\left| \ex{\substack{uv \\ \tuple{a}, \bm{\nu}}}\left[ \tr\left( \widehat{\A}(1)(\B*\B)((\ba\circ\pi)^{-1} \bm{\nu}) \right) \right] \right| & =
\left|
\ex{\substack{uv, \tuple{a} \\ \tuple{b}, \tuple{g}, \bm{\nu}}} \left[ 
\tr\left(
U^* D U \A(\tuple{g}) \B(\tuple{b}) \B(\tuple{b}^{-1} (\tuple{a}\circ \pi)^{-1} \bm{\nu})
\right)
\right] \right| \\ & =
\left|
\ex{\substack{uv, \tuple{a} \\ \tuple{b}, \tuple{g}, \bm{\nu}}} \left[ 
\tr\left(
 D \Big( U \A(\tuple{g}) \B(\tuple{b}) \B(\tuple{b}^{-1} (\tuple{a}\circ \pi)^{-1} \bm{\nu}) U^* \Big)
\right) 
\right] \right|.
\end{align*}
Inside the last trace operator we have the product $D M$, where 
\[M = U \A(\tuple{g}) \B(\tuple{b}) \B(\tuple{b}^{-1} (\tuple{a}\circ \pi)^{-1} \bm{\nu}) U^*. \] 
The matrix $M$ is a product of unitary matrices, so it is itself a unitary matrix. The trace of $DM$ is the sum of the first $\langle \chi_\omega, \chi_{R_{\gr_2,\sgr_2}} \rangle$ diagonal elements of $M$, which have absolute value at most $1$, so $|\tr(DM)|\leq \langle \chi_\omega, \chi_{R_{\gr_2,\sgr_2}} \rangle$. This proves the result.   
 \end{proof}

The following property of $\B$ will be useful in the proof of~\Cref{le:EHR-23}.

\begin{lemma} \label{le:EHR-15}
    Let $\gr_1,\gr_2$ be finite groups and $\gamma \in \widehat{\gr_2}$ be a unitary representation. Then, for every function $F:\gr_1 \to \gr_2$, the function $H$ defined by $H(g) = \ex{s\in\{-1,1\}}( (\gamma\circ F)(g^s))^s$ is skew-symmetric. 
\end{lemma}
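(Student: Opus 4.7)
The plan is to unfold the expectation over $s\in\{-1,1\}$ explicitly and use unitarity to convert inverses of matrix values of $\gamma$ into Hermitian conjugates, after which the skew-symmetry condition $H(g^{-1})=H(g)^*$ will fall out by directly matching the two summands.

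More concretely, I would first write
\[
H(g) \;=\; \tfrac{1}{2}\bigl[(\gamma\circ F)(g) \;+\; ((\gamma\circ F)(g^{-1}))^{-1}\bigr],
\]
corresponding to $s=1$ and $s=-1$ respectively. Since $\gamma$ is a unitary representation, every matrix in its image satisfies $M^{-1}=M^*$, so the second term simplifies to $\gamma(F(g^{-1}))^*$, giving
\[
H(g) \;=\; \tfrac{1}{2}\bigl[\gamma(F(g)) \;+\; \gamma(F(g^{-1}))^*\bigr].
\]

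Now I would compute both sides of the skew-symmetry identity. On the one hand,
\[
H(g^{-1}) \;=\; \tfrac{1}{2}\bigl[\gamma(F(g^{-1})) \;+\; \gamma(F(g))^*\bigr],
\]
by substituting $g^{-1}$ for $g$ (and using $(g^{-1})^{-1}=g$). On the other hand, taking the Hermitian of $H(g)$ and using the fact that $(A+B)^*=A^*+B^*$ and $(A^*)^*=A$, we get
\[
H(g)^* \;=\; \tfrac{1}{2}\bigl[\gamma(F(g))^* \;+\; \gamma(F(g^{-1}))\bigr].
\]
These two expressions agree term by term, which establishes $H(g^{-1})=H(g)^*$ as required.

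There is no real obstacle here: the only nontrivial ingredient is unitarity of $\gamma$, which is exactly what lets us trade the outer inversion (coming from the exponent $s=-1$) for a Hermitian conjugate so that the two values of $s$ become symmetric partners of one another under the map $g\mapsto g^{-1}$. The averaging over $s$ is precisely what symmetrizes an otherwise asymmetric expression.
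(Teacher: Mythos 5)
Your proposal is correct and follows essentially the same route as the paper: unfold the average over $s\in\{-1,1\}$, use unitarity to replace the matrix inverse (from $s=-1$) with a Hermitian conjugate, and then match terms. The only cosmetic difference is that you first simplify $H(g)$ to $\tfrac12[\gamma(F(g)) + \gamma(F(g^{-1}))^*]$ and then check both sides of the identity, whereas the paper starts from $H(g^{-1})$ and manipulates it directly into $H(g)^*$; the underlying argument is identical.
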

\begin{proof} For every $g\in\gr_1$,
    \begin{align*}
     H(g^{-1}) & = \ex{s\in\{-1,1\}}( (\gamma\circ F)(g^{-s}))^s  \\ 
            & = \frac{1}{2} (\gamma\circ F)(g^{-1})  + \frac{1}{2} ( (\gamma\circ F)(g))^{-1}  \\ 
            & = \frac{1}{2} (((\gamma\circ F)(g^{-1}))^{-1})^*  + \frac{1}{2} ((\gamma \circ F)(g))^{*}  \\ 
            & = ( \ex{s\in\{-1,1\}}( (\gamma\circ F)(g^s))^s)^*  = H(g)^*.
    \end{align*}
\end{proof}

\begin{proof}[Proof of~\Cref{le:EHR-23}]
We state two auxiliary facts first. We shall show that for any $\rho\in \widehat{\gr_1^D}$
\begin{equation}
    \label{eq:le23_aux0}
    \tr\left(
\sum_{i\in N_\rho}
\widehat{(\B*\B)}(\rho_{i,i})
\right) \geq 0,
\end{equation}
meaning that the left-hand-side is a real non-negative number. 
We shall also show that for any $\tuple{g},\ba \in \gr_1^{E}$ and $\rho\in \widehat{\gr_1^D}$, it holds that
\begin{equation}
\label{eq:le23_aux1}
\left\vert \tr
\left(
\mathcal{C}(\tuple{g},\tuple{a},\rho)
\right)
\right\vert \leq 
\tr\left(
\sum_{i\in N_\rho}
\widehat{(\B*\B)}(\rho_{i,i})
\right),
\end{equation}
where the matrix $\mathcal{C}(\tuple{g},\tuple{a},\rho)$ is defined as 
\[\A(\tuple{g})
\sum_{i,j\in N_\rho}
\widehat{(\B*\B)}(\rho_{i,j})\rho_{i,j}((\ba\circ\pi)^{-1}).\]
Let us prove \eqref{eq:le23_main} assuming these facts. First, observe that
\[
\sum_{\substack{\tau\in \widehat{\gr_2^E}, \tau\neq 1\\
s,t\in N_\tau
}}
\dim_\tau
\widehat{\A}(\tau_{s,t}) \tau_{s,t}(\tuple{a}) = 
\A(\tuple{a}) - \frac{1}{|\gr_1^E|} \sum_{\tuple{g}\in \gr_1^E} \A(\tuple{g}).
\]
Hence,
\begin{align*}
     & 
    \left| 
    \tr \left(
    \left( \sum_{\substack{\tau\in \widehat{\gr_2^E}, \tau\neq 1\\
s,t\in N_\tau
}}
\dim_\tau
\widehat{\A}(\tau_{s,t}) \tau_{s,t}(\tuple{a}) \right)
\left(
\sum_{\substack{\rho \in \widehat{\gr_1^D}, |\rho| \geq \kappa\\ i,j \in N_\rho}} \dim_\rho (1-\epsilon)^{|\rho|} \widehat{(\B*\B)}(\rho_{i,j}) \rho_{i,j}( (\tuple{a} \circ \pi)^{-1})  \right) \right) \right|  \\
 \leq \ &
\left|
\tr\left(
\sum_{\substack{\rho \in \widehat{\gr_1^D}\\ |\rho| \geq \kappa}}
\dim_\rho (1-\epsilon)^{|\rho|}
\mathcal{C}(\tuple{a},\tuple{a},\rho))
\right)
\right| + 
\frac{1}{|\gr_1^E|}
\sum_{\tuple{g}\in \gr_1^E}
\left|
\tr\left(
\sum_{\substack{\rho \in \widehat{\gr_1^D}\\ |\rho| \geq \kappa}}
\dim_\rho (1-\epsilon)^{|\rho|}
\mathcal{C}(\tuple{g},\tuple{a},\rho))
\right)
\right|  
 \\ \leq \ &
2 \tr \left(
\sum_{\substack{\rho \in \widehat{\gr_1^D}\\ |\rho| \geq \kappa}}
\sum_{i\in N_\rho} \dim_\rho (1-\epsilon)^{|\rho|} \widehat{(\B*\B)}(\rho_{i,i})
\right).
\end{align*}
By \eqref{eq:le23_aux0}, 
$\tr\left( \sum_{i\in N_\rho}
\widehat{(\B*\B)}(\rho_{i,i})
\right)$ is non-negative for any $\rho\in \widehat{\gr^D}$, so the last expression is at most
\[
2(1-\epsilon)^\kappa
\tr\left(
\sum_{\rho\in \widehat{\gr^D}}
\sum_{i \in N_\rho}\dim_\rho \widehat{(\B*\B)}(\rho_{i,i})
\right)= 2(1-\epsilon)^\kappa \tr \left( (\B*\B)(\groupid) \right)\leq 
2(1-\epsilon)^\kappa \dim_\omega.
\]
Using that $\kappa\geq (\log_2 \delta - 2)/\log_2(1-\epsilon)$, this completes the proof of the result assuming  \eqref{eq:le23_aux0} and \eqref{eq:le23_aux1}.
Now let us show both of these inequalities. We start with \eqref{eq:le23_aux0}. The following chain of identities holds.
\[
\tr\left(
\sum_{i\in N_\rho}
\widehat{(\B*\B)}(\rho_{i,i})
\right) = 
 \tr\left( 
\frac{1}{|\gr_1^D|}
\sum_{\tuple{g}\in \gr_1^D}
(\B*\B)(\tuple{g})\overline{\chi_\rho(\tuple{g})}
\right) = 
\tr\left(
\frac{1}{|\gr_1^D|}
\sum_{\tuple{g}\in \gr_1^D}
(\B*\B)(\tuple{g})\otimes \overline{\rho(\tuple{g})}
\right). \]

To prove \eqref{eq:le23_aux0} it is enough to show that the matrix inside the last trace operator is positive semidefinite. To do this, we express it as the square of a Hermitian matrix as follows
\begin{align*}
\frac{1}{|\gr_1^D|}
  \sum_{\tuple{g}\in \gr_1^D}
(\B*\B)(\tuple{g})\otimes \overline{\rho(\tuple{g})}
& =
\frac{1}{|\gr_1^D|^2}
\sum_{\tuple{h}\in \gr_1^D}
  \sum_{\tuple{g}\in \gr_1^D}
  \left(
\B(\tuple{h})\otimes \overline{\rho(\tuple{h})}\right)
\left(
\B(\tuple{h}^{-1}\tuple{g})\otimes \overline{\rho(\tuple{h}^{-1}\tuple{g})}
\right) \\
& =
\left(
\frac{1}{|\gr_1^D|}
\sum_{\tuple{g}\in \gr_1^D}
\B(\tuple{g})\otimes \overline{\rho(\tuple{g})}\right)^2.
\end{align*}
The matrix inside the last parentheses is Hermitian.  Indeed,
\[
\left(\sum_{\tuple{g}\in \gr_1^D} \B(\tuple{g})\otimes \overline{\rho(\tuple{g})}\right)^* =
\sum_{\tuple{g}\in \gr_1^D} \B(\tuple{g})^*\otimes \overline{\rho(\tuple{g})^*}= 
\sum_{\tuple{g}\in \gr_1^D} \B(\tuple{g}^{-1})\otimes \overline{\rho(\tuple{g}^{-1})},
\]
where the last equality follows from the fact that both $\B$ and $\rho$ are skew-symmetric. \par
Now, let us show \eqref{eq:le23_aux1}. We start with the following chain of identities. 
\begin{align*}
&
    \left\vert \tr
\left(
\A(\tuple{g}) 
\left(\sum_{i,j\in N_\rho}
\widehat{(\B*\B)}(\rho_{i,j})\rho_{i,j}((\ba\circ\pi)^{-1})
\right)
\right)
\right\vert \\= &
\left\vert \tr
\left(
\A(\tuple{g})
 \left(
\frac{1}{|\gr_1^D|}
\sum_{\tuple{h}\in \gr_1^D}
(\B*\B)(\tuple{h})\overline{\chi_\rho(\tuple{h} (\ba\circ\pi))} \right)
\right)
\right\vert \\ = 
&
\left\vert
\tr
\left(
(\A(\tuple{g})\otimes I_{N_\rho}) 
\left(
\frac{1}{|\gr_1^D|}\sum_{\tuple{h}\in \gr_1^D}
(\B*\B)(\tuple{h}) \otimes \overline{\rho(\tuple{h})}
\right)
\left(I_{N_\omega}\otimes \overline{\rho(\ba\circ\pi)}
\right)
\right)
\right\vert \\ = 
&
\left\vert 
\tr
\left(
\left(
\frac{1}{|\gr_1^D|}
\sum_{\tuple{h}\in \gr_1^D}
(\B*\B)(\tuple{h}) \otimes \overline{\rho(\tuple{h})} 
\right)
\left(I_{N_\omega}\otimes \overline{\rho(\ba\circ\pi)}\right)
\left( A(\tuple{g})\otimes I_{N_\rho}\right)
\right)
\right\vert.
\end{align*}
Both $I_{N_\omega}\otimes \overline{\rho(\ba\circ\pi)}$ and 
$A(\bg)\otimes I_{N_\rho}$ are unitary matrices, and, as shown in the proof of \eqref{eq:le23_aux0}, the matrix $
\frac{1}{|\gr_1^D|}
\sum_{\tuple{h}\in \gr_1^D}
(\B*\B)(\tuple{h}) \otimes \overline{\rho(\tuple{h})}$ is positive semidefinite. Hence, the last expression is at most 
\[
\tr\left(\frac{1}{|\gr_1^D|}
\sum_{\tuple{h}\in \gr_1^D}
(\B*\B)(\tuple{h}) \otimes \overline{\rho(\tuple{h})} 
\right) = \tr\left(
\frac{1}{|\gr_1^D|}
\sum_{\tuple{h}\in \gr_1^D}
(\B*\B)(\tuple{h}) \overline{\chi_\rho(\tuple{h})} 
\right) = 
 \tr\left(
\sum_{i\in N_\rho}
\widehat{(\B*\B)}(\rho_{i,i}) 
\right),
\]
as we wanted to show. This completes the proof of the lemma. 
\end{proof}

\begin{proof}[Proof of~\Cref{le:EHR-25}]
This proof is a direct adaptation of the proof of~\cite[Lemma~25]{EHR04:tcs}.
We can rewrite the inequality in the statement as
\begin{align*}
\xi \leq \left|
\ex{uv}\left[
\sum_{\substack{\tau\in \widehat{\gr_1^E}, \tau\neq 1, \\ s,t \in N_\tau}}
 \sum_{\substack{\rho\in \widehat{\gr_1^D}, |\rho| < \kappa, \\i,j\in N_\rho}}
 (1-\epsilon)^{|\rho|} 
 \ex{\ba} \left[ \tau_{s,t}(\ba) 
 \rho^\pi_{i,j}(\ba^{-1}) \right]
\tr\left(
\dim_\tau \dim_\rho
\widehat{\A}(\tau_{s,t})
\widehat{(\B*\B)}(\rho_{i,j})
\right)
\right] \right|.
\end{align*}
By~\Cref{le:similar_representations}, the inner expectation in the above term is zero unless $\tau\sim_\pi \rho$. Additionally, by an averaging argument, there must be indices $x,y \in N_\omega$
such that this term can be bounded above by
\begin{align*}
&
\dim_{\omega}^2\left| \ex{uv}\left[
 \sum_{\substack{\rho\in \widehat{\gr_1^D},
 |\rho| < \kappa, \\ i,j\in N_\rho}}
 \sum_{\substack{\tau\in \widehat{\gr_1^E}, \tau\neq 1, \\ \tau \sim_\pi \rho, s,t\in N_\tau}}
 (1-\epsilon)^{|\rho|} 
  \langle \tau_{s,t}, \rho^\pi_{j,i} \rangle
 \dim_\tau \dim_\rho \widehat{\A_{x,y}}(\tau_{s,t})
\widehat{(\B*\B)}_{y,x}(\rho_{i,j}) 
\right]\right|.
\end{align*}
Here we have used the fact that
$ \ex{\ba} \left[ \tau_{s,t}(\ba) 
\rho^\pi_{i,j}(\ba^{-1}) \right] =  \langle \tau_{s,t}, \rho^\pi_{j,i} \rangle_{\gr^E}$ since $\rho^\pi$ is unitary.
Moreover, by~\Cref{lem:fourier_convolution}, we know that
\[
\widehat{(\B*\B)}_{y,x}(\rho_{i,j})=
\sum_{z\in N_\omega} \sum_{k\in N_\rho} \widehat{{\B}_{y,z}}(\rho_{i,k}) \widehat{{\B}_{z,x}}(\rho_{k,j}).\]
Hence, by another averaging argument, there is an index $z\in N_\omega$ such that the above expectation is at most
\[
\dim_{\omega}^3 \left| \ex{uv}\left[   \sum_{\substack{\rho\in \widehat{\gr_1^D},
 |\rho| < \kappa, \\ i,j,k\in N_\rho}}
 \sum_{\substack{\tau\in \widehat{\gr_1^E}, \tau \neq 1
 \\ \tau \sim_\pi \rho, s,t\in N_\tau}}
 (1-\epsilon)^{|\rho|} 
  \langle \tau_{s,t}, \rho^\pi_{j,i} \rangle
 \dim_\tau \dim_\rho \widehat{\A_{x,y}}(\tau_{s,t})
 \widehat{{\B}_{y,z}}(\rho_{i,k}) \widehat{{\B}_{z,x}}(\rho_{k,j})
 \right] \right|.
\]
Putting everything together, squaring both sides, and using Jensen's inequality we obtain 
\[
\frac{\xi^2}{\dim_\omega^6}
\leq 
\ex{uv}\left[
\left\vert
\sum_{\substack{\rho\in \widehat{\gr_1^D},
 |\rho| < \kappa, \\ i,j,k\in N_\rho}}
 \sum_{\substack{\tau\in \widehat{\gr_1^E}, \tau\neq 1 \\ \tau \sim \rho^\pi, s,t\in N_\tau}}
 (1-\epsilon)^{|\rho|} 
 \langle \tau_{s,t}, \rho^\pi_{j,i} \rangle
 \dim_\tau \dim_\rho \widehat{\A_{x,y}}(\tau_{s,t})
 \widehat{{\B}_{y,z}}(\rho_{i,k}) \widehat{{\B}_{z,x}}(\rho_{k,j}) \right\vert^2 \right].
\]
The Cauchy-Schwartz inequality now yields  
\begin{align} 
\nonumber
\frac{\xi^2}{\dim_\omega^6} \leq \ &
\ex{uv}\left[
\sum_{\substack{\rho\in \widehat{\gr_1^D},
 |\rho| < \kappa, \\ i,j,k\in N_\rho}}
 \sum_{\substack{\tau\in \widehat{\gr_1^E},  \tau\neq 1\\  \tau \sim \rho^\pi, s,t \in N_\tau}}
 \dim_\tau \dim_\rho
 \left\vert
 (1-\epsilon)^{|\rho|} 
 \langle \tau_{s,t}, \rho^\pi_{j,i} \rangle
  \widehat{{\B}_{z,x}}(\rho_{k,j}) \right\vert^2 
 \right. \\
 \label{eq:lem25_2}
 & 
 \left.
 \times
\sum_{\substack{\rho\in \widehat{\gr_1^D},
 |\rho| < \kappa, \\ i,j,k\in N_\rho}}
 \sum_{\substack{\tau\in \widehat{\gr_1^E}, \tau\neq 1, \\ \tau \sim \rho^\pi, s,t\in N_\tau}}
\dim_\tau \dim_\rho  \left\vert
 \widehat{\A_{x,y}}(\tau_{s,t})
 \widehat{{\B}_{y,z}}(\rho_{i,k}) \right\vert^2 \right].
\end{align}
Let us bound the first term of this last product. 
By Plancherel's Theorem (cfr.~\Cref{th:orthogonal_representations}),
\[
\sum_{\substack{\tau\in \widehat{\gr_1^E}, s,t\in N_\tau}}
\dim_\tau
\left\vert
 \langle \tau_{s,t}, \rho^\pi_{j,i} \rangle \right\vert^2= 
 \Vert \rho^\pi_{j,i} \Vert^2.
\]
As $\rho$ is a unitary representation, it must hold that
$\sum_{i\in N_\rho} \Vert \rho^\pi_{j,i} \Vert^2=1$ for any $j\in N_\rho$. This way,
\begin{align*}
&\sum_{\substack{\rho\in \widehat{\gr_1^D},
 |\rho| < \kappa, \\ i,j,k\in N_\rho}}
 \sum_{\substack{\tau\in \widehat{\gr_1^E}, \tau\neq 1, \\ \tau \sim_\pi \rho, s,t \in N_\tau}}
 \dim_\tau \dim_\rho
 \left\vert
 (1-\epsilon)^{|\rho|} 
 \langle \tau_{s,t}, \rho^\pi_{j,i} \rangle 
  \widehat{{\B}_{z,x}}(\rho_{k,j}) \right\vert^2 \leq \\
 &
 \sum_{\substack{\rho\in \widehat{\gr_1^D},
 |\rho| < \kappa, \\ j,k\in N_\rho}}
 \left(\sum_{i\in N_\rho}
 \Vert \rho^\pi_{j,i} \Vert^2 \right)
  \dim_\rho
  \left\vert
 \widehat{{\B}_{z,x}}(\rho_{k,j})
 \right\vert^2
 =
 \sum_{\substack{\rho\in \widehat{\gr_1^D},
 |\rho| < \kappa, \\ j,k\in N_\rho}}
 \dim_\rho
 \left\vert
 \widehat{{\B}_{z,x}}(\rho_{k,j})
 \right\vert^2.    
\end{align*}
Using Plancherel's Theorem we obtain that this quantity is at most $\lVert {\B}_{z,x}\rVert^2$. This norm must be at most one because $\B$ ranges over unitary matrices. Substituting this back in \eqref{eq:lem25_2}, we get
\[
\frac{\xi^2}{\dim_\omega^6} 
\leq 
\ex{uv}
\left[
\sum_{\substack{\rho\in \widehat{\gr_1^D},
 |\rho| < \kappa, \\ i,j,k\in N_\rho}}
 \sum_{\substack{\tau\in \widehat{\gr_1^E}, \tau\neq 1, \\ \tau \sim_\pi \rho, s,t \in N_\tau}}
 \dim_\tau \dim_\rho
 \left\vert
  \widehat{\A_{x,y}}(\tau_{s,t})
 \widehat{{\B}_{y,z}}(\rho_{i,k}) \right\vert^2 \right].
\]
Summing over $j$, and using the fact that $|N_\rho| \leq
|\gr_1|^{|\rho|} \leq |\gr_1|^\kappa$, we obtain
\[
\frac{\xi^2}{\dim_\omega^6 |\gr_1|^\kappa} 
\leq
\ex{uv}
\left[
\sum_{\substack{\rho\in \widehat{\gr_1^D},
 |\rho| < \kappa, \\ i,k\in N_\rho}}
 \sum_{\substack{\tau\in \widehat{\gr_1^E}, \tau\neq 1, \\ \tau \sim_\pi \rho, s,t \in N_\tau}}
  \dim_\tau \dim_\rho
 \left\vert
 \widehat{\A_{x,y}}(\tau_{s,t})
 \widehat{{\B}_{y,z}}(\rho_{i,k}) \right\vert^2\right].
\]
 The facts that $\tau\neq 1$,
 $|\rho|<\kappa$ and $\tau \sim_\pi \rho$ together imply 
 that $1 < |\tau| \leq |\rho| < \kappa$.
 Hence,
\begin{align}
\nonumber
   \frac{\xi^2}{\dim_\omega^6 \kappa|\gr_1|^\kappa} 
& \leq 
\ex{uv}
\left[
\sum_{\substack{\rho\in \widehat{\gr_1^D},
 |\rho| < \kappa, \\ i,k\in N_\rho}}
 \sum_{\substack{\tau\in \widehat{\gr_1^E}, \tau\neq 1, \\ \tau \sim_\pi \rho, s,t \in N_\tau}}
 \frac{
  \dim_\tau \dim_\rho  \left\vert \widehat{\A_{x,y}}(\tau_{s,t})\right\vert^2
   \left\vert
 \widehat{{\B}_{y,z}}(\rho_{i,k})\right\vert^2}{|\rho|} 
 \right] \\
\nonumber
 &
= 
\ex{uv}
\left[
\sum_{e\in E}
\sum_{\substack{\rho\in \widehat{\gr_1^D} |\rho| < \kappa, \\ i,k\in N_\rho}}
\sum_{\substack{\tau\in \widehat{\gr_1^E}, \tau^e\neq 1 \\ \tau \sim_\pi \rho, s,t \in N_\tau}}
 \frac{\dim_\tau \left|\widehat{\A_{x,y}}(\tau_{s,t})\right|^2}{|\tau|}
 \frac{
    \dim_\rho \left\vert
 \widehat{{\B}_{y,z}}(\rho_{i,k})\right\vert^2}{|\rho|} 
 \right] \\
 &
\nonumber
\leq 
\ex{uv}
\left[
\sum_{\substack{e\in E \\ d\in \pi^{-1}(e)}}
\sum_{\substack{\rho\in \widehat{\gr_1^D}, \rho^d \neq 1 \\
 |\rho| < \kappa, i,k\in N_\rho}}
 \sum_{\substack{\tau\in \widehat{\gr_1^E}, \tau^{e} \neq 1, \\ |\tau| < \kappa, s,t \in N_\tau}}
 \frac{\dim_\tau \left|\widehat{\A_{x,y}}(\tau_{s,t})\right|^2}{|\tau|}
 \frac{
    \dim_\rho \left\vert
 \widehat{{\B}_{y,z}}(\rho_{i,k})\right\vert^2}{|\rho|} 
 \right]
\end{align}
To see the last inequality observe that
given $e\in E$ such that $\tau^e\neq 1$, the fact that $\tau\sim_\pi \rho$ implies that there is at least one $d\in \pi^{-1}(e)$ satisfying $\rho^d\neq 1$. Rearranging the last expression we obtain
\begin{equation} 
 \label{eq:prob-success} 
    \frac{\xi^2}{\dim_\omega^6 \kappa|\gr_1|^\kappa}  \leq 
\ex{uv}
\left[
\sum_{d\in D}
\sum_{\substack{\rho\in \widehat{\gr_1^D}, \rho^d \neq 1 \\
 |\rho| < \kappa, i,k\in N_\rho}}
 \sum_{\substack{\tau\in \widehat{\gr_1^E}, \tau^{\pi(d)} \neq 1, \\ |\tau| < \kappa, s,t \in N_\tau}}
 \frac{\dim_\tau \left|\widehat{\A_{x,y}}(\tau_{s,t})\right|^2}{|\tau|}
 \frac{
    \dim_\rho \left\vert
 \widehat{{\B}_{y,z}}(\rho_{i,k})\right\vert^2}{|\rho|} 
 \right].
\end{equation}

This inequality suggests the following randomised strategy to construct an assignment for the original instance $\Sigma$ of Gap Label Cover. Fix indices $x,y,z \in N_\omega$ as described above. Given a vertex $u\in U$, the assignment $d_u$ is chosen randomly by picking a representation $\rho \in \widehat{\gr_1^D}$ 
with probability at least
$\sum_{i,k\in N_\rho} \dim_\rho \vert\widehat{{\B}_{y,z}}(\rho_{i,k} )\vert^2$, which is well-defined by Plancherel's theorem and the fact that $\omega$ is unitary, and then picking $d_u$ uniformly at random among those elements $d\in D$ for which $\rho^d$ is non-trivial if such $d$ exists, and give up otherwise. Similarly, given a vertex $v\in V$, the assignment $e_v$ is chosen by picking a 
non-trivial representation $\tau  \in \widehat{\gr_1^E}$ 
with probability at least
$\sum_{s,t\in N_\tau} \dim_\tau \vert\widehat{\A_{x,y}}(\tau_{s,t})\vert^2$, and then picking $e_v\in E$ uniformly at random among those elements $e\in E$ for which $\tau^e$ is non-trivial. This is also well-defined by Plancherel's theorem and the fact that $\omega$ is unitary, and additionally, note that picking $e_v$ with $\tau^{e_v}$ non-trivial is always possible since \eqref{eq:prob-success} guarantees that there always exists some non-trivial $\tau \in \widehat{\gr_1^E}$ with $\vert\widehat{\A_{x,y}}(\tau_{s,t})\vert>0$. The probability that this random assignment satisfies a given constraint $u \xrightarrow{\pi} v$ in the GLC instance is at least
the joint probability of the following events: (1) the representations $\rho, \tau$ satisfy  $|\rho|, |\tau| < \kappa$,
and
(2) the chosen element $e_v\in E$ satisfies $e_v = \pi(d_u)$.
Because of \eqref{eq:prob-success}, this probability is at least
$\xi^2\kappa^{-1}|\gr_1|^{-\kappa}\dim_\omega^{-6}$. This strategy can be
derandomised to obtain an assignment of $\Sigma$ satisfying at least the same fraction of its constraints, proving the result. 
\end{proof}

\section{Algebraic Approach to Max-PCSPs} \label{sec:algebraic-approach}

The (in)approximability of Promise Constraint Satisfaction Problems (PCSPs) has
been recently studied extensively in~\cite{Barto24:lics}, where the authors
considered the class of problems called \emph{valued} PCSPs. 
Valued promise CSPs provide a framework for studying approximation problems of a joint
qualitative and quantitative nature in a systematic way, and include, as special
cases, (non-valued) CSPs as well as their promise variant,
valued CSPs, approximability of Max-CSPs, (gap variants of) Label Cover problems, and the Unique Games Conjecture. 

Following the success of the so-called  algebraic approach for
CSPs~\cite{JeavonsCG97,BulatovJK05} and PCSPs~\cite{BBKO21}, Barto et
al.~\cite{Barto24:lics} studied valued PCSPs from an algebraic viewpoint,
developing a framework for deriving polynomial-time reductions between such problems
by studying the relationship between certain associated algebraic objects called \emph{valued minions of plurimorphisms}. These are, loosely, collections of probability distributions over a set of multi-dimensional symmetries of the corresponding feasibility problem, known as the \emph{minion of polymorphisms}. The basic tenet of the algebraic approach is that the existence of a homomorphism between the valued minions of plurimorphisms of two valued PCSPs gives a polynomial-time reduction between the corresponding computational problems (in the opposite direction). 

\medskip
In this section, we show that the hardness of approximation of promise equations can be viewed as an instance of a homomorphism from the valued minion of plurimorphism of $\eq$ to the valued minion of plurimorphism of the Gap Label Cover problem, thus providing an algebraic viewpoint for the reduction shown in this paper.
In the remainder of this section we introduce the valued PCSP framework and the main algebraic tools to show reductions between valued PCSPs, broadly following the presentation in~\cite{Barto24:lics}. We then show that the proof of~\Cref{th:main} in fact provides the sufficient conditions for the existence of a homomorphism between the appropriate valued minions of plurimorphisms.
To be precise, the version of the framework introduced below is a special case
of valued PCSPs in which constraints are restricted to take values from the set
$\{0, 1\}$. This version models problems such as Max-CSPs and its promise
variant, which will suffice for our purposes.
This allows us to simplify the presentation of some of the tools, particularly
due to the fact that for a $\{0,1\}$-valued relation $\phi$ of arity $N$ on a
set $A$, the feasibility set of $\phi$ is the whole of $A^N$, and hence trivially the set of polymorphisms of the feasibility structures associated with a template $\abcs$ is simply the set of multivariate operations from $A$ to $B$. 

\paragraph{Maximum Promise CSPs}

The class of finite sets is denoted by $\FinSet$. Given a finite set $N$, we denote the set of probability distributions on $N$ by $\Delta N$, and the $N$-ary projection to $n\in N$ by $\proj{N}{n}$. Furthermore, for sets $A$ and $B$, we define $\mathscr{O}(A,B)=\{f \mid A^N\to B\}_{N \in \FinSet}$. 

A (relational) \emph{signature} $\sigma$ is finite collection of \emph{relation symbols}, each of which comes with an associated finite set called its \emph{arity}, denoted $\ar(\phi)$ for each symbol $\phi \in \sigma$.
A $\sigma$\emph{-structure} $\bfa$ consists of a finite set $A$, called the \emph{universe} of $\bfa$, and a relation $\phi^\bfa \subseteq A^{\ar(\phi)}$ for each $\phi \in \sigma$, called the \emph{interpretation} of $\phi$ in $\bfa$.

Let $\sigma$ be a relational signature. A \emph{payoff $\sigma$-formula} over a
finite set of variables $X$ is a formal expression of the form 
\[\Phi = \sum_{i
\in I} w_i \phi_i(\tuple{x}_i)\]
where $I$ is a finite non-empty set, $w_i$ are non-negative rational weights satisfying $\sum_{i\in I}w_i=1$, $\phi_i \in \sig$, and $\tuple{x}_i \in X^{\ar(\phi_i)}$ for all $i \in I$.
    Given additionally a $\sigma$-structure $\bfa$ and a map $h:X\to A$, the \emph{interpretation of $\Phi$ in $\bfa$} is the rational-valued function on $A^X$ given by 
   \[ 
    \Phi^{\bfa}(h) = \sum_{i \in I} w_i \llbracket h\tuple{x}_i \in \phi_i^{\bfa}\rrbracket,
   \] 
    for each $h \in A^X$.
    For $c \in [0,1]$, we say that $\Phi$ is \emph{$c$-satisfiable in $\bfa$} if there exists $h \in A^X$ such that $\Phi^\bfa(h) \geq c$.

Let $\bfa$, $\bfb$ be $\sigma$-structures and $c,s \in [0,1]$ be rational constants called the \emph{completeness} and \emph{soundness} parameters respectively. The \emph{Maximum Promise Constraint Satisfaction Problem} over $(\bfa,\bfb,c,s)$, denoted $\pcsp\abcs$, is the following problem: given an input payoff $\sigma$-formula $\Phi$ over a finite set $X$, accept if $\Phi$ is $c$-satisfiable in $\bfa$, and reject if $\Phi$ is not even $s$-satisfiable in $\bfb$. Clearly, this problem is well-defined whenever $\exists h \, \Phi^{\bfa}(h) \geq c$ implies $\exists h \, \Phi^{\bfb}(h) \geq s$ for every payoff $\sigma$-formula $\Phi$. Quadruples $\abcs$ that satisfy this condition are called \emph{templates}.

When $c=1$, we say that the problem has \emph{perfect completeness}: the only accepted instances are those where there exists some assignment $h$ such that $h\bx_i \in \phi_i^\bfa$ for each $i \in I$.

\begin{remark}\label{re:pcsp-groups} 
The problem $\eq(\gr_1,\gr_2,\varphi,c,s)$ studied in this paper can be phrased as a Maximum Promise CSP. In particular, the relational structures $\mathbf{G}_1^\varphi$, $\mathbf{G}_2^\varphi$ corresponding to
systems of promise equations parametrised by  $(\gr_1, \gr_2, \varphi)$ are constructed as follows.
The universe of $\mathbf{G}_1^\varphi$ is $\gr_1$, and similarly the universe of
$\mathbf{G}_2^\varphi$ is $\gr_2$. For every element $g\in \dom(\varphi)$ and every
triple $(i,j,k) \in \{-1,1\}^3$, the signature of $\mathbf{G}_1^\varphi$ and
$\mathbf{G}_2^\varphi$ contains a ternary relation symbol $\phi_{g,(i,j,k)}$,
which is interpreted as the set $\{ (x,y,z) \in \gr^3_1 \mid x^i y^j z^k = g\}$
in  $\mathbf{G}_1^\varphi$, and as $\{(x,y,z) \in \gr_2^3 \mid x^i y^j z^k  =
\varphi(g) \}$ in  $\mathbf{G}_2^\varphi$.
Then, $\eq(\gr_1,\gr_2,\varphi,c,s)$ is precisely $\pcsp(\mathbf{G}_1^\varphi,\mathbf{G}_2^\varphi,c,s)$. 
It can be easily verified that if $(\gr_1,\gr_2,\varphi)$ is a template (i.e., there is a group homomorphism from $\gr_1$ to $\gr_2$ that extends $\varphi$), and $s \leq c$, then $(\mathbf{G}_1^\varphi,\mathbf{G}_2^\varphi,c,s)$ is a template of the Maximum Promise CSP.

Furthermore, note that $\glc_{D,E}(c,s)$ can also be seen as a Maximum Promise CSP, with the only caveat that the corresponding relational structure is on a 2-sorted domain. We omit details here and refer the reader to~\cite{Barto24:lics} for a thorough treatment of multi-sorted PCSPs.
\end{remark}

\paragraph{Minions}

Let $N,N'$ be finite sets, $\pi:N\to N'$ a function known as a \emph{minor map}, and $f : A^{N}\to B$. The \emph{minor of $f$ given by $\pi$}, denoted $f^{(\pi)}$, is the $N'$-ary function defined by
\[f^{(\pi)}(\bx) = f(\bx \pi)\]
for all $\bx \in A^{N'}$.
A \emph{function minion} $\mnn M$ on a pair of sets $(A,B)$ is a non-empty subset of $\mathscr{O}(A,B)$ that is closed under
taking minors. We denote the set of $N$-ary functions in $\mnn{M}$ by $\mnn{M}^{(N)}$.

 Let $\mnn M $ and $\mnn M'$ be function minions. A \emph{minion homomorphism} from $\mnn M $ to  $\mnn M'$ is a collection of functions $(\xi^{(N)}: \mnn M^{(N)} \to \mnn M'^{(N)})_{N \in \FinSet}$ that preserves taking minors, that is, $\xi^{(N')}(f^{(\pi)})=(\xi^{(N)}(f))^{(\pi)}$ for every $N, N' \in \FinSet$, $f \in \mnn{M}^{(N)}$, and $\pi:N \to N'$.

In the exact setting, each Promise CSP can be associated to a function minion, called the minion of polymorphisms of the corresponding template. Intuitively, this minion contains the higher-dimensional symmetries of the problem that determine its complexity.

\paragraph{Valued minions}

In the approximation setting, minions are replaced with valued minions, which are, broadly speaking, collections of probability distributions over multivariate functions from $A$ to $B$. One peculiarity is that these collections are not indexed by single sets, but rather by all finite collections of finite sets.

More formally, let $\mnn M$ be a function minion and $N$ a finite set. An \emph{$N$-ary weighting of $\mnn M$} is a pair 
\[
\Omega=(\OmegaI,\OmegaO) \quad \textnormal{ where } \quad \OmegaI \in \Delta N, \ \OmegaO \in \Delta \mnn{M}^{(N)}.
\]
A \emph{valued minion over $\mnn M$} is a collection $\vmnn M = (\vmnn M^{(\mathcal{N})})$ indexed by finite families of finite sets $\mathcal{N} = (N_j)_{j \in J}$ such that elements of $\vmnn M^{(\mathcal{N})}$ are families $(\Omega_j)_{j \in J}$ where each $\Omega_j$ is an $N_j$-ary weighting of $\mnn M$.

Let $\vmnn M$, $\vmnn M'$ be valued minions over function minions $\mnn M$ and $\mnn M'$, respectively. A \emph{valued minion homomorphism} $\vmnn M \to \vmnn M'$ is a probability distribution $\Xi$ on the set of minion homomorphisms $\mnn M \to \mnn M'$ such that, for every finite set $J$, every family of finite sets $\mathcal{N} = (N_j)_{j \in J}$, and every $(\Omega_j)_{j \in J} \in \vmnn M^{(\mathcal{N})}$, we have  $(\Xi(\Omega_j))_{j \in J} \in \vmnn M'^{(\mathcal{N})}$, where $\Xi(\Omega_j) = (\OmegaI_j, \Xi(\OmegaO_j))$ and $\Xi(\OmegaO_j)$ is defined naturally as sampling $\xi\sim\Xi$, $f\sim \OmegaO_j$, and computing $\xi(f)$. 

\paragraph{Polymorphisms and plurimorphisms}
The specific valued minion associated to a Max-PCSP 
template contains those
weightings that preserve the approximation factor in the following sense. Let
$\abcs$ be a Max-PCSP template and $\mnn M = \mathscr{O}(A,B)$.  
   \begin{itemize}
   \item Let  $\IO \in \mathbb{Q}_{\geq 0}$. An $N$-ary weighting $\Omega$ of $\mnn M$ is
   a \emph{$\IO$-polymorphism} of $\abcs$ if 
       \[ 
   \ex{f \sim \OmegaO} \phi^{\bfb}( f (M) ) - s \geq \IO ( \ex{n \sim \OmegaI}\phi^{\bfa}(M_n)-c) \ \ \ \forall \phi \in \sigma, \ \forall M \in A^{\ar(\phi) \times N}  
       \] where $M_n$ denotes the $n^{\textnormal{th}}$ column of $M$, and $f$ is applied to $M$ row-wise.
   \item An $N$-ary weighting $\Omega$ of $\mnn M$ is
   a \emph{polymorphism} of $\abcs$ if it is a $\IO$-polymorphism for some $\IO \in \mathbb{Q}_{\geq 0}$. 
    \item A finite family $(\Omega_j)_{j \in J}$ of weightings of $\mnn M$ of arities $\mathcal{N}= (N_j)_{j \in J}$  is an $\mathcal{N}$-ary \emph{plurimorphism}  of $\abcs$ if there exists $\IO \in \mathbb{Q}_{\geq 0}$ such that every $\Omega_j$ is a $\IO$-polymorphism. 
\end{itemize}
The collections of plurimorphisms of $\abcs$ is  denoted by $\plu \abcs$.

In particular, observe that for a Max-PCSP template $\abcs$, the collection $\plu\abcs$ forms a valued minion over $\mathscr{O}(A,B)$.

\paragraph{Reductions via homomorphisms}
The main result from~\cite{Barto24:lics} is the following.

\begin{theorem}[\cite{Barto24:lics}] \label{th:valued-reduction-via-homos}
    Let $\abcs$ and $(\bfa',\bfb',c',s')$ be Max-PCSP templates 
    such that the former one has a ``reject'' instance.
    If there is a valued minion homomorphism from $\plu\abcs$ to $\plu(\bfa',\bfb',c',s')$, then $\pcsp(\bfa',\bfb',c',s') \leq \pcsp\abcs$.     
\end{theorem}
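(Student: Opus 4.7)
My plan is to exhibit a polynomial-time reduction from $\pcsp(\bfa',\bfb',c',s')$ to $\pcsp\abcs$, following the standard algebraic template from the theory of (valued) promise CSPs: given an input instance $\Phi'$ on variable set $X'$, I would produce an instance $\Phi$ of $\pcsp\abcs$ via a ``free-structure'' gadget. For soundness, I would extract from any assignment of $\Phi$ a plurimorphism of $\abcs$, push it through the valued minion homomorphism $\Xi$, and decode the resulting plurimorphism of $(\bfa',\bfb',c',s')$ into an assignment of $\Phi'$ in $\bfb'$.

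For the construction of $\Phi$, I would take the variable set $X$ to be a suitable (polynomial-sized) subset of $A^{X'}$. The design principle is that for each constraint $\phi'_i(\bx'_i)$ in $\Phi'$ of arity $N_i = \ar(\phi'_i)$, and each $\phi \in \sig$ of arity $M$, one includes a weighted collection of $\phi$-constraints whose variable tuples correspond to $M \times N_i$ matrices over $A$ indexed by $\bx'_i$, with total weight proportional to $w'_i$. \emph{Completeness} then follows by a direct lifting: a satisfying assignment $h':X'\to A'$ of $\Phi'$ in $\bfa'$ yields an assignment $h:X\to A$ of $\Phi$ by sending each $f\in X\subseteq A^{X'}$ to an $A$-value dictated by composition with $h'$, with the gadget weights arranged so that the total mass of satisfied constraints in $\bfa$ is at least~$c$.

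\textbf{The main obstacle is soundness.} Given $h:X\to B$ with $\Phi^\bfb(h)\geq s$, I would produce, for each constraint $\phi'_i(\bx'_i)$ of $\Phi'$, an $N_i$-ary weighting $\Omega_i = (\OmegaI_i, \OmegaO_i)$ of the minion $\mathscr{O}(A,B)$: $\OmegaI_i \in \Delta N_i$ is the natural distribution over positions of $\bx'_i$, while $\OmegaO_i$ is the empirical distribution of the $N_i$-ary functions $A^{N_i}\to B$ that $h$ implicitly defines on the gadget for $\phi'_i$. The bound $\Phi^\bfb(h)\geq s$ rephrases, after some bookkeeping, as the $\IO$-polymorphism inequality for each $\Omega_i$ with a common $\IO$ calibrated by the gaps $c-s$ and $c'-s'$. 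The assembled family $(\Omega_i)_{i\in I}$ is then a plurimorphism of $\abcs$, and applying $\Xi$ produces a plurimorphism of $(\bfa',\bfb',c',s')$; the corresponding polymorphism inequalities, weighted by $w'_i$ and summed, show that the randomized assignment of $\Phi'$ obtained by sampling from each $\Xi(\OmegaO_i)$ and applying the sampled function to a matrix over $A'$ achieves expected value at least $s'$ in~$\bfb'$. A ``pick the best'' argument then yields a deterministic assignment of the required quality. The hypothesis that $\abcs$ has a reject instance enters precisely here, to guarantee that $\IO$ is well-defined and that the $\IO$-polymorphism condition is non-vacuous, so that the extracted weightings actually carry information about $h$.
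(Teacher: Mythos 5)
This statement is cited verbatim from Barto et al.~\cite{Barto24:lics} and the paper gives no proof of it --- the only thing the present paper contributes around this result is the adaptation stated as \Cref{th:valued-hom}, which is used as a black box. So there is no in-paper proof for your sketch to be compared against; the discussion below therefore addresses your sketch on its own terms.

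Two concerns, one minor and one serious. The minor one is the construction of $\Phi$. Taking $X\subseteq A^{X'}$ is the shape of the \emph{free structure}, which is exponential in $|X'|$; the reduction in the non-valued analogue (Theorem 3.1 of~\cite{BBKO21}) does not go through a direct free-structure instantiation but through a chain of gadget (pp-construction) reductions, and the valued analogue in~\cite{Barto24:lics} almost certainly does the same. Saying ``a suitable polynomial-sized subset'' papers over exactly the part of the construction that needs to be done carefully, namely how consistency is enforced between gadget blocks for constraints whose scopes overlap.

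The serious gap is in the soundness step. You claim that $\Phi^{\bfb}(h)\ge s$ ``rephrases, after some bookkeeping, as the $\IO$-polymorphism inequality for each $\Omega_i$.'' But by definition a $\IO$-polymorphism of $\abcs$ must satisfy
\[
\ex{f \sim \OmegaO_i}\bigl[\phi^{\bfb}(f(M))\bigr]-s \;\ge\; \IO\bigl(\ex{n\sim\OmegaI_i}\bigl[\phi^{\bfa}(M_n)\bigr]-c\bigr)
\]
\emph{for every} $\phi\in\sigma$ and \emph{every} matrix $M\in A^{\ar(\phi)\times N_i}$. A single assignment $h$ gives you one aggregate quantity $\Phi^{\bfb}(h)$, a weighted average over the constraints of $\Phi$, and there is no way to turn one average inequality into a family of per-$(\phi,M)$ inequalities with a common $\IO$. (An averaging argument can give you information about ``most'' pairs $(\phi,M)$, but not a uniform bound, and certainly not over pairs that the gadget $\Phi$ does not even touch.) So the empirical weightings $(\Omega_i)_i$ you extract need not be a plurimorphism of $\abcs$ at all, and you cannot feed them into $\Xi$. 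This is the missing idea: the actual argument must either decode $h$ without routing through plurimorphisms, or show that plurimorphism membership can be relaxed/engineered into the gadget in a way that is compatible with a single assignment; your sketch assumes the easy case and skips precisely the hard step.

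One smaller point: the ``reject instance'' hypothesis is not needed to make $\IO$ well-defined --- $\IO$ is existentially quantified in the definition of polymorphism and plurimorphism and is always available (e.g.\ $\IO=0$ works for some weightings). That hypothesis is there to exclude the degenerate situation where every payoff $\sigma$-formula is $s$-satisfiable in $\bfb$, in which case $\pcsp\abcs$ is vacuously trivial and cannot serve as the target of a meaningful reduction.
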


In particular, if we can find a valued minion homomorphism from $\plu\abcs$ to the valued minion of plurimorphisms of Gap Label Cover, then this guarantees - in all but the trivial case where all instances are accepted - that $\pcsp\abcs$ is NP-hard. Finding these valued minion homomorphisms is not generally straightforward. Nonetheless, an adaptation of~\cite[Theorem 5.8]{Barto24:lics} gives us the following sufficient condition for the existence of a valued minion homomorphism to Gap Label Cover with perfect completeness.

\begin{theorem}\label{th:valued-hom}
   Let $\abcs$ be a Max-PCSP template, $\mnn{M} = \mathscr{O}(A,B)$, $D$ and $E$ finite disjoint sets, and $\alpha \in \mathbb{R}$. Suppose that there exist
   \begin{itemize}
      \item mappings $\Lambda_D: \mnn{M}^{(D)} \to \Delta D$ and $\Lambda_E: \mnn{M}^{(E)} \to \Delta E$,
      \item for every $\pi:D\to E$, a payoff formula $\Phi_{\pi}$ over the set of variables $A^{D} \cup A^{E}$, 
  \end{itemize}
  such that for every $\pi:D\to E$
  \begin{enumerate}[label=\textnormal{\arabic*.}]
      \item \label{item:1-valued-hom}$\Phi^{\bfa}_{\pi}(\proj{D}{d},\proj{E}{\pi(d)}) \geq c$ for every $d \in D$, and
       \item \label{item:2-valued-hom} for every $\mathbf{d}\in D^N$, $\mathbf{e}\in E^N$, and $N$-ary polymorphism $\Omega$ of $\abcs$, it holds that \[ 
       \ex{f\sim \Omega^{\textnormal{Out}}}  \left[  \Phi^{\bfb}_{\pi}(f^{(\mathbf{d})},f^{(\mathbf{e})}) \right] \geq s  \ \textnormal{ implies }  \ \ex{f\sim \Omega^{\textnormal{Out}}} \ex{\substack{d \sim \Lambda_D(f^{(\mathbf{d})})\\e\sim \Lambda_E(f^{(\mathbf{e})})}} \left[ \pi(d,e) \right] \geq \alpha.\]
  \end{enumerate}
  Then, there exists a valued minion homomorphism from $\plu\abcs$ to $\plu(\glc_{D,E}(1,\alpha))$.
\end{theorem}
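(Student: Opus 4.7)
The plan is to construct the valued minion homomorphism $\Xi$ explicitly from $\Lambda_D$ and $\Lambda_E$, and verify that it transports plurimorphisms of $\abcs$ into plurimorphisms of $\glc_{D,E}(1,\alpha)$ via the tests $\Phi_\pi$. First, independently for each $h \in \mnn{M}^{(D)}$ and each $h' \in \mnn{M}^{(E)}$, I would draw $X_h \sim \Lambda_D(h)$ and $Y_{h'} \sim \Lambda_E(h')$. This single random draw determines a minion homomorphism $\xi$ from $\mnn{M}$ sending each $f \in \mnn{M}^{(N)}$ to the pair of $N$-ary operations $(g^f_D, g^f_E)$ defined by $g^f_D(\mathbf{d}) = X_{f^{(\mathbf{d})}}$ and $g^f_E(\mathbf{e}) = Y_{f^{(\mathbf{e})}}$. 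Commutation with minors reduces to the identity $(f^{(\pi')})^{(\mathbf{d}')} = f^{(\mathbf{d}'\pi')}$ as elements of $\mnn{M}^{(D)}$, which forces the two sides to sample the same random variable $X$. I take $\Xi$ to be the law of $\xi$ under this random draw.

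Next I would check that $\Xi$ carries $\plu\abcs$ into $\plu(\glc_{D,E}(1,\alpha))$. Fix a plurimorphism $(\Omega_j)_{j \in J}$ of $\abcs$ with common parameter $\IO = \kappa$, and consider an arbitrary $\Omega = \Omega_j$ of arity $N$, a labeling $\pi: D \to E$, and an input matrix $M$ whose columns are $(\mathbf{d}(n), \mathbf{e}(n))_{n \in N}$. Expanding $\Phi_\pi = \sum_i w_i \phi_i(\bx_i)$, applying the $\kappa$-polymorphism inequality of $\Omega$ to each term $\phi_i$ with the $\ar(\phi_i) \times N$ matrix $M^i$ whose $k$-th row is $\bx_{i,k} \circ \mathbf{d}$ or $\bx_{i,k} \circ \mathbf{e}$ according to the sort of $\bx_{i,k}$, and summing with weights $w_i$, gives
\[
\ex{f \sim \OmegaO} \Phi^\bfb_\pi(f^{(\mathbf{d})}, f^{(\mathbf{e})}) - s \;\geq\; \kappa \bigg( \ex{n \sim \OmegaI} \Phi^\bfa_\pi(\proj{D}{\mathbf{d}(n)}, \proj{E}{\mathbf{e}(n)}) - c \bigg).
\]
By condition~\ref{item:1-valued-hom} together with $\Phi^\bfa_\pi \geq 0$, the right-hand inner expectation is at least $c\mu$, where $\mu := \ex{n \sim \OmegaI} \llbracket \pi(\mathbf{d}(n)) = \mathbf{e}(n) \rrbracket$, so that $\ex{f} \Phi^\bfb_\pi(f^{(\mathbf{d})}, f^{(\mathbf{e})}) \geq s - \kappa c (1 - \mu)$.

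I conclude by a case split on $\mu$. When $\mu = 1$, the above reduces to $\ex{f} \Phi^\bfb_\pi \geq s$, and condition~\ref{item:2-valued-hom} delivers the lower bound $\alpha$ on the probability that $\xi(f)$ satisfies the GLC constraint $\pi$ at $M$, which is precisely the $\IO'$-polymorphism inequality at a perfectly consistent matrix. When $\mu < 1$, the value of $\mu$ lies in the finite set of $\OmegaI$-weighted sums over proper subsets of $N$, so $1 - \mu$ is bounded below by some $\delta_\Omega > 0$; then any $\IO' \geq \alpha / \delta_\Omega$ makes the $\IO'$-polymorphism inequality trivially satisfied at such $M$, since its right-hand side $\alpha + \IO'(\mu - 1)$ becomes non-positive. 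Taking $\IO' = \max_{j \in J} \max(\kappa, \alpha / \delta_{\Omega_j})$ across the finite family yields a common witness that $(\Xi(\Omega_j))_{j \in J}$ is a plurimorphism of $\glc_{D,E}(1,\alpha)$. The main obstacle is precisely this bridging step: condition~\ref{item:2-valued-hom} is a threshold implication that triggers only once the test value reaches $s$, whereas the defining inequality of a GLC polymorphism is linear in $\mu$, forcing the case split and the enlargement of $\IO'$ in the $\mu < 1$ regime.
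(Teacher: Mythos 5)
Your proof is correct and follows exactly the construction the paper sketches (sampling $\lambda_D,\lambda_E$ independently via $\Lambda_D,\Lambda_E$, assembling the pointwise minion homomorphism as in \cite[Lemma 4.4]{BBKO21}, and letting $\Xi$ be its law); the paper itself delegates the actual verification to an adaptation of \cite[Theorem 5.8]{Barto24:lics}, so you are supplying details the paper leaves implicit rather than taking a different route. Your case split on $\mu$, with the finiteness of the range of $\OmegaI$-weighted sums yielding the gap $\delta_\Omega$ and hence a uniform $\IO'$ over the finite family, is precisely the bridging argument needed to pass from the threshold implication in condition~\ref{item:2-valued-hom} to the linear $\IO'$-polymorphism inequality.
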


The idea behind this construction is roughly as follows: the stochastic map
$\Lambda_D:\mnn{M}^{(D)}\to \Delta D$ is used to define a sampling procedure
over maps  $\lambda_D:\mnn{M}^{(D)}\to D$, and similarly $\Lambda_E$ is used to  define an independent sampling procedure over maps $\lambda_E:\mnn{M}^{(E)}\to E$. Each pair of such maps $(\lambda_D,\lambda_E)$ can be associated to a minion homomorphism $\xi_{\lambda_D,\lambda_E}:\mnn{M} \to \mathscr{O}(D,D) \times \mathscr{O}(E,E)$ in a natural way (see, e.g., \cite[Lemma 4.4]{BBKO21}), and the valued minion homomorphism from~\Cref{th:valued-hom} samples each homomorphism $\xi_{\lambda_D,\lambda_E}$ according to the joint probability of the maps $\lambda_D$ and $\lambda_E$. Conditions~\ref{item:1-valued-hom} and \ref{item:2-valued-hom} guarantee that this sampling procedure does in fact take plurimorphisms of $\abcs$ to plurimorphisms of $\glc_{D,E}(1,\alpha)$.

The remainder of this section is dedicated to showing that $(\mathbf{G}^\varphi_1,\mathbf{G}^\varphi_2,1-\epsilon,1/|\im(\varphi)|+\delta)$ 
satisfies the conditions of~\Cref{th:valued-hom}, where $\alpha$ is as in~\Cref{th:soundness}. That is:

\begin{corollary}
    There is a valued minion homomorphism from $\plu(\mathbf{G}^\varphi_1,\mathbf{G}^\varphi_2,1-\epsilon,1/|\im(\varphi)|+\delta)$ to $\plu(\glc_{D,E}(1,\alpha))$, where $\alpha=\delta^2/(4\kappa |\gr_1|^4 |\gr_2|^\kappa)$ and $\kappa= \lceil(\log_2 \delta - 2)/\log_2(1-\epsilon)\rceil$.
\end{corollary}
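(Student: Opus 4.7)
The plan is to apply~\Cref{th:valued-hom} to the Max-PCSP template $(\mathbf{G}^\varphi_1, \mathbf{G}^\varphi_2, 1-\epsilon, 1/|\sgr_2|+\delta)$, taking the label sets $D,E$ to be those from~\Cref{th:soundness} so that $\glc_{D,E}(1,\alpha)$ is NP-hard. This requires exhibiting payoff formulas $\Phi_\pi$ and stochastic maps $\Lambda_D, \Lambda_E$, and verifying the two conditions of the theorem; all three ingredients are obtained by repackaging the reduction of~\Cref{sec:reduction} and its soundness analysis into a ``local'' form that depends only on one function at a time.

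For each $\pi: D \to E$ I would take $\Phi_\pi$ to be the \emph{single-edge restriction} of the reduction formula $\Phi_\Sigma$ from~\eqref{eq:reduction-def}: its variables are identified with $\gr_1^D \sqcup \gr_1^E$, and for each $\ba \in \gr_1^E$, $\bb, \bm{\nu} \in \gr_1^D$, and $s_1,s_2 \in \{-1,1\}$ it contains one constraint of the form $\phi_{g_\ba,(1,s_1,s_2)}(\ba^\dagger, \bb^{s_1}, \bc^{s_2})$ with $\bc = \bb^{-1}(\ba\circ\pi)^{-1}\bm{\nu}$, weighted by the joint probability of items~(2)--(4) of~\Cref{fig:probs-reduction}. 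Under this identification $\Phi_\pi^\bfa$ and $\Phi_\pi^\bfb$ compute exactly the same quantities as $\Phi_\Sigma^{\gr_1}$ and $\Phi_\Sigma^{\gr_2}$ for a single-edge Gap Label Cover instance with labeling $\pi$. For $\Lambda_D:\mathscr{O}(\gr_1,\gr_2)^{(D)}\to \Delta D$ I would adopt the low-degree-influence distribution appearing in the proof of~\Cref{le:EHR-25}: given $f:\gr_1^D\to\gr_2$, set $\Lambda_D(f)(d)$ proportional to
\[
\sum_{\omega \in \widehat{\gr_2}}\, \sum_{y,z \in N_\omega}\, \sum_{\substack{\rho \in \widehat{\gr_1^D},\ \rho^d \neq \triv \\ |\rho|<\kappa,\ i,k \in N_\rho}} \frac{\dim_\omega \dim_\rho}{|\rho|}\left|\widehat{(\B^\omega_f)_{y,z}}(\rho_{i,k})\right|^2,
\]
where $\B^\omega_f(\bb) = \ex{s\in\{-1,1\}}[(\omega\circ f)(\bb^s)^s]$, with a small uniform default mass to guarantee a valid probability distribution. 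The map $\Lambda_E$ is defined analogously using $\A^\omega_f = \omega \circ f_\varphi$ and indices $x,y \in N_\omega$.

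The first condition of~\Cref{th:valued-hom} would follow immediately from the completeness argument in the proof of~\Cref{th:completeness}: plugging in the dictator assignments $\proj{D}{d}$ and $\proj{E}{\pi(d)}$ reduces each constraint of $\Phi_\pi^\bfa$ to the statement $\bm{\nu}(d) = \groupid_{\gr_1}$, which is satisfied with probability at least $1-\epsilon$.

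The main obstacle will be verifying the second condition. The hypothesis $\ex{f \sim \Omega^{\textnormal{Out}}}[\Phi_\pi^\bfb(f^{(\tuple{d})}, f^{(\tuple{e})})] \geq 1/|\sgr_2|+\delta$ plays exactly the role of the starting inequality $\Phi_\Sigma^{\gr_2}(A,B) \geq 1/|\sgr_2|+\delta$ of the proof of~\Cref{th:soundness}, with $A_v = f^{(\tuple{e})}$, $B_u = f^{(\tuple{d})}$, and the average over edges of $\Sigma$ replaced by the average over $f \sim \Omega^{\textnormal{Out}}$. Carrying~\Cref{le:soundness-aux-1},~\Cref{le:EHR-23}, and~\Cref{le:EHR-25} through in this setting then shows that the pair $(d,e)$ sampled from $(\Lambda_D(f^{(\tuple{d})}), \Lambda_E(f^{(\tuple{e})}))$ satisfies $\pi(d)=e$ with probability at least $\alpha$. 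The technical subtlety is that in the original soundness proof the irreducible representation $\omega$ and matrix indices $x,y,z$ are chosen adversarially based on the full pair $(A_v,B_u)$, whereas here $\Lambda_D$ and $\Lambda_E$ must sample their labels independently; this is resolved by absorbing the sums over $(\omega,y,z)$ into $\Lambda_D$ and over $(\omega,x,y)$ into $\Lambda_E$, so that the joint distribution automatically averages over all valid choices, with the resulting polynomial loss in $|\gr_2|$ already absorbed into the factor appearing in $\alpha$.
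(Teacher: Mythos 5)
Your proposal follows the same overall structure as the paper's proof: apply \Cref{th:valued-hom} with the single-edge payoff formula $\Phi_\pi$ built from \eqref{eq:reduction-def}, check~\Cref{item:1-valued-hom} via \Cref{th:completeness}, and check~\Cref{item:2-valued-hom} by repackaging the soundness analysis with $f\sim\Omega^{\textnormal{Out}}$ playing the role of the edge-average. The genuine difference lies in how the two arguments handle the choice of the representation $\omega$ and the matrix indices $x,y,z$. The paper's own proof first picks $\omega$ and $x,y,z$ by an averaging argument that depends on $\Omega$ (and on $\pi,\bd,\be$), and only then declares $\Lambda_D,\Lambda_E$ to be the associated low-degree-influence maps; read literally against the quantifier order in \Cref{th:valued-hom} (where $\Lambda_D,\Lambda_E$ must be fixed before quantifying over $\pi,\bd,\be,\Omega$), that is circular, and the paper does not say how to remove the dependence. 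You explicitly address this by folding the sums over $(\omega,y,z)$ into $\Lambda_D$ and over $(\omega,x,y)$ into $\Lambda_E$, so both maps are universal. This is the right fix, and it is worth noting it buys a cleaner argument, not just bookkeeping.

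One remark on the ``polynomial loss absorbed into $\alpha$'' claim, which you assert without justification. It does work, but only because of the extra $\dim_\omega$ factor you place in the weight $\tfrac{\dim_\omega\dim_\rho}{|\rho|}$: summing over $d$ and using Plancherel and $\sum_{y,z}\lVert(\B^\omega_f)_{y,z}\rVert^2\le\dim_\omega$ gives a normalising constant $Z_D\le\sum_\omega\dim_\omega^2=|\gr_2|$, and likewise $Z_E\le|\gr_2|$. Lower-bounding the product $\Lambda_D\Lambda_E$ by the single diagonal term with $\omega_1=\omega_2=\omega^*$, $y_1=y_2=y^*$ picks up a factor $\dim_{\omega^*}^2$ from the two $\dim_\omega$ weights, and combined with \eqref{eq:prob-success} this yields exactly $\frac{\delta^2}{4\kappa|\gr_1|^\kappa\dim_{\omega^*}^2 Z_DZ_E}\ge\frac{\delta^2}{4\kappa|\gr_1|^\kappa|\gr_2|^4}=\alpha$. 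Had you omitted the $\dim_\omega$ weight, the mixture would give only $\alpha/|\gr_2|^2$, so this detail is load-bearing and deserves a sentence in a full write-up. (Separately, the corollary statement in the paper swaps the exponents $4$ and $\kappa$ on $|\gr_1|$ and $|\gr_2|$ relative to \Cref{th:soundness}; this looks like a typo and does not affect your argument.)
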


\begin{proof}
The payoff formulas $\Phi_{\pi}$ for each $\pi:D\to E$ are defined as in \eqref{eq:reduction-def}. That is, using the notation defined in~\Cref{re:pcsp-groups},
\begin{equation*}
\Phi_\pi =  \ex{\substack{\tuple{a},\tuple{b}, \bm{\nu}\\ s_1,s_2}} \phi_{g_{\tuple{a}},(1,s_1,s_2)}(\tuple{a}^\dagger,\tuple{b}^{s_1},\tuple{c}^{s_2}),
\end{equation*}
where the expectation for each $\pi$ is taken according to the distribution described in~\Cref{fig:probs-reduction}, and as usual we denote $\tuple{c} =
\tuple{b}^{-1} (\tuple{a} \circ \pi)^{-1}\bm{\nu}$. Then, the proof of~\Cref{th:completeness} tells us exactly that~\Cref{item:1-valued-hom} is satisfied for $c=1-\epsilon$.

To show that~\Cref{item:2-valued-hom} is satisfied, we need to define the mappings $\Lambda_D$, $\Lambda_E$. Let $\pi:D\to E$, $\mathbf{d}\in D^N,\mathbf{e}\in E^N$, and $\Omega$ be an $N$-ary polymorphism of $\abcs$ such that $\ex{f\sim \Omega^{\textnormal{Out}}}  \Phi^{\mathbf{G}^\varphi_2}_{\pi}(f^{(\mathbf{d})},f^{(\mathbf{e})}) \geq s$. By the same argument as in~\Cref{sec:soundness}, there is a representation $\omega \in\widehat{\gr_2}$ satisfying
 \begin{equation*}
\left| \ex{f\sim \Omega^{\textnormal{Out}}} \ex{\substack{\tuple{a},\tuple{b},\bm{\nu}\\ s_1,s_2}}\left[ \chi_{\omega}\left((f^{(\mathbf{e})})_{\varphi}(\ba)f^{(\mathbf{d})}(\bb^{s_1})^{s_1}f^{(\mathbf{d})}(\bc^{s_2})^{s_2}) \right) \right] \right| \geq  \dim_\omega\delta 
+ \langle \chi_\omega, \chi_{R_{\gr_2,\sgr_2}} \rangle.
\end{equation*}
Define \begin{equation} 
    \A(\ba):= \omega \circ (f^{(\be)})_\varphi(\ba), \quad \quad \B(\bb):= \ex{s} [ ((\omega \circ f^{(\bd)})(\bb^s))^s],
\end{equation}
and notice that $\A$ and $\B$ now depend on the hidden parameter $f$, but not on variables $u$ and $v$.

With these definitions, it is easy to see that the proofs of~\Cref{le:soundness-aux-1,le:EHR-23} can be adapted to obtain that 

\begin{align*}
& \nonumber  \left| \ex{f\sim \Omega^{\textnormal{Out}}} \ex{\ba} \left[ 
     \tr \left(
    \left(
    \sum_{\tau\in \widehat{\gr_1^E}, \tau\neq 1}
    \sum_{s,t\in N_\tau}
\dim_\tau
\widehat{\A}(\tau_{s,t}) \tau_{s,t}(\tuple{a}) \right) \times
\right. \right. \right. 
\\ 
& 
\left. \left. \left.
\left(
\sum_{\rho \in \widehat{\gr_1^D}, |\rho| < \kappa}
\sum_{i,j \in N_\rho}
\dim_\rho (1-\epsilon)^{|\rho|} \widehat{(\B*\B)}(\rho_{i,j}) \rho_{i,j}( (\tuple{a} \circ \pi)^{-1})  \right) \right) \right] \right| \geq (\dim_\omega \delta)/2.
\end{align*}

Then, we proceed as in the proof of~\Cref{le:EHR-25} to show that there exist indices $x,y,z \in N_\omega$ which satisfy

\begin{equation} \label{eq:lower-bound}
    \frac{\delta^2}{4 \dim_\omega^4 \kappa |\gr_1|^\kappa} \leq \ex{f\sim \Omega^{\textnormal{Out}}} \left[ \sum_{d\in D}
\sum_{\substack{\rho\in \widehat{\gr_1^D}, \rho^d \neq 1\\
 |\rho| < \kappa, i,k\in N_\rho}}
 \sum_{\substack{\tau\in \widehat{\gr_1^E}, \tau^{\pi(d)}\neq 1, \\ |\tau|<\kappa, s,t \in N_\tau}}
 \frac{
  \dim_\tau  \left\vert \widehat{\A_{x,y}}(\tau_{s,t})\right\vert^2}{|\tau|}
 \frac{ \dim_\rho  \left\vert
 \widehat{{\B}_{y,z}}(\rho_{i,k})\right\vert^2}{|\rho|}
 \right]. 
\end{equation}

Then, for these choices of $\omega \in \widehat{\gr_2}$ and $x,y,z \in N_\omega$, the randomised strategy described in the proof of~\Cref{le:EHR-25} gives a pair of maps $\Lambda_D: \mnn{M}^{(D)} \to \Delta D$ and $\Lambda_E: \mnn{M}^{(E)} \to \Delta E$ such that 
sampling $f$ according to $ \Omega^{\textnormal{Out}}$ and $d \in D$, $e\in E$ according to $\Lambda_D(f^{(\mathbf{d})})$ and $\Lambda_E(f^{(\mathbf{e})})$ respectively gives an expected payoff that is bounded below by \eqref{eq:lower-bound}. That is,
\begin{equation*}
 \frac{\delta}{4 \dim_\omega^4 \kappa |\gr_1|^\kappa} \leq  \ex{f\sim \Omega^{\textnormal{Out}}} \ex{\substack{d \sim \Lambda_D(f^{(\mathbf{d})})\\e\sim \Lambda_E(f^{(\mathbf{e})})}} \pi(d,e).
\end{equation*} Using $\dim_\omega \leq |\gr_2|$ gives the desired result.
\end{proof}

\appendix

\section{Proof of~\Cref{th:main-non-cubic}}
\label{ap:non-cubic}

\THnoncubic*

For the tractability part, if $s/c \leq 1/|\im(\varphi)|$ then the
following algorithm works: Count the number equations that are unsatisfiable
over $\gr_2$. If the fraction of unsatisfiable equations is at most $1-c$,
accept. In this case, the random assignment over $\im(\varphi)$ satisfies at
least a $c/|\im(\varphi)|$-fraction of the equations in $\gr_2$. Otherwise, reject. \par

For the hardness part, let  $0<s\leq c <1$ be such that
$s/c > 1/|\im(\varphi)|$, and let $\epsilon >0$ be a rational number such that $c <
1-\epsilon < s|\im(\varphi)|$. Then, the NP-hardness of $\eq(\gr_1,\gr_2, \varphi, c, s)$ follows from the observation that 
\[
\eq(\gr_1,\gr_2, \varphi, c/(1-\epsilon), s/(1-\epsilon)) \leq_p
\eq(\gr_1,\gr_2, \varphi, c, s).
\]
The reduction is as follows. Given any instance of the first problem, we
construct an instance of the second problem by adding an unsatisfiable equation
with weight $\frac{\epsilon}{1-\epsilon}$ and normalising
all weights. The
result then follows from the NP-hardness of $\eq(\gr_1,\gr_2, \varphi,
c/(1-\epsilon), s/(1-\epsilon))$, which in turn follows from~\Cref{th:main}.

\section{Proofs from~\Cref{sec:background}} \label{ap:fourier-proofs}

In this section we prove the results stated in~\Cref{sec:background}.
For the reader's convenience the results are restated below.

\LEfouriercoefficientaux*

\begin{proof}[Proof of~\Cref{le:fourier_coefficient_aux}]
    The following holds.
    \begin{align*}
     \sum_{i,j\in N_\gamma} \widehat{F}(\gamma_{i,j}) \gamma_{i,j}(g) & =
    \sum_{i,j\in N_\gamma} \frac{1}{|\gr|} \sum_{h\in \gr}
    F(h) \overline{\gamma_{i,j}(h)} 
    \gamma_{i,j}(g) = 
    \frac{1}{|\gr|} \sum_{h\in \gr}
    F(h) 
    \sum_{i,j\in N_\gamma}
    \overline{\gamma_{i,j}(h)} 
    \gamma_{i,j}(g) \\
  & = 
    \frac{1}{|\gr|} \sum_{h\in \gr}
    F(h) 
    \sum_{i,j\in N_\gamma}
    \gamma_{j,i}(h^{-1}) 
    \gamma_{i,j}(g) = \frac{1}{|\gr|} \sum_{h\in \gr} F(h) \chi_\gamma(h^{-1}g).    
    \end{align*}
Here the third equality uses the fact that $\overline{\gamma_{i,j}(h)}=\gamma_{j,i}(h^{-1})$ because $\gamma$ is a homomorphism and $\gamma(h)$ is a unitary matrix. The last equality uses again the fact that $\gamma$ is a homomorphism, so $\gamma(h^{-1}g)= \gamma(h^{-1})\gamma(g)$, and 
\[
\chi_\gamma(h^{-1}g)= \sum_{i\in N_\gamma}
\gamma_{i,i}(h^{-1}g)=
\sum_{i,j\in N_\gamma}
\gamma_{i,j}(h^{-1})\gamma_{j,i}(g).
\]
\end{proof}

\LEmultiplicitytrivialrestriction*

\begin{proof}[Proof of~\Cref{le:multiplicity_trivial_restriction}]

Let $\sgr \leq \gr$ be groups, and let $\beta\in \widehat{\sgr}$. The \emph{representation induced by $\beta$ on $\gr$}, denoted 
$\alpha= \Ind_{\sgr}^\gr \beta$, is defined as follows.
Let $N_\alpha = N_\beta \times (\sgr \backslash \gr)$. Fix a set of representatives $\{ g_1,\dots g_k\}$ of $\sgr \backslash \gr$. 
The matrix entries of $\alpha$ are given by 
\[
\alpha_{(i, \sgr g_r), (j,\sgr g_s)}(g)= \begin{cases}
    0 & \text{if } g_r g g_s^{-1} \not\in \sgr, \text{ and } \\
    \beta(g_r g g_s^{-1}) &\text{otherwise }
\end{cases}
\]
for each $(i, \sgr g_r), (j,\sgr g_s)\in N_\alpha$ and $ g\in \gr$.
We have the following claim.

\begin{claim*}
    Let $\sgr \leq \gr$. Then $R_{\gr,\sgr}$ is equivalent to the representation induced on $\gr$ by the trivial representation of $\sgr$.
\end{claim*}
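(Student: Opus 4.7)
The plan is to show directly that the matrices $(\Ind_\sgr^\gr \triv)(g)$ and $R_{\gr,\sgr}(g)$ agree entry-by-entry after identifying their index sets. Since $\triv$ is one-dimensional, $N_\triv$ is a singleton, so the index set $N_\triv \times (\sgr\backslash\gr)$ of the induced representation is canonically in bijection with $\sgr\backslash\gr$, the index set of $R_{\gr,\sgr}$. Under this identification, the equivalence will simply be witnessed by the identity matrix (or a permutation matrix if one wishes to account for different orderings of cosets).

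Having fixed a system of representatives $\{g_1,\dots,g_k\}$ of $\sgr\backslash\gr$, I would next compute the $(\sgr g_r, \sgr g_s)$-entry of $(\Ind_\sgr^\gr\triv)(g)$. By the definition of the induced representation and the fact that $\triv$ sends every element of $\sgr$ to $1$, this entry equals $1$ if $g_r g g_s^{-1} \in \sgr$ and $0$ otherwise. The condition $g_r g g_s^{-1} \in \sgr$ is equivalent to $\sgr g_r g = \sgr g_s$, i.e.\ to $\sgr g_r = \sgr g_s g^{-1}$, which is precisely the defining condition for the $(\sgr g_r, \sgr g_s)$-entry of $R_{\gr,\sgr}(g)$ to equal $1$. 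Hence the two matrices coincide for every $g \in \gr$, yielding the required equivalence.

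This step is essentially bookkeeping, and I do not anticipate any real obstacle: the content is entirely in unfolding the definitions and observing that taking $\beta=\triv$ collapses the factor $\beta(g_r g g_s^{-1})$ to the indicator $\llbracket g_r g g_s^{-1}\in\sgr\rrbracket$, which in turn matches the coset-permutation description of $R_{\gr,\sgr}$. The only minor subtlety is that the induced representation depends on the choice of representatives $\{g_1,\dots,g_k\}$, whereas $R_{\gr,\sgr}$ is defined intrinsically on cosets; this is precisely what forces the equivalence to be up to a (trivial, in this case) change of basis rather than an equality.
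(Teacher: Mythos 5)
Your proof is correct and is essentially the paper's own argument, which simply states that under the natural identification of $\{1\}\times(\sgr\backslash\gr)$ with $\sgr\backslash\gr$ the matrix entries of $\Ind_\sgr^\gr\triv$ and $R_{\gr,\sgr}$ coincide; you have merely spelled out the routine verification that $g_r g g_s^{-1}\in\sgr$ is equivalent to $\sgr g_r = \sgr g_s g^{-1}$.
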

Let $\alpha= \Ind_\sgr^\gr 1$. Then the index set of $\alpha$ is $\{1\} \times
  \sgr \backslash \gr$, which can be naturally identified with $\sgr \backslash
  \gr$. Under this identification, the matrix entries of $\alpha$ and
  $R_{\gr,\sgr}$ are the same, and the claim holds.
\Cref{le:multiplicity_trivial_restriction} then follows from this claim and the
  following result.\footnote{The formulation in~\cite{Terras_1999} is slightly different. To see that it is equivalent to ours, use~\cite[Proposition 3.2]{Terras_1999} and the fact that we defined the inner product over $\mathcal{L}^2(\gr)$ with an additional normalising factor of $\frac{1}{|\gr|}$.}

\begin{theorem}[Frobenius Reciprocity Law]
Let $\sgr \leq \gr$ be groups, let $\alpha$ be a representation of $\gr$, and $\beta$ a representation of $\sgr$. Then,
\[
\langle 
\alpha, \Ind_{\sgr}^\gr \beta
\rangle_{\gr} =
\langle \alpha\vert_\sgr, \beta \rangle_{\sgr}.
\]
\end{theorem}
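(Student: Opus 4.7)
My plan is to prove the Frobenius Reciprocity Law via direct character manipulation. Since both sides of the identity are inner products of class functions, the work reduces to finding a workable expression for $\chi_{\Ind_\sgr^\gr \beta}$ and then carrying out a change of variables.

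First, I would read off the character of the induced representation from the explicit block-matrix definition used in the proof of~\Cref{le:multiplicity_trivial_restriction}. Fix coset representatives $g_1,\ldots,g_k$ of $\sgr\backslash\gr$. The $(r,r)$ diagonal block of $(\Ind_\sgr^\gr\beta)(g)$ is $\beta(g_r g g_r^{-1})$ when $g_r g g_r^{-1}\in\sgr$ and zero otherwise, so summing the block traces yields
\[
\chi_{\Ind_\sgr^\gr\beta}(g) \;=\; \sum_{r\,:\,g_r g g_r^{-1}\in\sgr} \chi_\beta(g_r g g_r^{-1}) \;=\; \frac{1}{|\sgr|}\sum_{h\in\gr\,:\,hgh^{-1}\in\sgr} \chi_\beta(hgh^{-1}).
\]
For the second equality, write $h=sg_r$ with $s\in\sgr$; the condition $hgh^{-1}\in\sgr$ then depends only on $r$, and the $\sgr$-conjugation invariance of $\chi_\beta$ makes each term depend only on $r$ as well, so each $r$ contributes $|\sgr|$ identical terms.

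Second, I would insert this expression into the inner product on the left-hand side of the Frobenius identity and swap the order of summation, then change variables $g'=hgh^{-1}$ for each fixed $h$:
\[
\langle\alpha,\Ind_\sgr^\gr\beta\rangle_\gr \;=\; \frac{1}{|\gr|\,|\sgr|}\sum_{h\in\gr}\sum_{g'\in\sgr} \chi_\alpha(h^{-1}g'h)\,\overline{\chi_\beta(g')}.
\]
Because $\chi_\alpha$ is a class function on $\gr$, the summand equals $\chi_\alpha(g')\overline{\chi_\beta(g')}$ independently of $h$. The sum over $h\in\gr$ therefore contributes a factor of $|\gr|$, collapsing the expression to $\frac{1}{|\sgr|}\sum_{g'\in\sgr}\chi_\alpha(g')\overline{\chi_\beta(g')} = \langle\alpha\vert_\sgr,\beta\rangle_\sgr$, which is the right-hand side.

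The main technical hurdle is the character formula for $\Ind_\sgr^\gr\beta$: one must check that the off-diagonal blocks contribute nothing to the trace and that the passage from a sum over coset representatives to an averaged sum over all of $\gr$ is valid (this amounts to observing that for $h=sg_r$, the quantity $\chi_\beta(hgh^{-1})$ is constant on $\sgr g_r$). Once this formula is in place, the rest is a routine substitution that uses only the class-function property of $\chi_\alpha$ and the fact that $g\mapsto hgh^{-1}$ is, for each $h$, a bijection of $\gr$ that carries the set $\{g:hgh^{-1}\in\sgr\}$ onto $\sgr$.
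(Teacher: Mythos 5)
The paper does not actually prove the Frobenius Reciprocity Law: it states it and defers to Terras's textbook, with a footnote reconciling the normalisation of the inner product. Your proposal therefore supplies a proof that the paper delegates to a reference, so there is no paper proof to compare against step by step; instead I'll assess your argument on its own terms.

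Your proof is the standard character-theoretic argument and it is correct. The character formula
\[
\chi_{\Ind_\sgr^\gr\beta}(g)=\sum_{r\,:\,g_r g g_r^{-1}\in\sgr}\chi_\beta(g_r g g_r^{-1})=\frac{1}{|\sgr|}\sum_{h\in\gr\,:\,hgh^{-1}\in\sgr}\chi_\beta(hgh^{-1})
\]
is read off correctly from the block-matrix definition used in the paper's proof of \Cref{le:multiplicity_trivial_restriction} (only diagonal blocks with $r=s$ contribute to the trace), and your justification for passing to the average over all of $\gr$ is exactly right: writing $h=sg_r$, both the membership test $hgh^{-1}\in\sgr$ and the value $\chi_\beta(hgh^{-1})$ are constant on the coset $\sgr g_r$, the former because $\sgr$ is closed under conjugation by its own elements and the latter because $\chi_\beta$ is a class function on $\sgr$. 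The substitution $g'=hgh^{-1}$ for fixed $h$ is a bijection of $\gr$ carrying $\{g: hgh^{-1}\in\sgr\}$ onto $\sgr$, and invoking the class-function property of $\chi_\alpha$ to eliminate $h$ and produce the factor $|\gr|$ is clean. The final normalisation $\frac{1}{|\sgr|}\sum_{g'\in\sgr}\chi_\alpha(g')\overline{\chi_\beta(g')}$ matches the paper's convention $\langle F,H\rangle=\frac{1}{|\gr|}\sum_g F(g)\overline{H(g)}$, so the identity you obtain is precisely the one the paper states. This is the classical proof (as in Serre), more explicit and more informative than the citation the paper uses, and every step is sound.
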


Indeed, let $\rho\in \widehat{\gr}$. Then, by the Frobenius Reciprocity Law, if we let $\alpha=\rho$ and $\beta$ be the trivial representation we obtain
\[
\langle 
\rho, R_{\gr,\sgr}
\rangle_{\gr} =
\langle \rho\vert_\sgr, 1 \rangle_{\sgr}.
\]
By~\Cref{le:multiplicity},
the first term of the equality is the multiplicity of $\rho$ in $R_{\gr,\sgr}$,
  and the second is the multiplicity of $1$ in $\rho\vert_\sgr$.
  This completes
  the proof of~\Cref{le:multiplicity_trivial_restriction}.
\end{proof}

\LEsubrepresentationsrestriction*

\begin{proof}[Proof of~\Cref{le:subrepresentations_restriction}]
  By~\Cref{le:dimension}, we know that $R_\gr\simeq \bigoplus_{\rho\in \widehat{\gr}} \dim_\rho \rho$, so
    \[
\sum_{\rho\in \widehat{\gr}}
\dim_\rho \langle \chi_\rho, \chi_{R_{\gr,\sgr}} \rangle = 
\langle \chi_{R_\gr}, \chi_{R_{\gr,\sgr}} \rangle = 
\frac{1}{|\gr|} \chi_{R_\gr}(1_\gr) \overline{ \chi_{R_{\gr,\sgr}}(1_\gr)}= \frac{|\gr|}{|\sgr|}.
\]
Here the second equality uses that $\chi_{R_\gr}(g)=0$ for every $g\neq 1_\gr$.
\end{proof}

\LEfourierconvolution*

\begin{proof}[Proof of~\Cref{lem:fourier_convolution}]
    By definition,
    \begin{align*}
    \widehat{(F * G)}_{x,y}(\gamma_{i,j}) & = 
    \frac{1}{|\gr|^2} 
    \sum_{h,g\in \gr}
(F(h)G(h^{-1}g))_{x,y}\overline{\gamma_{i,j}(g)}  \\ & =
    \frac{1}{|\gr|^2} 
    \sum_{h,g\in \gr}\left(
    \sum_{z\in N}
F_{x,z}(h)G_{z,y}(h^{-1}g)
\right) \left(
\sum_{k\in N_\rho} 
\overline{\gamma_{i,k}(h)} \overline{\gamma_{k,j}(h^{-1}g)} \right) 
\\ & =
    \sum_{z\in N, k\in N_\rho}
  \left(
  \frac{1}{|\gr|}
    \sum_{h\in \gr}
F_{x,z}(h)\overline{\gamma_{i,k}(h)} 
\right) \left(
  \frac{1}{|\gr|}
\sum_{g\in \gr} 
G_{z,y}(g)
\overline{\gamma_{k,j}(g)} \right)\\
& =
    \sum_{z\in N,k\in N_\gamma}
    \widehat{F_{x,z}}(\gamma_{i,k})\widehat{G_{z,y}}(\gamma_{k,j}).
    \end{align*}
\end{proof}

\LEsimilarrepresentations*

\begin{proof}[Proof of~\Cref{le:similar_representations}]
    The first statement is straightforward. Let us prove the second. Let $s=(s_e)_{e\in E},
    t=(t_e)_{e\in E}
    \in N_\tau$, and $i=(i_d)_{d\in D}, j=(j_d)_{d\in D} \in N_\rho$. Then
    \begin{align*}
    \langle \tau_{s,t}, \rho^\pi_{i,j} \rangle_{\gr^E} & =
    \frac{1}{|\gr^E|}
    \sum_{\tuple{g}\in \gr^E} 
    \tau_{s,t}(\tuple{g}) \overline{\rho^\pi_{i,j}(\tuple{g})}
    =
    \frac{1}{|\gr^E|}
    \sum_{\tuple{g} \in \gr^E}
    \prod_{e\in E}
    \tau_{s_e,t_e}^e(\bg(e))
    \left(\prod_{d\in \pi^{-1}(e)} \overline{\rho_{i_d,j_d}^d(\bg(e))}\right) 
    \\
    & =  \prod_{e\in E}
     \frac{1}{|\gr|}
    \sum_{g \in \gr}
\tau_{s_e,t_e}^e(g)    \left(\prod_{d\in \pi^{-1}(e)} \overline{\rho_{i_d,j_d}^d(g)}\right)  =
    \prod_{e\in E}
    \langle 
    \tau_{s_e,t_e}^e,
        \otimes_{d\in \pi^{-1}(e)} \rho_{i_d,j_d}^d
    \rangle_{\gr}. 
        \end{align*}
    As $\tau \not\sim_\pi \rho$, there must be an index $e\in E$ for which $\tau^e$ is non-trivial, but $\rho^d$ is trivial for all $d\in \pi^{-1}(e)$. By~\Cref{th:orthogonal_representations}, $\langle
    \tau_{s_e,t_e}^e, \otimes_{d\in \pi^{-1}(e)} \rho_{i_d,j_d}^d
    \rangle_{\gr}=0$ for this choice of $e$, so 
    $\langle \tau_{s,t}, \rho^\pi_{i,j} \rangle_{\gr^E} =0$, as we wanted to prove. 
\end{proof}

\LEepsilonnoise*

\begin{proof}[Proof of~\Cref{le:epsilon-noise}]
The following chain of identities holds
\begin{align*}
& \frac{1}{|\gr^D|}   \sum_{\ba\in \gr^D} \ex{\bm{\nu}}\left[
F(\ba\cdot\bm{\nu})\right]\overline{\rho(\ba)}
  =  \ex{\bm{\nu}}\left[ \frac{1}{|\gr^D|}  \sum_{\ba\in \gr^D}
F(\ba\cdot\bm{\nu}) \overline{\rho(\ba)} \right] 
=  \ex{\bm{\nu}}\left[ \frac{1}{|\gr^D|}  \sum_{\ba\in \gr^D}
F(\ba) \overline{\rho(\ba \bm{\nu}^{-1})} \right] \\
& = \ex{\bm{\nu}}\left[ \frac{1}{|\gr^D|}  \sum_{\ba\in \gr^D}
F(\ba) \ \overline{\rho(\ba)} 
\ \overline{\rho(\bm{\nu}^{-1})}
\right] 
= \frac{1}{|\gr^D|} \sum_{\ba\in \gr^D} F(\ba)
\overline{\rho(\ba)} \ex{\bm{\nu}}[\overline{\rho(\bm{\nu}^{-1})}] .
\end{align*}

The result now follows from the fact that $\ex{\bm{\nu}}[\tsp{\rho(\bm{\nu})}]=(1-\epsilon)I_{N_{\rho}}$. Let us show this identity. 
  Since $\rho \in \widehat{\gr^D}$ we have that \[\ex{\bm{\nu}} \rho(\bm{\nu}) =
  \ex{\bm{\nu}} \left[ \otimes_{d\in D}\rho^d(\bm{\nu}_d) \right] =
  \otimes_{d\in D} \ex{\bm{\nu}(d)} \rho^d(\bm{\nu}_d).\]
Clearly if $\rho^d$ is trivial, $\rho^d(\bm{\nu}_d)=I_{N_{\rho^d}}$, regardless
  of the value of $\bm{\nu}_d$. In all other cases, by
  applying~\cref{cor:group-sum-zero}, we get that \[\ex{\bm{\nu}_d}
  \rho^d(\bm{\nu}_d)=(1-\epsilon)\rho^d(\groupid_{\gr}) + \epsilon \ex{g \in
  \gr}\rho^d(g) = (1-\epsilon)I_{\dim_{\rho^d}}.\]
  Putting this all together, we get that \[\ex{\bm{\nu}} \rho(\bm{\nu}) =
  (1-\epsilon)^{|\rho|}\otimes_{d\in D} I_{\dim_{\rho^d}} =
  (1-\epsilon)^{|\rho|}I_{\dim_{\rho}}.\]
Similarly, we have that $\ex{\bm{\nu}}[\overline{\rho(\bm{\nu}^{-1})}]=
\ex{\bm{\nu}}[\tsp{\rho(\bm{\nu})}]=(1-\epsilon)I_{\dim_{\rho}}$, completing the proof.
\end{proof}

\bibliographystyle{alphaurl}
\bibliography{biblio}

\end{document}